\newcommand{\eps}{\varepsilon}
\renewcommand{\epsilon}{\varepsilon}
\providecommand{\pth}[2][\!]{\ensuremath{#1\left({#2}\right)}}
\providecommand{\Oh}[1]{\ensuremath{{O\left(#1\right)}}}
\newcommand{\DTW}[1]{\ensuremath{\mathrm{DTW}^{(#1)}}\xspace}
\newcommand{\Frechet}{Fr\'echet\xspace}
\newcommand{\distE}[2]{\ensuremath{ \| #1 - #2 \|}}
\newcommand{\distDF}[2]{\ensuremath{d_{dF}\pth{#1,#2}}}
\newcommand{\distDT}[2]{\ensuremath{d_{DTW}\pth{#1,#2}}}
\newcommand{\distDTW}[3]{\ensuremath{d_{\DTW{#3}}\pth{#1,#2}}}
\newcommand{\distkDTW}[2]{\ensuremath{d_{\textit{k-DTW}}\pth{#1,#2}}}
\newcommand{\REAL}{\ensuremath{\mathbb{R}}}
\renewcommand{\citet}{\cite}
\theoremstyle{plain}
\newtheorem{theorem}{Theorem}[section]
\newtheorem{proposition}[theorem]{Proposition}
\newtheorem{lemma}[theorem]{Lemma}
\theoremstyle{definition}
\newtheorem{definition}[theorem]{Definition}
\theoremstyle{remark}
\begin{document}

\title{Improved Learning via k-DTW: \\A Novel Dissimilarity Measure for Curves 
\thanks{The conference version of this paper will appear in the proceedings of the 42nd International Conference on Machine Learning, ICML 2025.}
}

\author{
   Amer Krivo\v{s}ija
   \thanks{TU Dortmund, Germany; 
} 
   \and
   Alexander Munteanu
   \thanks{TU Dortmund and University of Cologne, Germany; 
} 
   \and
   Andr\'e Nusser
   \thanks{Université Côte d’Azur, CNRS, Inria, France;
} 
   \and
   Chris Schwiegelshohn
   \thanks{Aarhus University, Denmark; 
}
}

\date{\today}

\maketitle

\begin{abstract} 
This paper introduces $k$-Dynamic Time Warping ($k$-DTW), a novel dissimilarity measure for polygonal curves. $k$-DTW has stronger metric properties than Dynamic Time Warping (DTW) and is more robust to outliers than the Fr\'{e}chet distance, which are the two gold standards of dissimilarity measures for polygonal curves. We show interesting properties of $k$-DTW and give an exact algorithm as well as a $(1+\varepsilon)$-approximation algorithm for $k$-DTW by a parametric search for the $k$-th largest matched distance. We prove the first dimension-free learning bounds for curves and further learning theoretic results. $k$-DTW not only admits smaller sample size than DTW for the problem of learning the median of curves, where some factors depending on the curves' complexity $m$ are replaced by $k$, but we also show a surprising separation on the associated Rademacher and Gaussian complexities: $k$-DTW admits strictly smaller bounds than DTW, by a factor $\tilde\Omega(\sqrt{m})$ when $k\ll m$. We complement our theoretical findings with an experimental illustration of the benefits of using $k$-DTW for clustering and nearest neighbor classification.
\end{abstract}

\clearpage

\tableofcontents

\clearpage

\section{Introduction}

Handwriting, panel data, time series, sensor-generated trajectories, and many more types of data are instances of polygonal curves in Euclidean space.
These curves are input for learning processes, both supervised and unsupervised.
To compare and quantify the distance of two curves, a suitable dissimilarity measure is needed that reflects the fact that similar curves that represent, for instance, the same handwritten characters, can be sampled differently and can therefore look very different at the data level, even though they are similar in content and visual appearance.

Dissimilarity measures that involve transformations of one curve into another are therefore studied across the literature. Prominent examples are the \Frechet distance, arguably the most popular measure in computational geometry, or the Dynamic Time Warping (DTW) distance, which is often used in the context of machine learning. Both can be computed in near-quadratic time in terms of the number of vertices representing the two curves \citep{ag-cfdbt-95,eiter1994computing,dtw1}. 
There is also evidence that the \Frechet distance as well as DTW cannot be computed in strongly subquadratic time, based on conditional lower bounds that rely on widely accepted complexity theoretic conjectures \citep{Bringmann14,BringmannM16,AbboudBW15,BK-focs15,BuchinOS19}. Hence, it is widely believed that only small polylogarithmic improvements can be achieved \citep{aaks-dfst-14,GoldS18} over the quadratic complexity of the natural dynamic programming approach.

Both dissimilarity measures have their caveats: while the \Frechet distance is a metric, it is however very sensitive to outlier vertices, to such a degree that the existence of a single outlier completely determines the distance. DTW is much less outlier-sensitive, but it is not a metric since it does not satisfy the triangle inequality, which can be violated by a large factor that depends on the length of the curve. These problems can unfavorably affect the outcome when \Frechet or DTW are applied for clustering or classification.

\subsection{Our Contributions}
\label{sec:our:contributions}
We propose a novel dissimilarity measure $k$-DTW that provides the ``best of both worlds'', while generalizing and interpolating between the \Frechet, and the DTW distance. $k$-DTW considers only a small subset of size $k$ comprising the most important part of the transformation between two curves, and ignores the remaining transformation that matches only small deviations, which often correspond to noise. We summarize our main contributions as follows:
  \vspace{5pt}
\begin{compactenum}[1)]
    \item We prove that $k$-DTW provides a strengthened triangle inequality compared to DTW and it is thus closer to a proper metric, while retaining some robustness of DTW.
    \item We provide an exact algorithm as well as a $(1+\varepsilon)$-approximation algorithm for $k$-DTW using a parametric search for the $k$-th largest matched distance with standard DTW on modified distance matrices as a subroutine.
    \item We prove the first dimension-free learning bounds for clustering under $k$-DTW (including DTW and \Frechet) 
    and a separation result showing that $k$-DTW has strictly smaller Rademacher and Gaussian complexity than DTW for clustering curves.
    \item We experimentally show the benefits of $k$-DTW over the \Frechet distance and DTW for clustering and nearest neighbor classification of synthetic and real world data.
\end{compactenum}

\subsection{Other Related Work}
Many dissimilarity measures were introduced to alleviate the drawbacks of the Fréchet distance and DTW.
Examples include the weak Fréchet distance \citep{ag-cfdbt-95} and Continuous Dynamic Time Warping (CDTW) \citep{Buchin2007,DBLP:conf/compgeom/BuchinNW22}.
Also, clustering under these measures was considered with different degrees of success. While center based clustering of curves is NP-hard in most cases \citep{DriemelKS16,BuchinDGHKLS19,BuchinDS20,BulteauFN20}, there exist practical approaches to cluster curves under the Fréchet distance \citep{DBLP:conf/gis/BuchinDLN19} and CDTW \citep{DBLP:conf/gis/BrankovicBKNPW20}.

We are not aware of previous learning theoretic excess risk bounds for clustering problems on curves. Recent learning bounds in the case of clustering points were given in \citep{BucarelliLST23}. An important concept studied in the context of so-called coresets \citep{Phillips17,MunteanuS18,Munteanu23}, is the VC dimension of the space of curves under the \Frechet distance~\citep{DriemelNPP21,BrueningD23,ChengH24,Cohen-AddadD0SS25} and dynamic time warping~\citep{ConradiKPR2024}. A series of results has been dedicated to dimensionality reducing random projections for curves~\citep{DriemelK18,MeintrupMR19,PsarrosR23}.

\section{Definitions}

A \emph{curve} in Euclidean space $\REAL^d$, for $d\in \mathbb{N}$, is a continuous function $\tau:[0,1]\rightarrow \REAL^d$. A \emph{polygonal curve} is a curve such that there exist a finite number $m\in \mathbb{N}$ of values $0=t_1\leq t_2\leq \ldots \leq t_m=1$, with $w_i=\tau(t_i)$ which we call \emph{vertices}. Further, for each $i\in\lbrace 1,\ldots,m-1\rbrace$ the segment between the two consecutive vertices $\tau(t_i)$ and $\tau(t_{i+1})$ is an affine line segment, i.e., it holds that
\[
\tau(t_i+x)= \left( 1-\frac{x}{t_{i+1}-t_i}\right) \cdot \tau(t_i) + \frac{x}{t_{i+1}-t_i} \cdot \tau(t_{i+1}),
\]
for all $x\in [0, t_{i+1}-t_i]$.
We can thus fully characterize a curve $\tau$ by defining the sequence of its vertices $\tau=(w_1,\ldots, w_m)$.\footnote{We slightly abuse notation using the functional and the sequence representation interchangeably for the same curve $\tau$.} We say that such a curve $\tau$ has complexity $m$. For simplicity we write $[m]=\lbrace 1,\ldots,m\rbrace$. In this paper, the distance between two points $p,q\in \REAL^d$ is always their Euclidean distance. Therefore, we write $\distE{p}{q}$ to denote their distance under the $\ell_2$-norm. We work only with discrete transformations, which are based on the following notion of traversals.

Let $\sigma=( v_1,\ldots,v_{m'})$ and $\tau=( w_1,\ldots,w_{m''} )$ be two curves of complexities $m'$ and $m''$ respectively. Then, a traversal $T$ of $\sigma$ and $\tau$ is a sequence that consists of pairs of indices $(i,j)$, which we also call \emph{matchings}, for vertices $v_i\in\sigma$ and $w_j\in\tau$, such that 
\begin{compactenum}[i)]
\item the traversal $T$ starts with $(1,1)$ and ends with $(m',m'')$, and
\item the pair $(i,j)$ of $T$ can be followed only by one of $(i+1,j)$, $(i,j+1)$ or $(i+1,j+1)$.
\end{compactenum}

For simplicity, we write $m=\max\lbrace m',m''\rbrace$.
Every traversal is monotonic by construction, cf. item~ii). Let $\mathcal{T}$ be the set of all traversals $T$ of $\sigma$ and $\tau$, then the discrete \Frechet distance between $\sigma$ and $\tau$ is defined as 
\begin{equation*}
\distDF{\sigma}{\tau}=\min_{ T \in \mathcal{T}} \max_{(i,j)\in T} \distE{v_i}{w_j},
\end{equation*}
which satisfies all axioms of a metric on the set of curves in Euclidean space when we merge any sequence of equal vertices into a single vertex.\footnote{Without this technical assumption, one can define distinct curves with zero distance.}

A related dissimilarity measure of two curves is the dynamic time warping (DTW) distance. It considers the sum of distances matched by the traversal instead of the maximum distance. In the literature we find a generalized version, taking the $\ell_q$-norm for $q \in [1, \infty)$ of the vector comprising the Euclidean distances matched by the traversal. We denote it the \DTW{q} distance, which for two curves $\sigma$ and $\tau$ from $\mathbb{R}^d$ is defined as
\begin{equation*}
\distDTW{\sigma}{\tau}{q}=\min_{ T \in \mathcal{T}} \left(\sum\nolimits_{(i,j)\in T} \distE{v_i}{w_j}^{q}\right)^{1/q}.
\end{equation*}
Note that $\distDTW{\sigma}{\tau}{q}$ can already be computed if we are only given all pair-wise distances between vertices $(v, w)$ with $v \in \sigma$ and $w \in \tau$. Hence, given any distance matrix $D \in \mathbb{R}^{m' \times m''}$, we define $\DTW{q}(D) = \min_{ T \in \mathcal{T}} (\sum\nolimits_{(i,j)\in T} D[i,j]^{q})^{1/q}$. \DTW{1} is most popular across the literature, which is the standard DTW distance. Therefore, we simply use DTW to denote \DTW{1} in the remainder.

\DTW{q} does not satisfy the triangle inequality. However, a relaxed version, depending polynomially on $m$, holds under certain assumptions, as shown by \citet{Lemire09}:

\begin{lemma}[Lemma 3 and Theorem 2 of \citet{Lemire09}]
\label{lem:dtw:triangle}
    Given $1\leq q<\infty$. There exists no constant $c$ that is independent of the complexities of the input curves such that $\distDTW{\sigma}{\tau}{q} \leq c \cdot \left( \distDTW{\sigma}{\upsilon}{q} + \distDTW{\upsilon}{\tau}{q}\right).$ holds for any three curves $\sigma$, $\tau$, and $\upsilon$ in Euclidean space $\REAL^d$.
    If the input curves $\sigma$, $\tau$, and $\upsilon$ are of the same complexity $m$, then it holds that 
    $\distDTW{\sigma}{\tau}{q} \leq \sqrt[q]{m} \cdot \left( \distDTW{\sigma}{\upsilon}{q} + \distDTW{\upsilon}{\tau}{q}\right).$
\end{lemma}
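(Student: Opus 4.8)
We handle the two assertions separately: the quantitative bound $\sqrt[q]{m}$ by composing optimal traversals through the intermediate curve $\upsilon$, and the impossibility by an explicit family of triples.

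\emph{The bound $\sqrt[q]{m}$.} Write $\sigma=(v_1,\dots,v_m)$, $\upsilon=(u_1,\dots,u_m)$, $\tau=(w_1,\dots,w_m)$, let $T_1$ be an optimal traversal for $\sigma,\upsilon$ and $T_2$ one for $\upsilon,\tau$, so that $\sum_{(i,j)\in T_1}\distE{v_i}{u_j}^q=\distDTW{\sigma}{\upsilon}{q}^q$ and $\sum_{(j,k)\in T_2}\distE{u_j}{w_k}^q=\distDTW{\upsilon}{\tau}{q}^q$. The first step is to obtain a traversal $T$ of $\sigma$ and $\tau$ by walking along $T_1$ and $T_2$ simultaneously, synchronized on the shared index $j$: maintain a triple $(i,j,k)$ with $(i,j)\in T_1$ and $(j,k)\in T_2$, starting at $(1,1,1)$; if the next move of $T_1$ increases only $i$, take it (keeping $j,k$); else if the next move of $T_2$ increases only $k$, take it; otherwise both $T_1$ and $T_2$ want to increase $j$, so advance $j$ along both; and output the pair $(i,k)$ at each stage. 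One verifies that consecutive outputs differ by $(+1,0)$, $(0,+1)$, $(+1,+1)$ or coincide, that the process starts at $(1,1)$ and halts at $(m,m)$, so the distinct outputs form a valid traversal $T$; and that every $(i,k)\in T$ has a witness $j(i,k)$ — say the smallest $j$ with $(i,j)\in T_1$, $(j,k)\in T_2$ — with $(i,j(i,k))\in T_1$ and $(j(i,k),k)\in T_2$.

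For each $(i,k)\in T$ the $\ell_2$ triangle inequality gives $\distE{v_i}{w_k}\le\distE{v_i}{u_{j(i,k)}}+\distE{u_{j(i,k)}}{w_k}$, so by Minkowski's inequality (using $q\ge1$)
\[
\distDTW{\sigma}{\tau}{q}\le\Big(\sum_{(i,k)\in T}\distE{v_i}{w_k}^q\Big)^{1/q}\le\Big(\sum_{(i,k)\in T}\distE{v_i}{u_{j(i,k)}}^q\Big)^{1/q}+\Big(\sum_{(i,k)\in T}\distE{u_{j(i,k)}}{w_k}^q\Big)^{1/q}.
\]
Now each summand $\distE{v_i}{u_{j(i,k)}}^q$ is a term $\distE{v_a}{u_b}^q$ of $T_1$ with $a=i$, $b=j(i,k)$; since the witness satisfies $(b,k)\in T_2$, a fixed pair $(a,b)\in T_1$ is reused only for $k\in\{k':(b,k')\in T_2\}$, i.e.\ at most $m$ times because $\tau$ has complexity $m$. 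Hence $\sum_{(i,k)\in T}\distE{v_i}{u_{j(i,k)}}^q\le m\sum_{(a,b)\in T_1}\distE{v_a}{u_b}^q=m\,\distDTW{\sigma}{\upsilon}{q}^q$, and symmetrically (using that $\sigma$ has complexity $m$) $\sum_{(i,k)\in T}\distE{u_{j(i,k)}}{w_k}^q\le m\,\distDTW{\upsilon}{\tau}{q}^q$. Substituting back gives exactly $\distDTW{\sigma}{\tau}{q}\le\sqrt[q]{m}\,(\distDTW{\sigma}{\upsilon}{q}+\distDTW{\upsilon}{\tau}{q})$. I expect the main obstacle to lie in the first step: checking that the synchronized walk genuinely produces a legal traversal (step types, endpoints, termination), and that the chosen witness makes the reuse count ``at most the complexity'' rather than a larger constant multiple — everything after that is the $\ell_2$ triangle inequality, Minkowski, and counting.

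\emph{No complexity-independent constant.} It suffices to make the optimal constant grow with the complexity. Fix a small $\eps>0$ and distinct values $0<\delta_1,\dots,\delta_{m-1}\le\eps/m$, and take the following curves in $\REAL^1$:
\[
\upsilon=(0,1),\qquad \sigma=(\eps,1),\qquad \tau=(\delta_1,\dots,\delta_{m-1},1).
\]
Then $\distDTW{\sigma}{\upsilon}{q}=\eps$ (any traversal matches $v_1=\eps$ to $u_1=0$, and $\{(1,1),(2,2)\}$ attains it), and $\distDTW{\upsilon}{\tau}{q}\le\big(\sum_{i=1}^{m-1}\delta_i^q\big)^{1/q}\le(m-1)^{1/q}\eps/m\le\eps$, via the traversal matching $\delta_1,\dots,\delta_{m-1}$ to $0$ and the last vertices to $1$. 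On the other hand, in any traversal of $\sigma$ and $\tau$ each of $w_1,\dots,w_{m-1}$ is matched to a vertex of $\sigma$, i.e.\ to $\eps$ or to $1$, hence at distance at least $\eps-\delta_i\ge\eps(1-1/m)$; these are at least $m-1$ distinct matched pairs, so $\distDTW{\sigma}{\tau}{q}\ge(m-1)^{1/q}\eps(1-1/m)$. Therefore
\[
\frac{\distDTW{\sigma}{\tau}{q}}{\distDTW{\sigma}{\upsilon}{q}+\distDTW{\upsilon}{\tau}{q}}\ \ge\ \tfrac12(m-1)^{1/q}(1-1/m)\ =\ \Omega\!\big(m^{1/q}\big)\ \xrightarrow[m\to\infty]{}\ \infty,
\]
so no constant $c$ independent of the input complexities can work; in fact this also shows that even for curves of complexity at most $m$ the exponent $1/q$ in the second assertion cannot be improved. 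All the distance computations here are immediate from the definition, so this part presents no real difficulty.
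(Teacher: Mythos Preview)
The paper does not supply its own proof of this lemma; it is quoted from \citet{Lemire09} (their Lemma~3 and Theorem~2). Your argument is correct in substance on both parts.

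For the $\sqrt[q]{m}$ bound, your synchronized-walk composition of $T_1$ and $T_2$ is the standard device (the paper uses an equivalent construction in its proof of \cref{lem:kbound:triangle:ineq}). The case analysis you flag as the ``main obstacle'' is routine: since all three curves have complexity $m$, neither traversal can advance $j$ past $m$, so the only way one traversal is stuck while the other is not is that the stuck one has $j=m$ and the other has a pure $i$- or $k$-move available; termination and the step types follow. Your reuse-count argument is sound: if $(i,j(i,k))=(a,b)$ then $(b,k)\in T_2$, and the number of such $k$ is at most the complexity of $\tau$, namely $m$; the symmetric bound uses the complexity of $\sigma$. Minkowski then closes the argument exactly as you wrote.

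For the impossibility claim, your construction differs from Lemire's (which the paper reproduces inside the proof of \cref{lem:kbound:triangle:ineq}). Lemire takes all three curves of complexity $m$: $\sigma=(0,\dots,0)$, $\tau=(0,\eps,\dots,\eps,0)$, $\upsilon=(0,\eps,0,\dots,0)$, yielding $\distDTW{\sigma}{\tau}{q}=(m-2)^{1/q}\eps$, $\distDTW{\sigma}{\upsilon}{q}=\eps$, $\distDTW{\upsilon}{\tau}{q}=0$, hence ratio $(m-2)^{1/q}$. You instead take complexities $2,2,m$ and obtain the same $\Omega(m^{1/q})$ ratio. Both establish the first assertion equally well; Lemire's example has the minor extra virtue that it lives entirely in the equal-complexity regime, so it also witnesses tightness of the constant $\sqrt[q]{m}$ in the second assertion --- your closing remark about the exponent being unimprovable ``for curves of complexity at most $m$'' does not quite reach this, since your $\sigma$ and $\upsilon$ have complexity $2$, not $m$.
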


Inspired by the Ky-Fan norm for matrices, which sums their $k$ largest singular values \citep{fan1951maximum}, we propose a novel dissimilarity measure for curves, which we call \emph{$k$-largest dynamic time warping} distance ({$k$-DTW}). It seeks for a traversal of the two curves minimizing the sum of the $k$ largest distances between vertices matched by the traversal. We focus on the case $k\ll m$, as we want to observe only a small, yet significant part of the transformation between two curves. If $k$ is larger than the optimal traversal (whose length is at least $m$), then we can simply fill the missing part with zeros. We note that this is just for technical convenience.

\begin{definition}[$k$-DTW distance]
\label{def:kdtw}
Given two curves $\sigma=( v_1,\ldots,v_{m'})$ and $\tau=( w_1,\ldots,w_{m''})$ in Euclidean space $\REAL^d$ and a parameter $k\in\mathbb{N}$, let $\mathcal{T}$ be the set of all traversals $T$ of $\sigma$ and $\tau$. Let the pair $(i,j)\in T$ attain the $l$-th largest distance $s_l^{(T)} = \distE{v_i}{w_j}$ in $T$ such that $s_1^{(T)}\geq s_2^{(T)}\geq\ldots\geq s_{|T|}^{(T)} $, where $|T|$ denotes the number of pairs in $T$. For any $l>|T|$ let $s_{l}^{(T)} = 0$. Then, the $k$-DTW distance of $\sigma$ and $\tau$ is defined as
    \begin{equation*}
    \distkDTW{\sigma}{\tau}=\min_{ T \in \mathcal{T}} \sum\nolimits_{l=1}^{k} s_l^{(T)}.
    \end{equation*}
\end{definition}
The $k$-DTW distance generalizes both, the discrete \Frechet distance (for $k=1$), and the DTW distance (for $k$ large enough, e.g., $k \geq 2m-1$).

\subsection{Triangle inequality}
The $k$-DTW distance satisfies a relaxed triangle inequality without assumptions on the curves' complexities, which we prove in the following.

\begin{lemma} 
\label{lem:kbound:triangle:ineq}
For any three curves $\sigma$, $\tau$, and $\upsilon$ in Euclidean space $\REAL^d$, it holds that 
    \begin{equation*}
    \distkDTW{\sigma}{\tau} \leq k \cdot \left( \distkDTW{\sigma}{\upsilon} + \distkDTW{\upsilon}{\tau}\right).
    \end{equation*}
    This bound is tight. Further, there is no constant $c>0$ independent of $k$ and the complexity of the curves that satisfies $\distkDTW{\sigma}{\tau} \leq c \cdot ( \distkDTW{\sigma}{\upsilon} + \distkDTW{\upsilon}{\tau})$ for any three input curves.
\end{lemma}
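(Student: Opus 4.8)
The plan is to prove the inequality by composing optimal traversals through the middle curve $\upsilon$, and to prove tightness (and the non-existence of a universal constant $c$) with a single explicit family of one-dimensional curves.

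\emph{Upper bound.} Fix an optimal traversal $T_1$ of $\sigma$ and $\upsilon$ and an optimal traversal $T_2$ of $\upsilon$ and $\tau$, with the vertices of $\upsilon$ denoted $u_1,\dots,u_{m_\upsilon}$, so that $\distkDTW{\sigma}{\upsilon}=\sum_{l=1}^k s_l^{(T_1)}$ and $\distkDTW{\upsilon}{\tau}=\sum_{l=1}^k s_l^{(T_2)}$. The first step is the standard observation that traversals compose: there is a traversal $T$ of $\sigma$ and $\tau$ such that for every pair $(i,j)\in T$ there is an index $l$ with $(i,l)\in T_1$ and $(l,j)\in T_2$. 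This $T$ is produced by sweeping $T_1$ and $T_2$ simultaneously, always advancing the pointer into whichever of the two currently lags behind in the $\upsilon$-coordinate (advancing both when they are tied and both must move in the $\upsilon$-coordinate), and recording a new pair of $T$ only when the $\sigma$- or the $\tau$-coordinate changes; one checks that the recorded sequence, with repeated pairs collapsed, runs monotonically from $(1,1)$ to $(m',m'')$ using only steps of the form $(+1,0)$, $(0,+1)$, $(+1,+1)$. Now take the at most $k$ pairs $(i,j)$ of $T$ attaining the $k$ largest matched distances; for each pick a witnessing index $l$, so that by the Euclidean triangle inequality $\distE{v_i}{w_j}\le\distE{v_i}{u_l}+\distE{u_l}{w_j}$. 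Since $\distE{v_i}{u_l}$ is a distance matched by $T_1$, it is at most $s_1^{(T_1)}\le\sum_{l=1}^k s_l^{(T_1)}=\distkDTW{\sigma}{\upsilon}$, and likewise $\distE{u_l}{w_j}\le\distkDTW{\upsilon}{\tau}$. Summing these at most $k$ inequalities yields $\distkDTW{\sigma}{\tau}\le\sum_{l=1}^k s_l^{(T)}\le k\bigl(\distkDTW{\sigma}{\upsilon}+\distkDTW{\upsilon}{\tau}\bigr)$.

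\emph{Tightness and non-existence of a universal constant.} For the lower bound, work in $\REAL$ and set $\sigma=(0,\dots,0)$ and $\tau=(1,\dots,1)$, each with $m\ge k$ vertices, and $\upsilon=(0,1)$. The traversal that matches all vertices of $\sigma$ to the first vertex of $\upsilon$ and matches only $\sigma$'s last vertex additionally to $\upsilon$'s second vertex has a single nonzero matched distance, equal to $1$, while $\upsilon$'s second vertex forces every traversal to have a matched distance at least $1$; hence $\distkDTW{\sigma}{\upsilon}=1$, and by symmetry $\distkDTW{\upsilon}{\tau}=1$. On the other hand every traversal of $\sigma$ and $\tau$ contains at least $m\ge k$ matched pairs, each at distance exactly $1$, so $\distkDTW{\sigma}{\tau}=k$. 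Thus the ratio equals $k/2$, which shows the linear dependence on $k$ cannot be improved (the constant being optimal up to a factor of $2$); and since $k$, and with it the complexity $m=k$, may be taken arbitrarily large, no constant $c$ independent of $k$ and of the curves' complexity can satisfy $\distkDTW{\sigma}{\tau}\le c\,(\distkDTW{\sigma}{\upsilon}+\distkDTW{\upsilon}{\tau})$.

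\emph{Main obstacle.} The only delicate part is the composition of traversals: one must verify that the merged sequence is genuinely a traversal (monotone and using only the allowed steps), that the common-index property holds for every recorded pair, and that it contributes at most $k$ terms to each side of the bound; granting this, the remainder is just the Euclidean triangle inequality together with the trivial bound $s_1^{(T)}\le\sum_{l=1}^k s_l^{(T)}$.
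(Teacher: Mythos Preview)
Your upper-bound argument is correct and is essentially the paper's proof: compose optimal traversals through $\upsilon$, use the Euclidean triangle inequality on each matched pair of the composed traversal, bound every such distance by $s_1^{(T_1)}+s_1^{(T_2)}\le \distkDTW{\sigma}{\upsilon}+\distkDTW{\upsilon}{\tau}$, and sum over $k$ terms. Your description of the merge is a bit informal but the ``common-index'' property you need is standard and the paper constructs it in the same way.

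Your tightness example, however, only gives ratio $k/2$, as you yourself note. The lemma claims the bound is tight \emph{exactly}, i.e., the constant $k$ cannot be lowered to $k-1$ (or anything below $k$). Your example $\sigma=(0,\dots,0)$, $\tau=(1,\dots,1)$, $\upsilon=(0,1)$ cannot close this factor-$2$ gap because both side distances are positive. The paper achieves the exact constant by making one of them vanish: with $\sigma=(0,\dots,0)$, $\tau=(0,\varepsilon,\dots,\varepsilon,0)$, $\upsilon=(0,\varepsilon,0,\dots,0)$, all of complexity $m\ge k+2$, one has $\distkDTW{\sigma}{\tau}=k\varepsilon$, $\distkDTW{\sigma}{\upsilon}=\varepsilon$, and crucially $\distkDTW{\upsilon}{\tau}=0$ (a traversal can match the single $\varepsilon$ in $\upsilon$ to the whole $\varepsilon$-block in $\tau$). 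This forces $c\ge k$ exactly. Your example is perfectly adequate for the ``no universal constant $c$'' clause, but to establish the sharp tightness claim you need an instance with one of the two side distances equal to zero.
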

\begin{proof}
Let $T(\sigma, \upsilon)$ and $T(\upsilon, \tau)$ be traversals that realize $\distkDTW{\sigma}{\upsilon}$ and $\distkDTW{\upsilon}{\tau}$ respectively.
Using these traversals, we construct a (not necessarily optimal) traversal $T'(\sigma, \tau)$ with the underlying idea that any $(i,j) \in T'(\sigma, \tau)$ will correspond to some $(i,l) \in T(\sigma, \upsilon)$ and $(l, j) \in T(\upsilon, \tau)$ such that we can use the triangle inequality.
We construct $T'(\sigma, \tau)$ as follows.
For any index $z$ of a vertex of $\upsilon$, we consider all vertices that it is matched to in the traversals $T(\sigma, \upsilon)$ and $T(\upsilon, \tau)$, i.e., all $(i_1, z), \dots, (i_s, z) \in T(\sigma, \upsilon)$ and all $(z, j_1), \dots, (z, j_t) \in T(\upsilon, \tau)$.
Note that by the definition of a traversal, the indices $i_1, \dots, i_s$ and $j_1, \dots, j_t$ are non-empty, contiguous subsets of the natural numbers respectively.
W.l.o.g., let $s \geq t$; the other case is symmetric.
For all $l \in \{1, \dots, t\}$, we add the tuples $(i_l, j_l)$ to our new traversal $T'(\sigma, \tau)$, and for all $l \in \{t+1, \dots, s\}$ we add the tuples $(i_l, j_t)$ to $T'(\sigma, \tau)$.
Performing this for all indices $z$ of vertices of $\upsilon$, we obtain a valid traversal $T'(\sigma, \tau)$.

By construction, we know that any element in $T'(\sigma, \tau)$ was created from some elements in $T(\sigma, \upsilon)$ and $T(\upsilon, \tau)$.
Furthermore, recall that $s_1^{T}$ is the largest distance for some given traversal~$T$.
Hence, by the triangle inequality we obtain
$s_1^{T'(\sigma, \tau)} \leq s_1^{T(\sigma, \upsilon)} + s_1^{T(\upsilon, \tau)}.$
Using this inequality, we then have
\begin{align*}
\distkDTW{\sigma}{\tau} \leq \sum\nolimits_{l = 1}^{k} s_l^{T'(\sigma, \tau)} \leq k \cdot s_1^{T'(\sigma, \tau)}  & \leq k \cdot (s_1^{T(\sigma, \upsilon)} + s_1^{T(\upsilon, \tau)}) \\
& \leq k \cdot (\distkDTW{\sigma}{\upsilon} + \distkDTW{\upsilon}{\tau}).
\end{align*}
At first glance, it might seem that the above inequalities are quite loose. However, we prove in the remainder that they are actually tight in general.
To this end, consider the curves $\sigma$, $\tau$, and $\upsilon$, each of complexity $m \geq 3$, similar to the curves that \citet{Lemire09} introduced for the proof of \cref{lem:dtw:triangle}.
That is, let $\sigma=( 0,\ldots,0)$
, $\tau=( 0, \varepsilon,\ldots,\varepsilon, 0)$
, and $\upsilon= ( 0, \varepsilon,0,\ldots,0 )$
. Let $k\leq m-2$.
Then, $\distkDTW{\sigma}{\tau}=k\cdot\varepsilon$, $\distkDTW{\sigma}{\upsilon}=\varepsilon$, and  $\distkDTW{\upsilon}{\tau}=0$, implying that the relaxed triangle inequality that we showed is tight.

Further, towards a contradiction suppose that there is a constant $c>0$ for which any three curves $\sigma$, $\tau$, and $\upsilon$ satisfy $\distkDTW{\sigma}{\tau} \leq c \cdot ( \distkDTW{\sigma}{\upsilon} + \distkDTW{\upsilon}{\tau})$. Then assuming $k\leq m-2$, the above example implies that $c \geq k$. In the remaining case where $k>m-2$, the inequality implies $\distkDTW{\sigma}{\tau}=(m-2)\cdot \varepsilon \leq c\cdot \varepsilon$, thus $c \geq m-2$.
\end{proof}

\subsection{Robustness}
The concept of robustness for curve distance measures can be quantified as follows to support the intuition: given two curves whose matched vertices are at constant distance, say $1$, if we move one vertex away to increase the distance by a large value $\Delta$, then the average distance contributed per vertex increases through this modification by $\Delta/\Theta(k)$ for $k$-DTW, $k\in[m]$. This means that \Frechet is largely dominated by the single outlier, while for 
$k$-DTW the increase averages out, so that one single large perturbation of order $\Delta \approx \varepsilon k$ is largely indistinguishable from tiny $\varepsilon$ perturbations of (all) single vertices.
To make the claim of robustness more rigorous, we follow the outline of \citep{lopuhaa1991breakdown} and extend it towards a notion of robustness for curves with respect to the $k$-DTW distance (including \Frechet and DTW). To the best of our knowledge, such a notion of robustness for curves has not been considered before.

Given a curve $\pi=(p_1,\ldots,p_m)$ in $\mathbb{R}^d$, let $t_m(\pi)=\sigma$ be its \emph{curve-of-top-$k$-median}, i.e., $\sigma=(s_1,\ldots,s_m)$ where each $s_i$ equals the geometric median restricted to the top-$k$ distances \citep{KrivosijaM19,AfshaniS24} of the set $\{p_1,\ldots,p_m\}$.
More formally, for all $j \in[m]$ we define $s_j = \bar s \in \operatorname{argmin}_{s\in\mathbb{R}^d}\sum_{\text{top-}k} \|p_i - s\|$.
We note that the $\operatorname{argmin}$ may be ambiguous. In that case we may choose an arbitrary but fixed element, i.e., we require that $s_1 = \ldots = s_m$. We finally note that $\sum_{\text{top-}k} \| p_i - \bar s\| = \distkDTW{\pi}{\sigma}$.
It is easy to see that $t_m(\pi)$ is translational equivariant, 
which means that for any $v\in \mathbb{R}^d$ that we add simultaneously to all vertices of a curve, it holds that $t_m(\pi+v)=t_m(\pi)+v$. We prove this property in \Cref{lem:equivariant} in the appendix.

First, we define the breakdown point for $t_m(\pi)$ with respect to $k$-DTW to be the smallest number $1\leq \ell \leq m$ of vertices to obtain $\pi_\ell$, which equals $\pi$ in all but $\ell$ many vertices that may be arbitrarily corrupted, such that $\sigma_\ell = t_m(\pi_\ell)$
is also arbitrarily corrupted. Formally, we define 
\begin{align*}
    \beta(t_m,\pi)\coloneqq
    \min \{\ell\in[m] \mid \sup\nolimits_{\pi_\ell}  \distkDTW{t_m(\pi)}{t_m(\pi_\ell)}=\infty \}.
\end{align*}

We prove the following theorem in \Cref{app:robust}.
\begin{theorem}\label{main:thm:robustness}
Let $\pi=(p_1,\ldots,p_m)$ be a curve with $p_i\in \mathbb{R}^d$. Let $t_m(\pi)=\sigma$ be the curve-of-top-$k$-median. Then
$\beta(t_m,\pi) = \left\lfloor\frac{k+1}{2}\right\rfloor.$
\end{theorem}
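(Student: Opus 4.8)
The plan is to prove the two bounds $\beta(t_m,\pi)\ge\lfloor(k+1)/2\rfloor$ and $\beta(t_m,\pi)\le\lfloor(k+1)/2\rfloor$ separately, working in the intended regime $k\le m$ (for $k>m$ the curve‑of‑top‑$k$‑median degenerates into the ordinary geometric median). Write $\bar s=t_m(\pi)$ and $f_\pi(s)=\sum_{\text{top-}k}\distE{p_i}{s}=\max_{|S|=k}\sum_{i\in S}\distE{p_i}{s}$; the right‑hand form exhibits $f_\pi$ as a maximum of convex functions, hence convex, with $f_\pi(\bar s)=\distkDTW{\pi}{\sigma}$. For a corruption $\pi_\ell$ that changes exactly the index set $B$, $|B|=\ell$, write $\bar s_\ell=t_m(\pi_\ell)$ and $f_{\pi_\ell}$ for the corresponding objective (so $p'_i=p_i$ for $i\notin B$). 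Since $t_m$ always outputs a constant curve, $\distkDTW{t_m(\pi)}{t_m(\pi_\ell)}=k\distE{\bar s}{\bar s_\ell}$, so everything reduces to controlling $\distE{\bar s}{\bar s_\ell}$.

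For the robustness bound, fix $\ell<\lfloor(k+1)/2\rfloor$, equivalently $2\ell\le k-1$. Let $T$ with $|T|=k$ attain $f_{\pi_\ell}(\bar s)$, and split $T=T_g\cup T_b$ into its uncorrupted and corrupted indices, so $|T_b|\le\ell$ and $|T_g|\ge k-\ell$. Optimality of $\bar s_\ell$ gives $f_{\pi_\ell}(\bar s_\ell)\le f_{\pi_\ell}(\bar s)$, and since $f_{\pi_\ell}(\bar s_\ell)$ is a maximum over $k$‑subsets it is at least $\sum_{i\in T}\distE{p'_i}{\bar s_\ell}$; subtracting and rearranging yields
\[
\sum_{i\in T_g}\bigl(\distE{p_i}{\bar s_\ell}-\distE{p_i}{\bar s}\bigr)\ \le\ \sum_{i\in T_b}\bigl(\distE{p'_i}{\bar s}-\distE{p'_i}{\bar s_\ell}\bigr)\ \le\ \ell\,\distE{\bar s}{\bar s_\ell},
\]
the last inequality by the triangle inequality and $|T_b|\le\ell$. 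Bounding the left side below via $\distE{p_i}{\bar s_\ell}\ge\distE{\bar s}{\bar s_\ell}-\distE{p_i}{\bar s}$, it is at least $(k-\ell)\distE{\bar s}{\bar s_\ell}-2\sum_{i\in T_g}\distE{p_i}{\bar s}\ge(k-\ell)\distE{\bar s}{\bar s_\ell}-2f_\pi(\bar s)$, using $|T_g|\ge k-\ell$ and that $T_g$ indexes at most $k$ vertices of $\pi$. Combining, $(k-2\ell)\distE{\bar s}{\bar s_\ell}\le 2f_\pi(\bar s)$, so $\distE{\bar s}{\bar s_\ell}\le 2f_\pi(\bar s)/(k-2\ell)$, a bound depending only on $\pi,k,\ell$ and valid for every element of the (possibly non‑unique) argmin, hence for every tie‑break rule. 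Thus $\sup_{\pi_\ell}\distkDTW{t_m(\pi)}{t_m(\pi_\ell)}\le 2k\,f_\pi(\bar s)/(k-2\ell)<\infty$, so $\ell$ corruptions never break $t_m$; this gives $\beta(t_m,\pi)\ge\lfloor(k+1)/2\rfloor$.

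For tightness, let $\ell=\lfloor(k+1)/2\rfloor$, so $2\ell\ge k$. I would pick a unit vector $u$ and a point $c$ with $\langle u,p_i-c\rangle>0$ for all $i$ (take $c$ far in the $-u$ direction from $\pi$), chosen moreover so that the vertices have nonzero spread in the directions orthogonal to $u$ (always possible when $d\ge 2$, except when all vertices lie on one line parallel to $u$, which one avoids by another choice of $c,u$). For $R\to\infty$ let $\pi_\ell^{(R)}$ be the corruption that moves $\ell$ vertices onto $q=c+Ru$. Restricting $f_{\pi_\ell^{(R)}}$ to the ray $t\mapsto c+tu$: while $t$ ranges from the cluster up to about $R/2$, the $k$ largest distances are the $\ell$ copies of $\distE{q}{\cdot}$ together with the $k-\ell\le\ell$ largest distances to the uncorrupted vertices, and there $\tfrac{d}{dt}f_{\pi_\ell^{(R)}}(c+tu)=-\ell+\sum_i\tfrac{t-\langle u,p_i-c\rangle}{\distE{p_i}{c+tu}}\le-\ell+(k-\ell)=k-2\ell\le0$, with strict inequality in the balanced case $2\ell=k$ because each summand is $<1$ (the vertex $p_i$ has an orthogonal component); for $t>R/2$ the uncorrupted vertices dominate and the derivative turns positive, so the ray‑restricted minimum sits near $t\approx R/2$, which diverges with $R$. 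Finally, using convexity and the fact that any center within a bounded ball around $\pi$ forces the $\ell$ far copies into the top‑$k$ at combined cost at least $\approx\ell R$ — strictly larger than $f_{\pi_\ell^{(R)}}(c+\tfrac R2u)\approx\ell R-\sum_{\ell\text{ smallest}}\langle u,p_i-c\rangle$ once the orthogonal spread is counted — one concludes that the \emph{global} minimizer $\bar s_\ell^{(R)}$ also lies at distance $\Omega(R)$ from $\pi$. Hence $\distkDTW{t_m(\pi)}{t_m(\pi_\ell^{(R)})}=k\,\distE{\bar s}{\bar s_\ell^{(R)}}\to\infty$, so $\ell=\lfloor(k+1)/2\rfloor$ corruptions suffice and $\beta(t_m,\pi)\le\lfloor(k+1)/2\rfloor$.

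The robustness half is the routine one. The technical heart is tightness in the balanced case $2\ell=k$ ($k$ even): the adversarial and the restoring ``pull'' cancel to first order, so one must lean on the second‑order geometry (the orthogonal spread of the vertices), and one must verify that the \emph{global} optimum — not merely the optimum along the chosen ray — escapes; for the odd case $2\ell=k+1$ the ray derivative is strictly negative regardless of the configuration and the argument is clean. In fully degenerate configurations (e.g.\ $d=1$, or $\pi$ constant) the set of minimizers of $f_{\pi_\ell^{(R)}}$ is genuinely unbounded while still containing a bounded point, and it is this escaping branch of the solution set — rather than any single tie‑broken representative — that witnesses the breakdown.
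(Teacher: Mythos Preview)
Your robustness half ($\beta\ge\lfloor(k+1)/2\rfloor$) is correct and proceeds essentially as the paper does: compare the corrupted objective at $\bar s$ and at $\bar s_\ell$, split the realizing top-$k$ index set into good and bad indices, and use the triangle inequality together with optimality of $\bar s_\ell$. Your explicit bound $\distE{\bar s}{\bar s_\ell}\le 2f_\pi(\bar s)/(k-2\ell)$ is actually tidier than the paper's (which routes through $M=\max_i\|p_i\|$), but the mechanism is the same.

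The tightness half is where you and the paper genuinely diverge. The paper argues by contradiction via translational equivariance: it corrupts the $\ell$ vertices extreme in direction $+v$ by adding $v$ to get $\pi_{(+v)}$, and corrupts the $\ell$ vertices extreme in direction $-v$ by subtracting $v$ to get $\pi_{(-v)}$. After restricting to the $2\ell\ge k$ extreme vertices, one curve is exactly the other translated by $-v$, so equivariance gives $t_m(\bar\pi_{(-v)})=t_m(\bar\pi_{(+v)})-v$, and both cannot remain bounded as $\|v\|\to\infty$. This works uniformly in all dimensions and for all configurations of $\pi$, including $d=1$ and repeated vertices.

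Your single-corruption ray argument is fine for odd $k$ (the first-order slope $k-2\ell=-1$ genuinely forces the minimizer to distance $\Theta(R)$). For even $k$ there is a real gap. You correctly identify that in degenerate configurations the argmin of $f_{\pi_\ell^{(R)}}$ is an unbounded set that still contains bounded points, and you then appeal to ``the escaping branch of the solution set.'' That move is not permitted by the paper's setup: $t_m$ is defined as a \emph{fixed} function with a fixed tie-break rule, and the breakdown supremum ranges over corruptions $\pi_\ell$, not over elements of the argmin. Concretely, in $d=1$ with $k=2\ell$ and all uncorrupted vertices at $0$, your corruption (move $\ell$ vertices to $R$) yields argmin $=[0,R/2]$; the translation-equivariant tie-break ``pick the leftmost point'' returns $0$ for every $R$, so your specific family never witnesses breakdown. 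What does witness it is corrupting to $-R$ instead, which under the same rule returns $-R/2$. This is exactly the paper's insight: you must play off two corruptions related by a translation, and equivariance forces at least one of them to escape. Without that symmetric pair your even-$k$ argument is incomplete.
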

The proof is composed of two parts. \cref{lem:rob:upper} shows $\beta(t_m,\pi) \leq \left\lfloor\frac{k+1}{2}\right\rfloor$. If $\beta(t_m,\pi)$ were larger than $\left\lfloor\frac{k+1}{2}\right\rfloor$, it implies that any $\pi_\ell$ that we get by corrupting $\ell = \left\lfloor\frac{k+1}{2}\right\rfloor$ vertices would result in bounded $\|t_m(\pi_\ell)\| < \infty$. We construct two such corrupted curves, by translating a suitable choice of $\ell$ vertices by $\pm v$ for a large vector $v$. This requires special care in comparison to \citep{lopuhaa1991breakdown} to ensure that the set of vertices that determine the top-$k$ distances remains unchanged. The two curves can then be related to each other using the translation equivariance in such a way that the finite bound cannot hold for both simultaneously, thus leading to a contradiction.

\cref{lem:rob:lower} shows $\beta(t_m,\pi) \geq \left\lfloor\frac{k+1}{2}\right\rfloor$. If that were not true, then we can show that $t_m(\pi_\ell)$ must be close to the original $t_m(\pi)$, as otherwise, it would contradict the optimality of $t_m(\pi_\ell)$ for the corrupted curve. Again, the technical details require special care to account for the geometric median of top-$k$ distances instead of the full set of distances.

\section{Construction Algorithm}

Here we show how to efficiently compute the $k$-DTW distance. 
We first show in 
\cref{lem:different:short} 
that the $k$-DTW distance between two curves is not equal to taking the largest $k$ distances from the sum that yields their DTW distance. This precludes the option of directly applying the standard DTW algorithm for the computation of $k$-DTW. We also show that the length of the traversals may differ significantly (linearly in $m$) in both directions, which also precludes using a standard DTW algorithm to even estimate the size of a $k$-DTW traversal. This indicates that no provable approximation can be obtained in such a straightforward manner. The proof can be found in \cref{sec:app:proofs}.

\begin{lemma}[Short version of \cref{lem:different:kdtw}]
\label{lem:different:short}
    Given an integer $m\geq 5$, it holds for any $k$ with $1 \leq k \leq 4\lfloor \frac{m}{5} \rfloor$, that there exist curves $\sigma$ and $\tau$ of complexity $m$ such that $\distkDTW{\sigma}{\tau}$ is not equal to the sum of the largest $k$ distances contributing to $\distDT{\sigma}{\tau}$. Furthermore, the difference between traversal lengths can be linear in the complexity $m$.
\end{lemma}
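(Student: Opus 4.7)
The plan is to construct explicit curves that exhibit the separation. First, I would build a constant-size ``gadget'' $(\sigma_0, \tau_0)$ where the DTW-optimal traversal is the straight diagonal and its top-$1$ matched distance strictly exceeds the minimum top-$1$ achievable by another, longer traversal that ``detours'' through auxiliary matchings. A concrete candidate is $\sigma_0 = (0, 0, 5, 0, 5)$ and $\tau_0 = (1, 2, 3, 4, 5)$: the diagonal produces matched distances $(1, 2, 2, 4, 0)$, whose sum of $9$ can be certified DTW-optimal by enumerating the constantly-many monotone traversals, while its top-$1$ equals $4$; the detour $(1{,}1),(2{,}2),(3{,}2),(4{,}3),(5{,}4),(5{,}5)$ produces $(1, 2, 3, 3, 1, 0)$ with top-$1$ equal to $3$, and a short case analysis on how $\sigma_0(4)=0$ must be matched yields that $3$ is the global minimum top-$1$, hence $1$-DTW$(\sigma_0, \tau_0) = 3 \neq 4$.

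Next, to handle arbitrary complexity $m$ and the full range of $k$, I would concatenate $n = \lfloor m/5 \rfloor$ disjoint copies of the gadget into curves $\sigma, \tau$ of complexity $m$, separated by shared ``buffer'' vertices whose values are placed far from the gadget values so that any cross-block matching is strictly suboptimal for both DTW and $k$-DTW, and matched at zero distance along block boundaries. Any remaining positions are padded by identical pairs of vertices matched at distance zero. The concatenated diagonal is then DTW-optimal and has length exactly $m$; the concatenated detour has length $m + n = \Theta(m)$. A counting argument on the sorted multisets of matched distances shows strictly smaller top-$k$ for the detour in the regime $k \leq n$ (comparing $4k$ versus $3k$), and intermediate regimes are handled by mixing detour and diagonal blocks and tracking how many large distances each strategy contributes to the top-$k$.

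The length claim is then immediate: the DTW-optimal traversal has length $m$ while the detour has length $m + \lfloor m/5 \rfloor$, giving the claimed $\Omega(m)$ difference. The main obstacle will be the combinatorial bookkeeping needed to cover the full range $1 \leq k \leq 4\lfloor m/5 \rfloor$: one must verify (i) per-block DTW-optimality of the diagonal, (ii) that the buffer vertices truly decouple the blocks so per-block optima concatenate to global optima under both measures, and (iii) a careful multiset analysis establishing strict inequality of top-$k$ sums uniformly on the range, possibly by allowing the gadget parameters to depend on $k$ and exploiting that the lemma only asserts \emph{existence} of suitable curves for each~$k$.
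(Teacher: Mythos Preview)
Your overall strategy---build a constant-size gadget where some alternative traversal beats the DTW-optimal one on top-$k$, concatenate $\lfloor m/5\rfloor$ copies separated by buffers, and pad---is exactly what the paper does. The gap is in the specific detour you chose. Your detour $(1,1),(2,2),(3,2),(4,3),(5,4),(5,5)$ has sorted distances $(3,3,2,1,1,0)$, while the diagonal has $(4,2,2,1,0)$; these agree on top-$k$ for every $k\geq 2$ within a single block (top-$2$ is $6$ for both, top-$3$ is $8$, top-$4$ is $9$). Consequently, after concatenating $n=\lfloor m/5\rfloor$ blocks, your comparison $4k$ versus $3k$ holds only for $k\le n$, and a direct multiset count shows the diagonal and detour top-$k$ coincide for all $k\geq 2n$. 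Your ``mixing'' idea cannot help here: any mix of diagonal and detour blocks has the same top-$k$ as the pure diagonal whenever $k\geq 2n$, because per block the two agree from $k=2$ onward. So as written, the argument only reaches $k\leq 2\lfloor m/5\rfloor-1$, half the claimed range.

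The missing idea is to pick a gadget (or a detour) whose per-block top-$k$ gap persists for \emph{all} $k\in\{1,2,3,4\}$. The paper's $D$-gadget is engineered so the two competing traversals differ in top-$k$ by a constant $\varepsilon/2$ for each $k\in\{1,2,3,4\}$, which immediately propagates to $1\le k\le 4\lfloor m/5\rfloor$ after concatenation. In fact, your own gadget admits such a detour: the traversal $(1,1),(2,1),(3,2),(4,2),(5,3),(5,4),(5,5)$ has sorted distances $(3,2,2,1,1,1,0)$, which strictly beats the diagonal on top-$k$ for every $k\in\{1,2,3,4\}$ (values $3,5,7,8$ versus $4,6,8,9$), and the concatenated version then covers the full range with a linear traversal-length gap. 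One further caution: your diagonal is not the \emph{unique} DTW optimum for this gadget (e.g., $(1,1),(2,2),(3,3),(4,3),(5,4),(5,5)$ also has cost $9$), so be explicit that the lemma only requires \emph{some} DTW-optimal traversal whose top-$k$ differs from $k$-DTW.
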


We also stress that the standard dynamic programs for DTW or \Frechet computation cannot be adapted in a direct way to $k$-DTW (i.e., storing the top $k$ distances seen along each path). The problem is that one can update inductively the cost of a current solution in a standard way, but the top $k$ distances cannot be updated to preserve optimality in this way along each path to the end of the dynamic program. Other dynamic programming approaches seem to give only $O(m^8)$ time algorithms \citep{garfinkel2006k}.

As a consequence, we need to design a new algorithm for the $k$-DTW distance. Our solution is given in \cref{alg:compute:exact} (whose notation we use in the following). It is inspired by Algorithm A of \citet{BertsimasS03}, designed for the more general framework of top-$k$-optimization. Our analysis is significantly simplified and adapted to the case of the $k$-DTW distance.
Our algorithm uses the standard DTW as a subroutine, which can be computed exactly in $O(m'm'')$ time \citep{dtw2,dtw1}.

We give a self contained proof 
(in \cref{sec:app:proofs})
in the $k$-DTW context, i.e., we prove that \cref{alg:compute:exact} returns the correct $\distkDTW{\sigma}{\tau}$, for any two curves $\sigma$ and $\tau$.
\begin{theorem}
    \label{thm:computing:kdtw}
    Given two curves $\sigma$ and $\tau$, $|\sigma|=m'$ and $|\tau|=m''$, and a parameter $k$, \cref{alg:compute:exact} returns $\distkDTW{\sigma}{\tau}$ in $\Oh{m'm''z}$ time, where $z$ is the number of distinct distances between any pair of vertices in $\sigma$ and $\tau$. 
\end{theorem}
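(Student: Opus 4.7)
The proof plan rests on the classical sum-of-top-$k$ representation, which lets us convert the bespoke $k$-DTW optimization into a parametric family of standard DTW instances.

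First, I invoke the following identity, which underlies Bertsimas--Sim's Algorithm A: for any non-negative values $a_1 \geq a_2 \geq \cdots$ (padded by zeros if needed) and any positive integer $k$,
\[
\sum_{l=1}^{k} a_l \;=\; \min_{\theta \geq 0} \Bigl( k\theta + \sum_{l} \max(0, a_l - \theta) \Bigr),
\]
with the minimum attained at $\theta = a_k$. Applying this per traversal to the sequence $(s_l^{(T)})_l$ and letting $D$ denote the pairwise distance matrix of $\sigma$ and $\tau$, I obtain
\[
\sum_{l=1}^{k} s_l^{(T)} \;=\; \min_{\theta \geq 0} \Bigl( k\theta + \sum_{(i,j) \in T} \max\!\bigl(0, D[i,j] - \theta\bigr) \Bigr).
\]
Taking $\min_{T \in \mathcal{T}}$ on both sides and swapping the two independent minimizations gives
\[
\distkDTW{\sigma}{\tau} \;=\; \min_{\theta \geq 0} \Bigl( k\theta + \min_{T \in \mathcal{T}} \sum_{(i,j) \in T} \max\!\bigl(0, D[i,j] - \theta\bigr) \Bigr) \;=\; \min_{\theta \geq 0} \bigl( k\theta + \mathrm{DTW}(D_\theta) \bigr),
\]
where $D_\theta[i,j] := \max(0, D[i,j] - \theta)$ is a non-negative matrix on which the standard DTW dynamic program can be run as a black box in $O(m'm'')$ time.

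Second, I argue that the search over $\theta$ can be restricted to a discrete set of size $O(z)$. For any fixed traversal $T$, the inner optimum is $\theta = s_k^{(T)}$, which is either $0$ or an entry of $D$; hence the joint optimum is attained at some $\theta^* \in \{0\} \cup \{D[i,j] : i \in [m'], j \in [m'']\}$, a set of at most $z+1$ distinct values. The algorithm therefore enumerates these candidates, computes $f(\theta) := k\theta + \mathrm{DTW}(D_\theta)$ for each by calling the standard DTW subroutine, and returns the minimum; by the derivation above this minimum equals $\distkDTW{\sigma}{\tau}$. The runtime is $O(z)$ DTW calls at cost $O(m'm'')$ each, i.e., $O(m'm''z)$ overall, with preprocessing (enumerating and possibly sorting the distinct distances) absorbed into this bound.

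The most delicate step is not the min-swap or the runtime accounting but the interaction of the identity with the zero-padding convention of \cref{def:kdtw}: traversals can have length less than $k$, and the identity must still be consistent. Here I use that $\max(0, 0 - \theta) = 0$ for $\theta \geq 0$, so that extending the sequence $(s_l^{(T)})$ by zeros affects neither side of the identity; when $|T| < k$ the global optimum for that traversal falls back to $\theta = 0$, giving the plain sum $\mathrm{DTW}(D)$ along $T$, which matches the definition. A short verification that $\theta^* = 0$ is a legitimate candidate (and is indeed among the $O(z)$ values tested) closes the correctness argument.
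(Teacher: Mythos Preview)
Your proof is correct and follows essentially the same approach as the paper: both reduce $k$-DTW to a one-parameter family of standard DTW instances via the Bertsimas--Sim top-$k$ identity, then discretize the parameter to the distinct distance values (plus zero). The only difference is presentational---you invoke the identity $\sum_{l\leq k} a_l = \min_{\theta\geq 0}\bigl(k\theta+\sum_l\max(0,a_l-\theta)\bigr)$ as a black box and swap the two minimizations, whereas the paper unpacks the identity into an explicit case analysis ($l<l^*$, $l=l^*$, $l>l^*$) per traversal; the mathematical content is identical.
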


Here, we give a high level intuition of the proof. We show that for an arbitrary (not necessarily optimal) but fixed traversal $T$ there exists an iteration where the cost considered by the algorithm equals the $k$-DTW cost of this traversal $T$, i.e., the sum of its $k$ largest distances. In particular this happens in the iteration where the element $E[l^*]$ is a correct guess on the smallest element among the $k$ largest distances in $T$.

\begin{algorithm}
\caption{Computing the $k$-DTW distance}
\label{alg:compute:exact}
\begin{algorithmic}[1]
\STATE {\bfseries Input:} Curves $\sigma=( v_1\ldots,v_{m'})$ and $\tau=( w_1,\ldots,w_{m''})$, parameter $k$
\STATE {\bfseries Output:} The $k$-DTW distance \distkDTW{\sigma}{\tau}
\STATE Initialize the distance matrix $D[1..m',1..m'']$ with $D[i,j]\leftarrow \| v_i - w_j\|$\;
\STATE Let array $E[1..z+1]$ store the $z$ distinct distances in $D$ including 0 in sorted order $E[1]>\ldots>E[z]>E[z+1]=0$ \;
\STATE Initialize $\text{mincost}\leftarrow \infty$\;
\FOR{$l\in \{1,\ldots, z+1$\}}
    \STATE \textit{/* $E[l]$ represents a current guess on the smallest element among the largest $k$ distances */}
    \STATE $D'[i,j]\leftarrow \max \lbrace D[i,j]-E[l], 0\rbrace$, for all $(i,j)\in [m']\times [m'']$ \;\\[3pt]
    \STATE \textit{/* update best solution by the DTW distance on the modified $D'$ matrix plus $k\cdot E[l]$ */}
    \STATE $\text{mincost} \leftarrow \min \lbrace \text{mincost}, \mathrm{DTW}(D')+k\cdot E[l]\rbrace$ \;
\ENDFOR
\STATE {\bfseries Return:} $\text{mincost}$\;
\end{algorithmic}
\end{algorithm}

The reason is that the $\max$ in Line~8 of \cref{alg:compute:exact} evaluates to $D[i,j] - E[l^*]$ for the $k$ largest distances in $T$ while it evaluates to $0$ for all other distances in $T$. Hence, adding $k \cdot E[l^*]$ to the cost of traversal $T$ in $D'$ recovers the original cost.
In all other iterations the cost cannot become smaller, which can be shown by the following case distinction:
  \vspace{5pt}
\begin{compactitem}[$\bullet$]
    \item For iterations $l$ where $E[l] < E[l^*]$, we again sum the $k$ largest elements of $T$ as, similarly to iteration $l^*$, the $\max$ in Line~8 evaluates to $D[i,j] - E[l]$ for these elements and is compensated for by adding $k \cdot E[l]$. Hence, the cost cannot be less than the $k$-DTW cost of $T$, but it can be larger as we potentially sum more elements.
    \vspace{5pt}
    \item For iterations $l$ where $E[l] > E[l^*]$, some of the $k$ largest elements of $T$ are evaluated to $0$ in the $\max$ in Line~8. However, by later adding $k \cdot E[l]$, we add at least what we subtracted from $D[i,j]$ in $\max\{D[i,j] - E[l], 0\}$. Hence, again the sum is at least the $k$-DTW cost of $T$.
\end{compactitem}
  \vspace{5pt}
Applying this to the optimal $k$-DTW traversal $T^*$, we additionally have that the best solution for a suboptimal $T$ cannot ever be smaller than the best and optimal iteration for $T^*$. The minimization in Line~10 thus yields the optimal $k$-DTW cost. The running time follows from the running time of the standard DTW algorithm repeated for $z+1$ distinct guesses of the element $E[l]$.

Note that $z$ may become as large as $m'm''$ making the running time quartic in bad cases. We thus introduce two heuristic improvements  
that effectively reduce the number of iterations that need to be performed on practical instances. 
This reduces the number of DTW calculations on average by $85\%$ -- $97.5\%$ in our experiments in \cref{sec:main:experiments}.
\vspace{5pt}
\begin{compactenum}[1)]
    \item The first heuristic leverages the fact that the array of distinct distances $E[1..z+1]$ is sorted and in iteration $l'$ we have a lower bound of $k\cdot E[l']$. Whenever this value exceeds the current mincost value, any element $E[l]$ for $l<l'$ is even larger and cannot yield a better solution. Iterations with $l<l'$ can thus be omitted.
    \vspace{5pt}
    \item For iterations where $E[l]$ is very small, many of the distance entries in $D'$ are larger than zero. This may result in \emph{any} solution being invalid since it is summing over more than $k$ non-zero elements. We can thus run a binary search for the largest index $l'$ that admits at least one valid solution with no more than $k$ non-zero elements and omit all iterations $l > l'$.
\end{compactenum}
  \vspace{5pt}
Next, we state our rigorous approximation results. The first lemma is a simple, yet important ingredient showing that we can use the discrete \Frechet distance \citep{eiter1994computing} as a $k$-approximation for $k$-DTW. The proof is given in \cref{sec:app:proofs}.
\begin{lemma}
\label{lem:kdtw:kapprox}
    Given two curves $\sigma=(v_1,\ldots, v_{m'})$ and $\tau=(w_1,\ldots,w_{m''})$, and a parameter $k$, it holds that $\distDF{\sigma}{\tau}$ is a $k$-approximation for $\distkDTW{\sigma}{\tau}$, which can be computed in time $O(m'm'')$. In particular, it holds that $\distDF{\sigma}{\tau} \leq \distkDTW{\sigma}{\tau} \leq k\cdot \distDF{\sigma}{\tau}$.
\end{lemma}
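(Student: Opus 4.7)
The plan is to establish the two inequalities independently by exhibiting witness traversals, and then invoke the classical discrete Fréchet algorithm for the running-time claim.

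First I would prove the lower bound $\distDF{\sigma}{\tau} \leq \distkDTW{\sigma}{\tau}$. Let $T^\star \in \mathcal{T}$ be a traversal attaining the $k$-DTW optimum, with sorted matched distances $s_1^{(T^\star)} \geq s_2^{(T^\star)} \geq \cdots \geq s_{|T^\star|}^{(T^\star)} \geq 0$. Since $T^\star$ is one particular traversal, the definition of the discrete Fréchet distance as a min-max over $\mathcal{T}$ gives $\distDF{\sigma}{\tau} \leq \max_{(i,j) \in T^\star} \distE{v_i}{w_j} = s_1^{(T^\star)}$. Because all $s_l^{(T^\star)}$ are non-negative, $s_1^{(T^\star)} \leq \sum_{l=1}^{k} s_l^{(T^\star)} = \distkDTW{\sigma}{\tau}$, which yields the lower bound.

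Next I would prove the upper bound $\distkDTW{\sigma}{\tau} \leq k \cdot \distDF{\sigma}{\tau}$ by a symmetric argument using a Fréchet-optimal traversal. Let $T^F \in \mathcal{T}$ attain the discrete Fréchet distance, so $\max_{(i,j) \in T^F} \distE{v_i}{w_j} = \distDF{\sigma}{\tau}$, i.e., $s_1^{(T^F)} = \distDF{\sigma}{\tau}$. Since $k$-DTW is a min over $\mathcal{T}$, using $T^F$ as a candidate gives
\begin{equation*}
\distkDTW{\sigma}{\tau} \leq \sum_{l=1}^{k} s_l^{(T^F)} \leq k \cdot s_1^{(T^F)} = k \cdot \distDF{\sigma}{\tau},
\end{equation*}
where the second inequality holds because each $s_l^{(T^F)}$ is at most the largest one.

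Finally, for the running time, I would simply cite the standard dynamic programming algorithm for the discrete Fréchet distance due to \citet{eiter1994computing}, which fills an $m' \times m''$ table in $O(m'm'')$ time. There is no real obstacle in this proof; both inequalities follow immediately from using a single traversal as a witness on one side of each min, and the only subtlety worth stating explicitly is that when $k > |T^F|$ the extra summands $s_l^{(T^F)}$ are defined to be $0$ by \Cref{def:kdtw}, so the inequality $\sum_{l=1}^{k} s_l^{(T^F)} \leq k \cdot s_1^{(T^F)}$ remains valid in that corner case.
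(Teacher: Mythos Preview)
Your proof is correct and follows essentially the same approach as the paper: both arguments use the $k$-DTW-optimal traversal as a witness for the Fréchet minimum to get the lower bound, and the Fréchet-optimal traversal as a witness for the $k$-DTW minimum to get the upper bound, then cite \citet{eiter1994computing} for the running time. The paper presents the chain of inequalities in a single line, while you split it into two parts and make the corner case $k > |T^F|$ explicit, but the underlying logic is identical.
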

Leveraging \cref{lem:kdtw:kapprox}, we obtain our $(1+\eps)$-approximation result, which quantifies the trade-off between the approximation factor and reducing $z$ to a logarithmic amount, achieving almost quadratic running time.

\begin{theorem}
    \label{col:approx:kdtw}
    Given two curves $\sigma$ and $\tau$, $|\sigma|=m'$ and $|\tau|=m''$, and a parameter $k$, there exists a $(1+\eps)$-approximation algorithm for $k$-DTW for any $0 < \eps \leq 1$ that runs in $O(m'm''\frac{\log(k/\eps)}{\eps})$ time.
\end{theorem}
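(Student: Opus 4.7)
The plan is to modify \cref{alg:compute:exact} so that instead of iterating over all $z+1$ distinct entries of the distance matrix, it iterates over a geometric grid of only $O(\log(k/\eps)/\eps)$ carefully chosen candidate values for $E$. The search range is bootstrapped using the discrete \Frechet distance $F=\distDF{\sigma}{\tau}$, which by \cref{lem:kdtw:kapprox} is a $k$-approximation to $\distkDTW{\sigma}{\tau}$ computable in $O(m'm'')$ time.

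Concretely, I would first compute $F$; if $F=0$ we return $0$. Otherwise set $J = \lceil \log_{1+\eps}(k/\eps) \rceil = O(\log(k/\eps)/\eps)$ and grid points $E_j = (\eps F/k)(1+\eps)^j$ for $j = 0, 1, \ldots, J$, which cover the interval $[\eps F/k,\, F]$. For each $E_j$ I run the inner step of \cref{alg:compute:exact}, producing $c_j := \mathrm{DTW}(D'_j) + k E_j$ with $D'_j[i,l] = \max\{D[i,l] - E_j,\, 0\}$, and return $\min_j c_j$. The running time is immediate: $O(J)$ standard DTW computations of cost $O(m'm'')$ each, totalling $O(m'm''\log(k/\eps)/\eps)$.

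For correctness, the lower bound $\min_j c_j \geq \distkDTW{\sigma}{\tau}$ is inherited from the analysis of \cref{thm:computing:kdtw}: a short case split on the number of entries of $T$ that exceed $E$ shows that $g_T(E) := \sum_{(i,j) \in T}\max\{D[i,j]-E,\,0\} + kE \geq \sum_{l=1}^{k} s_l^{(T)} \geq \distkDTW{\sigma}{\tau}$ for every traversal $T$ and every $E\geq 0$. For the upper bound, let $T^*$ be an optimal traversal with top-$k$ distances $s_1 \geq \cdots \geq s_k$ and $\mathrm{OPT} = \sum_{l=1}^k s_l$. Two structural facts are used: $\mathrm{OPT} \geq s_1 \geq F$ (every traversal of $\sigma,\tau$ has some matched distance $\geq F$) and $k\, s_k \leq \mathrm{OPT}$. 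I then split into two cases. If $s_k > \eps F/k$, some $E_j$ satisfies $s_k \leq E_j \leq (1+\eps)s_k$, and the piecewise-linear form of $g_{T^*}$ yields $g_{T^*}(E_j) - \mathrm{OPT} \leq k(E_j - s_k) \leq \eps\, k\, s_k \leq \eps\, \mathrm{OPT}$. If $s_k \leq \eps F/k$, then using $E_0 = \eps F/k$ gives $g_{T^*}(E_0) - \mathrm{OPT} \leq k E_0 = \eps F \leq \eps\, \mathrm{OPT}$. In either case $\min_j c_j \leq g_{T^*}(E_j) \leq (1+\eps)\mathrm{OPT}$.

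The main (mild) obstacle is the case split in the upper bound: $g_{T^*}(E)$ is piecewise linear in $E$ with its minimum $\mathrm{OPT}$ attained at $E = s_k$, so a purely multiplicative grid cannot cope with $s_k$ arbitrarily close to $0$; the additive lower endpoint $\eps F/k$, combined with the bootstrap inequality $\mathrm{OPT} \geq F$, is precisely what is needed to absorb the small-$s_k$ regime into an $\eps\,\mathrm{OPT}$ additive error. With this in place the approximation ratio and running time follow directly.
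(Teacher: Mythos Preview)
Your proof is correct and achieves the stated bound, but it takes a slightly different route from the paper. The paper rounds the \emph{data}: it replaces every nonzero entry of the distance matrix $D$ by the next power of $(1+\eps')$ above a floor $d_{\min}=\eps' F/k$, argues that this perturbs the cost of any traversal by at most a $(1+\eps)$ factor, and then runs the \emph{exact} \cref{alg:compute:exact} on the rounded matrix, which now has only $O(\log(k/\eps)/\eps)$ distinct entries. You instead keep $D$ exact and round the \emph{search parameter}: you restrict the threshold $E$ to a geometric grid on $[\eps F/k,\,F]$ and analyze $g_{T^*}(E)$ directly as a piecewise-linear function of $E$, showing that some grid point is within $\eps\,\mathrm{OPT}$ of the minimum. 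Your argument is arguably cleaner---it avoids the two-stage error analysis (additive from the floor, multiplicative from the rounding) and never modifies the input---while the paper's version has the mild conceptual advantage that it literally reduces to an exact call of \cref{alg:compute:exact} on a smaller instance. Both rely on the same two bootstrap facts from \cref{lem:kdtw:kapprox}, namely $\mathrm{OPT}\geq F$ and $s_k\leq \mathrm{OPT}/k\leq F$, to anchor the grid.
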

The proof is given in \cref{sec:app:proofs}, and we just provide some intuition here. We first compute the discrete \Frechet distance $d_{dF}$. By \cref{lem:kdtw:kapprox}, this $k$-approximation for $k$-DTW allows to round up very small non-zero distances in the distance matrix to $\frac{\eps\cdot d_{dF}}{k}$, which increases the cost by at most $k\cdot\frac{\eps\cdot d_{dF}}{k} \leq \eps\cdot d_{\textit{k-DTW}}$.
We can also omit iterations where $E[l]>d_{dF}$, because the cost in iteration $l$ is lower bounded by $k\cdot E[l] > k \cdot d_{dF} \geq d_{\textit{k-DTW}}$.
The remaining distances in $E[1..z]$ are rounded to their next power of $1+\eps$. This way, no solution becomes cheaper. At the same time, the cost of any solution can increase by at most a factor of $1+\eps$, and the number of distinct distances is bounded by
\[
z \in O(\log_{1+\eps}(k/\eps)) = O(\log(k/\eps)/\eps).
\]

We additionally note that the Fréchet distance cannot be approximated within a factor less than 3 in truly sub-quadratic time unless the Strong Exponential Time Hypothesis (SETH) fails \citep{BuchinOS19}. Hence, computing a good approximation for $k$-DTW has a similar complexity as computing a good approximation for the Fréchet distance.

\section{Dimension-Free and Improved Learning Theory via \texorpdfstring{$k$}{k}-DTW}
We now showcase the benefits of using $k$-DTW for learning the median of curves. We are given a distribution $\mathcal{D}$ over polygonal curves of complexity $m$ with vertices in the unit Euclidean ball $B_2^d\subseteq\mathbb{R}^d$. We define $\text{cost}(\mathcal{D},\psi):=\int_{\sigma}d(\sigma,\psi)\mathbb{P}[\sigma]~d\sigma$.

Further, let 
    $OPT_{\mathcal{D}}:= \min_{\psi}\text{cost}(\mathcal{D},\psi).$
We call the curve $\psi_{\mathcal{D}}$ inducing the optimal objective $OPT_{\mathcal{D}}$ the \emph{median curve} with respect to $\mathcal{D}$.
Given a sample $P$ of $n$ curves drawn independently and identically distributed from $\mathcal{D}$, we denote by $\psi_{P} := \text{argmin}_{\psi} \text{ cost}(P,\psi)=\text{argmin}_{\psi}\sum_{\sigma\in P} 
d(\sigma,\psi)$ the empirical risk minimizer. 
Then the generalization error, or the \emph{excess risk} of $\psi_{P}$ is
\[\mathcal{E}:=\text{cost}(\mathcal{D},\psi_{P})-OPT_{\mathcal{D}}.\]
We are interested in bounding the decrease of $\mathcal{E}$ as a function of the sample size $n$ and problem parameters such as the length of the curves, the ambient dimension, or the new $k$-DTW parameter $k$.

A common way of bounding the generalization error is by means of bounding the Rademacher and Gaussian complexities, 
\begin{align*}
\mathcal R(P):=\mathbb{E} \sup_{\psi}\left\vert \frac{1}{n}\sum\nolimits_{i=1}^n d(\sigma_i,\psi)r_i\right\vert\qquad \text{and}\qquad
\mathcal G(P):=\mathbb{E} \sup_{\psi}\left\vert \frac{1}{n}\sum\nolimits_{i=1}^n d(\sigma_i,\psi)g_i\right\vert,
\end{align*}
where $r_i\in\{-1,1\}$ are independent Rademacher random variables and $g_i\sim N(0,1)$ are independent standard Gaussians.
It is well-known \citep{BartlettM02} for ${P\sim\mathcal D}$ that there exist absolute constants $c_1<c_2$ such that 
$$\mathbb{E}_{P} \mathcal E\leq c_1 \mathbb{E}_{P}\mathcal R(P) \leq c_2 \mathbb{E}_{P} \mathcal G(P).$$ Thus, it suffices to bound the Gaussian complexity to get the same bound for the Rademacher complexity and the excess risk as well, up to absolute constants. 

Before we continue with the main results and their analysis, we would like to point out that the main takeaway message is two-fold: first, to our knowledge, our result gives the first dimension-free bounds for learning clustering of curves, specifically their median, with extensions to $p$-median to be detailed later. Previous bounds for either DTW \citep{ConradiKPR2024} or \Frechet \citep{DriemelNPP21} rely on their respective VC dimension, which is conceptually prone to at least linear dependence on the ambient dimension $d$. Second, while the generalization to the novel $k$-DTW follows in a natural and direct way from our result on DTW, in fact, taking this step together with a lower bound on DTW reveals a non-obvious complexity separation when $k$ is small enough. More specifically, our results imply that $k$-DTW admits strictly smaller Rademacher and Gaussian complexities than DTW.
With this background, we are now ready to state our main learning theoretic results.
\begin{theorem}
\label{thm:gen*}
    Let $\mathcal{D}$ be an arbitrary distribution on curves of complexity $m$ supported over the unit Euclidean ball. Let $P$ be an i.i.d. sample from $\mathcal D$ of size $|P|=n$. Then the Rademacher and Gaussian complexities 
    for learning the median curve of complexity $m$ are bounded above by
    \begin{align*}
        \mathcal R(P),\mathcal G(P)\in {O}\left(\sqrt{\frac{m^3 \cdot \min\{d\log m,m^2\log^4 (mn)\}}{n}}\right)
    \end{align*}
    under DTW and
    \begin{align*}
        \mathcal R(P),\mathcal G(P)\in {O}\left(\sqrt{\frac{m k^2\cdot \min\{d\log k,k^2\log^4 (mn)\}}{n}}\right)
    \end{align*}
    under $k$-DTW.
\end{theorem}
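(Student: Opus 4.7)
My plan is to bound the Gaussian complexity $\mathcal G(P)$ via Dudley's entropy integral applied to the centered Gaussian process $X_\psi \coloneqq \tfrac{1}{n}\sum_{i=1}^n g_i\, d(\sigma_i,\psi)$ indexed by $\psi\in(B_2^d)^m$, and then transfer the bound to $\mathcal R(P)$ through the standard comparison $\mathcal R(P)\leq c\cdot \mathcal G(P)$ recalled in the preamble. I treat DTW and $k$-DTW uniformly, since DTW is the $k=\Theta(m)$ special case of $k$-DTW.

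The first step is to establish two structural estimates on $\psi\mapsto d(\sigma,\psi)$: a Lipschitz constant and a range bound. For Lipschitzness, fix the optimal traversal $T^*$ for $(\sigma,\psi)$ and apply the Euclidean triangle inequality along $T^*$:
\[
|\distDT{\sigma}{\psi}-\distDT{\sigma}{\psi'}| \leq \sum_{(i,j)\in T^*}\distE{u_j}{u_j'} = \sum_{j=1}^m n_j\,\distE{u_j}{u_j'},
\]
with $n_j$ the multiplicity of index $j$ in $T^*$ and $\sum_j n_j=|T^*|\leq 2m-1$. Cauchy--Schwarz then yields a Frobenius-Lipschitz constant $L_{\mathrm{DTW}}=O(m)$. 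Applying the same reasoning only to the top-$k$ pairs of $T^*$ and using that the top-$k$ sum is $\sqrt{k}$-Lipschitz in its inputs gives $L_{k\text{-}\mathrm{DTW}}=O(k)$. Since all vertices lie in $B_2^d$ and traversals have length $O(m)$, the loss itself is bounded by $R=O(m)$ for DTW and $R=O(k)$ for $k$-DTW.

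Feeding these into Dudley's chaining produces the first (dimension-dependent) term of the $\min$. The Gaussian pseudo-metric satisfies $d_P(\psi,\psi')\leq L\|\psi-\psi'\|_F/\sqrt n$ with diameter $\tilde D=O(R/\sqrt n)$ from the range bound. Covering $(B_2^d)^m$ in the Frobenius norm at scale $\varepsilon\sqrt n/L$ yields $\log N(\Psi,d_P,\varepsilon)\leq md\log(cL\sqrt m/(\varepsilon\sqrt n))$. Evaluating $\int_0^{\tilde D}\sqrt{\log N}\, d\varepsilon$ and simplifying the logarithm at the relevant scale (a cover refinement exploiting that only $O(k)$ vertex coordinates actively drive the top-$k$ sum tightens $\log m$ to $\log k$ for $k$-DTW) produces $O(\sqrt{m^3 d\log m/n})$ and $O(\sqrt{m k^2 d\log k/n})$ respectively.

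For the second, dimension-free term in the $\min$, I would replace the Frobenius cover by a cover with respect to the empirical $L^\infty$-pseudo-metric on the sample. The idea is to discretize, at resolution $\mathrm{poly}(mn)^{-1}$, only the scalar quantities that actually enter the loss, namely the $O(m^2)$ pairwise Euclidean distances between vertices of $\sigma_i$ and of $\psi$ (respectively the $O(k)$ values selected into the top-$k$ for $k$-DTW), combined with a union bound over the polynomially many combinatorial choices of traversal and of top-$k$ subset. The resulting covering number scales polynomially in $m$ (respectively $k$) and only poly-logarithmically in $mn$, replacing $d\log m$ by $m^2\log^4(mn)$ and $d\log k$ by $k^2\log^4(mn)$ inside Dudley's integral. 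Taking the minimum of the two bounds finishes the proof. The dimension-dependent chaining is routine once the Lipschitz and range bounds are in place; the main obstacle is the dimension-free piece, where one must control a covering over the combinatorial traversal-plus-distance configuration space carefully enough to carry it through a two-level (generic) chaining argument that produces the $\log^4(mn)$ factor without any explicit $d$-dependence.
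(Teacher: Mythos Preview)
Your low-dimensional argument is essentially the paper's own: both cover $(B_2^d)^m$, transfer to a cost-vector covering via the Lipschitz bound $|d(\sigma,\psi)-d(\sigma,\psi')|\leq z\cdot\max_j\|u_j-u_j'\|$ (with $z=m$ for DTW, $z=k$ for $k$-DTW), and chain. The paper does this by discrete telescoping over $\ell_\infty$-nets of the cost-vector space $V_P$; you frame it as Dudley's integral over Frobenius nets of $(B_2^d)^m$. The only difference is the choice of norm on the vertex tuple, which affects at most a logarithm. Your ``cover refinement exploiting that only $O(k)$ vertex coordinates drive the top-$k$ sum'' is unnecessary if you work in the $\ell_\infty$-over-vertices metric from the start, since the $k$-DTW Lipschitz constant in that metric is exactly~$k$.

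The dimension-free half has a genuine gap. You propose to discretize the $O(m^2)$ pairwise distances and union-bound over ``the polynomially many combinatorial choices of traversal and of top-$k$ subset'', but the number of traversals of two length-$m$ curves is of order $\binom{2m}{m}$, i.e.\ exponential in $m$, and the number of size-$k$ subsets of a traversal is $\binom{|T|}{k}$, exponential in $k$; so that union bound cannot yield a polynomial-in-$m$ covering. More fundamentally, you give no reason why the set of realizable distance configurations $\{(\|v_{i,a}-u_b\|)_{i,a,b}:\psi\in(B_2^d)^m\}\subset[0,2]^{nm^2}$ should have a covering number independent of $d$: the image of this map is an $md$-dimensional set, and nothing in your sketch removes the $d$. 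The paper's mechanism is entirely different and is the missing key idea. It applies a \emph{terminal embedding} (Narayanan--Nelson) to the $nm$ sample vertices, sending $\mathbb{R}^d\to\mathbb{R}^h$ with $h\in O(z^2\varepsilon^{-2}\log(nm))$ while preserving $\|p-q\|$ up to a $(1\pm\varepsilon/z)$ factor for every sample vertex $p$ and \emph{every} $q\in\mathbb{R}^d$; this preserves DTW and $k$-DTW up to additive $\varepsilon$, so one simply re-applies the low-dimensional net with $d$ replaced by $h$. The extra $\varepsilon^{-2}$ in the resulting net size prevents the chaining series from summing geometrically, so the paper truncates at $j_{\max}=O(\log n)$ equal-size levels and bounds the tail crudely by Cauchy--Schwarz; this is precisely the origin of the $\log^4(mn)$ factor.
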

As a direct consequence, the excess risk of $\psi_P$ is bounded with constant probability by applying Markov's inequality. A high probability bound is possible using a sample complexity of $\tilde{O}(\mathcal{R}(P) + \sqrt{{\log(1/\delta)}/{n}})$, see \citep{BartlettM02}.

We complement this by the following lower bounds on the Rademacher and Gaussian complexity.
\begin{proposition}
\label{prop:lbrad*}
    Let $\mathcal{D}$ be an arbitrary distribution on curves of complexity $m$ supported over the unit Euclidean ball. Let $P$ be an i.i.d. sample from $\mathcal D$ of size $|P|=n$. Then the Rademacher and Gaussian complexities for learning the median curve of complexity $m$ under DTW satisfy
        $\mathcal R(P),\mathcal G(P) \in \Omega(\sqrt{{m^2}/{n}})$.
\end{proposition}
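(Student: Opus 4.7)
The plan is to prove a distribution-free lower bound: for \emph{any} sample $P$ with vertices in the unit ball, I exhibit two explicit candidate median curves, show that at least one of them has large squared-distance mass against $P$, and then convert this into the claimed $\Omega(\sqrt{m^2/n})$ bound via Khintchine-type inequalities for Rademacher and Gaussian sums. In particular I do not need to construct a bad distribution; the bound will hold for any realization of $P\sim\mathcal D$ and is therefore distribution-free, as the proposition demands.

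I would introduce the two ``polar'' constant curves $\psi_+=(e_1,\ldots,e_1)$ and $\psi_-=(-e_1,\ldots,-e_1)$, each of complexity $m$ with all vertices on the unit sphere. For any $\sigma=(v_1,\ldots,v_m)$ in the unit ball, the DTW distance against a constant curve $\psi=(p,\ldots,p)$ equals $\sum_{i=1}^m \|v_i-p\|$: the diagonal traversal of length $m$ attains this sum, and any non-diagonal traversal only revisits some index $i$, adding a further non-negative contribution $\|v_i-p\|$. The triangle inequality $\|v_i-e_1\|+\|v_i+e_1\|\geq \|2e_1\|=2$, summed over $i$, therefore yields the key bound
\[
\distDT{\sigma}{\psi_+}+\distDT{\sigma}{\psi_-}\geq 2m,
\]
valid for every unit-ball curve $\sigma$ and independent of $\mathcal D$.

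The elementary inequality $a^2+b^2\geq \tfrac12(a+b)^2$ upgrades this to $\distDT{\sigma_i}{\psi_+}^2+\distDT{\sigma_i}{\psi_-}^2\geq 2m^2$ per sample point, so summing over $i\in[n]$ gives $\sum_i \distDT{\sigma_i}{\psi_+}^2+\sum_i \distDT{\sigma_i}{\psi_-}^2\geq 2nm^2$. Choosing (based on $P$ only) $\psi^\ast\in\{\psi_+,\psi_-\}$ as the one realizing the larger of the two sums therefore guarantees $\sum_i \distDT{\sigma_i}{\psi^\ast}^2\geq nm^2$. Since the supremum over $\psi$ inside $\mathcal R(P)$ and $\mathcal G(P)$ is at least the value attained at this single $\psi^\ast$, the Khintchine lower bound $\mathbb{E}_r|\tfrac1n\sum_i r_i a_i|\geq (c/n)\sqrt{\sum_i a_i^2}$, together with its well-known Gaussian counterpart $\mathbb{E}_g|\tfrac1n\sum_i g_i a_i|\geq (c'/n)\sqrt{\sum_i a_i^2}$, applied with $a_i=\distDT{\sigma_i}{\psi^\ast}$, yields $\mathcal R(P),\mathcal G(P)\geq \Omega(\sqrt{nm^2}/n)=\Omega(\sqrt{m^2/n})$, as claimed.

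The main conceptual obstacle is that for a general $\mathcal D$ the sample $P$ may cluster anywhere in the unit ball, so no single candidate $\psi$ is guaranteed to be far from all of $\sigma_1,\ldots,\sigma_n$ simultaneously. The two-candidate trick, selecting the better of $\psi_+$ and $\psi_-$ \emph{after} seeing $P$ but \emph{before} drawing the random signs, circumvents this cleanly because $\psi_+$ and $\psi_-$ are jointly far from every unit-ball curve. The remaining technical step, verifying optimality of the diagonal traversal against the (degenerate) constant curves $\psi_\pm$, is routine, and so is the invocation of Khintchine for both $r_i\in\{-1,+1\}$ and $g_i\sim N(0,1)$.
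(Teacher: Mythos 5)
Your proof is correct, and it takes a genuinely different route from the paper's. The paper argues by reduction: it recalls that the median problem for points (complexity-$1$ curves) has Rademacher/Gaussian complexity $\Omega(\sqrt{1/n})$ via tightness of Chernoff bounds, and then observes that duplicating every point $m$ times turns it into a complexity-$m$ curve whose DTW cost vector is scaled by exactly $m$, so linearity of the complexities gives $\Omega(\sqrt{m^2/n})$. That argument is shorter but, as written, only exhibits the bound on instances supported on duplicated-point curves — an existential lower bound, which is all that is needed for the complexity separation against $k$-DTW. Your argument is self-contained and proves the stronger, literally stated universal claim: for \emph{every} sample $P$ of unit-ball curves, picking the better of the two antipodal constant candidates $\psi_{\pm}=(\pm e_1,\dots,\pm e_1)$ forces $\sum_i d_{DTW}(\sigma_i,\psi^\ast)^2\geq nm^2$ (using that any traversal against a constant curve visits every index of $\sigma$ at least once, so $d_{DTW}(\sigma,\psi_{+})+d_{DTW}(\sigma,\psi_{-})\geq 2m$), and Khintchine / the exact Gaussian first absolute moment then yield $\Omega(m/\sqrt{n})$. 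The only points worth spelling out in a final write-up are the verification that the diagonal traversal is optimal against a constant curve (you sketch this correctly) and the remark that $\psi^\ast$ is selected as a function of $P$ alone, before the signs are drawn, so that lower-bounding the supremum by a single fixed $\psi^\ast$ is legitimate inside the expectation over $r$ (resp.\ $g$).
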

The proof in \cref{sec:app:proofs} is by reduction from the median problem for points.
Comparing the upper and lower bounds, we see that (for sufficiently small $k$) the Rademacher or Gaussian complexity for $k$-DTW is in $\tilde{O}(\sqrt{m/n})$, whereas the corresponding complexity for DTW is in $\Omega(\sqrt{m^2/n})$, showing a \emph{complexity separation} by at least a factor $\tilde\Omega(\sqrt{m})$.

We first prove generalization bounds with a dependence on $d$. A full proof of the dimension-free bound is given separately in \cref{sec:app:proofs}. For a point set $P$, we define $V_{P,DTW}$ to be the set of cost vectors defined by median curves in $B_2^d$ with respect to DTW and we define $V_{P,k\text{-}DTW}$ 
to be the set of cost vectors defined by median curves in $B_2^d$ with respect to $k$-DTW. 
That is, for any curve $\psi\subset B_2^d$, there exists a $v^{\psi}$ with $v^{\psi}_i = d(\sigma_i,\psi)$, $\sigma_i\in P$, where $d(\cdot,\cdot)$ denotes either DTW or $k$-DTW. Define a net $\mathcal{N}(V_P,\|.\|_{\infty},\varepsilon)$ as a set of vectors $N_{\varepsilon}$ such that for each $\psi\subset B_2^d$, there exists a vector $v'\in \mathcal{N}(V_P,\|.\|_{\infty},\varepsilon)$ with $|v'_i - v_i^{\psi}|\leq \varepsilon$ for all $i\in [|P|]$.
We start with the following lemma.
\begin{lemma}
\label{lem:simplenet}
For an absolute constant $c$, we have that
\begin{compactenum}[1)]
    \item $|\mathcal{N}(V_{P,DTW},\|.\|_{\infty},\varepsilon)|\leq \exp(c\cdot d\cdot m\cdot \log ({m}/{\varepsilon}))$ 
    \item $|\mathcal{N}(V_{P,k\text{-}DTW},\|.\|_{\infty},\varepsilon)|\leq \exp(c\cdot d\cdot m\cdot \log ({k}/{\varepsilon}))$. 
\end{compactenum}
\end{lemma}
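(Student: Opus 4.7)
The plan is to construct the cost vector net indirectly: first build an $\varepsilon'$-net $N_0$ of the unit ball $B_2^d$ in the Euclidean metric, then take as candidate median curves those whose $m$ vertices all lie in $N_0$, and finally choose $\varepsilon'$ small enough so that snapping each vertex of an arbitrary $\psi\subset B_2^d$ to its closest point in $N_0$ only perturbs each coordinate $d(\sigma_i,\psi)$ by at most $\varepsilon$. A standard volumetric argument gives $|N_0|\le (C/\varepsilon')^d$ for an absolute constant $C$, so the total number of candidate curves is at most $|N_0|^m \le \exp(C'\,d\,m\,\log(1/\varepsilon'))$.

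The technical heart is therefore a Lipschitz-type perturbation bound: if $\psi=(w_1,\dots,w_m)$ and $\tilde\psi=(\tilde w_1,\dots,\tilde w_m)$ satisfy $\|w_j-\tilde w_j\|\le\varepsilon'$ for every $j$, then for every fixed curve $\sigma_i$ and every fixed traversal $T$, each matched distance $\|v_a-w_b\|$ changes by at most $\varepsilon'$ under the triangle inequality. For DTW, the traversal length is at most $2m-1$, so summing gives $|d_{DTW}(\sigma_i,\psi)-d_{DTW}(\sigma_i,\tilde\psi)|\le (2m-1)\,\varepsilon'$ after taking the min over traversals (the min of Lipschitz functions is Lipschitz with the same constant). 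Choosing $\varepsilon'=\varepsilon/(2m)$ gives the desired $\ell_\infty$ approximation of the cost vector, and the count becomes $\exp(c\,d\,m\,\log(m/\varepsilon))$, yielding part~1.

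For part~2, I need to replace the bound $(2m-1)\varepsilon'$ by $k\,\varepsilon'$. Fix a traversal $T$ and let $s_1^{(T)}(\psi)\ge s_2^{(T)}(\psi)\ge\dots$ and likewise for $\tilde\psi$ be the sorted matched distances. Since each entry of the unsorted multiset changes by at most $\varepsilon'$, and sorting is $1$-Lipschitz in the $\ell_\infty$ sense, we get $|s_l^{(T)}(\psi)-s_l^{(T)}(\tilde\psi)|\le \varepsilon'$ for every $l$. Summing the top-$k$ then gives $|\sum_{l=1}^k s_l^{(T)}(\psi)-\sum_{l=1}^k s_l^{(T)}(\tilde\psi)|\le k\,\varepsilon'$, and taking the min over $T$ preserves this gap. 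Choosing $\varepsilon'=\varepsilon/k$ yields the $\ell_\infty$ approximation and produces $\exp(c\,d\,m\,\log(k/\varepsilon))$ candidate cost vectors, giving part~2.

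The main (and only slightly subtle) step is justifying the $1$-Lipschitz property of the sorted-order statistic under coordinate-wise $\varepsilon'$ perturbation; the rest is a routine epsilon-net counting together with the triangle-inequality argument for matched distances. A small subtlety worth noting in the writeup is that the sets of traversals achieving the minima for $\psi$ and $\tilde\psi$ need not coincide, but because $\min$ of $L$-Lipschitz functions is $L$-Lipschitz it is enough to bound the perturbation uniformly over a single fixed traversal, which is exactly what the two arguments above do.
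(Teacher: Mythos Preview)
Your proposal is correct and follows essentially the same approach as the paper: build an $\varepsilon'$-net of $B_2^d$, take all $m$-tuples of net points as candidate median curves, bound the per-match perturbation by $\varepsilon'$ via the triangle inequality, and then choose $\varepsilon'=\varepsilon/(2m)$ for DTW and $\varepsilon'=\Theta(\varepsilon/k)$ for $k$-DTW. Your explicit appeal to the $\ell_\infty$-Lipschitz property of sorting and to the fact that the minimum of $L$-Lipschitz functions is $L$-Lipschitz makes the $k$-DTW step slightly more careful than the paper's own phrasing, but the argument is otherwise identical.
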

\begin{proof}
Let $\varepsilon'$ be a constant depending on $\varepsilon$ and $m$, to be determined later. Consider an $\varepsilon'$-net with respect to the $d$-dimensional unit Euclidean ball, that is a covering of $B_2^d$ with Euclidean balls of radius $\varepsilon'$. It is well known that such a covering $C$ using $\left(1+\frac{2}{\varepsilon'}\right)^d$ balls exists \citep{Pis99} and can be represented by the set of centers of each ball. 
We consider all subsets of centers $S\subseteq C$ of size $|S|=m$, inducing an approximate curve $\psi'$. We claim that for an appropriate choice of $\varepsilon'$, the cost vectors induced by the curves $\psi'$ define the desired net. The number of these subsets is upper bounded by $\exp(c\cdot d m \log (1/\varepsilon'))$.
Now, we first prove correctness, then reconsider the resulting size. We first observe that for every length $m$ curve $\psi$, there exists a $\psi'$ such that the $j$th vertex of $\psi$ is within distance $\varepsilon'$ of the $j$th vertex of $\psi'$ for each $j\in [m]$, by the properties of $C$. Second, we observe that for any candidate curve $\sigma$ of length at most $m$ and any traversal $T$ of $\sigma$ and $\psi$, we have for all $l\in[|T|]$ that $|s_l^{T}-s_l^{T\{\psi'\}}|\leq \varepsilon'$ by the triangle inequality, where ${T\{\psi'\}}$ is the traversal $T$ applied to $\sigma$ and $\psi'$ instead of $\psi$.
Thus, we have that $\sum\nolimits_{l=1}^{|T|} |s_l^{T}-s_l^{T\{\psi'\}}| \leq 2m\varepsilon'$. Since this holds for all traversals, it also holds in particular for the optimal traversal. Thus, setting $\varepsilon' = \varepsilon/(2m)$ implies that $|d_{DTW}(\sigma,\psi) - d_{DTW}(\sigma,\psi')|\leq \varepsilon$. Rescaling the net accordingly, we obtain $\exp(c\cdot d\cdot m\cdot \log (m/\varepsilon))$.

To extend the argument for $k$-DTW, we merely observe that $\varepsilon'$ only has to be rescaled by a factor $k$, yielding the appropriate bound of $\exp(c\cdot d\cdot m\cdot \log (k/\varepsilon))$.
\end{proof}

The nets provided by \cref{lem:simplenet} already allow us to prove the part of \cref{thm:gen*} that comes with a dependence on $d$.
\begin{proof}[Proof of Theorem \ref{thm:gen*} for low dimensions] We start by introducing the so-called \emph{chaining} technique.
Consider an $n$-dimension vector $v^{\psi}\in V_P$. Furthermore, let $v^{\psi,j}\in \mathcal{N}(V_P,\|.\|_{\infty},2^{-j})$ be the vector with $\|v^{\psi}-v^{\psi,j}\|_{\infty}\leq 2^{-j}$ given by \cref{lem:simplenet}. Then we may write $v^{\psi}$ as a telescoping sum with $v^{\psi,0}=0$
\[v^{\psi} = \sum\nolimits_{j=0}^{\infty} v^{\psi,j+1}-v^{\psi,j}.\]

We have due to the triangle inequality 
\begin{align*}
\mathcal G(P)\coloneqq \mathbb{E} \sup_{\psi}\left\vert \frac{1}{n}\sum\nolimits_{i=1}^n d(\sigma_i,\psi)g_i\right\vert 
&=\mathbb{E} \sup_{\psi}\left\vert \frac{1}{n}\sum\nolimits_{i=1}^n v^{\psi}_i\cdot g_i\right\vert
\\
&=\mathbb{E} \sup_{\psi}\left\vert \frac{1}{n}\sum\nolimits_{i=1}^n \sum\nolimits_{j=0}^{\infty}(v^{\psi,j+1}_i-v^{\psi,j}_i)g_i\right\vert 
\\
&\leq \sum\nolimits_{j=0}^{\infty}\mathbb{E} \sup_{\psi} \left\vert \frac{1}{n} \sum\nolimits_{i=1}^n (v^{\psi,j+1}_i-v^{\psi,j}_i)g_i\right\vert .
\end{align*}

To bound this quantity, observe that $\sum\nolimits_{i=1}^n (v^{\psi,j+1}_i-v^{\psi,j}_i) g_i$ is distributed as a Gaussian $g$ with mean $0$ and its variance $\varsigma^2_j\coloneqq\sum\nolimits_{i=1}^n (v^{\psi,j+1}_i-v^{\psi,j}_i)^2$ is bounded above by
\begin{align*}
 n\cdot \max\,&d(\sigma_i,\psi)^2 
 \in \begin{cases}
    O(n\cdot m^2) & \text{for } j=0, \text{DTW} \\
    O(n\cdot k^2) & \text{for } j=0, k\text{-DTW}\\
    O(n\cdot 2^{-2j}) & \text{otherwise.}
\end{cases}
\end{align*}
Thus, using the bounds from Lemma \ref{lem:simplenet} and letting $z=m$ for DTW and $z=k$ for $k$-DTW, we have that
\begin{align*}
\sum\nolimits_{j=0}^{\infty}\mathbb{E} \sup_{\psi} &\left\vert \frac{1}{n} \sum\nolimits_{i=1}^n (v^{\psi,j+1}_i-v^{\psi,j}_i)g_i\right\vert \\
~&\leq \frac{1}{n} \sum\nolimits_{j=0}^{\infty} \sqrt{\varsigma^2_j \log(2|\mathcal N(V_P,\|.\|_{\infty},2^{-j})|)} \\
&\leq \frac{1}{n} \sqrt{c\cdot n\cdot d\cdot m\cdot z^2 \log z} + \frac{1}{n}\sum\nolimits_{j=1}^{\infty}\sqrt{c\cdot n\cdot d \cdot m\cdot 2^{-2j}\cdot \log (z2^j)} \\
& \leq \sqrt{\frac{c'\cdot d\cdot m\cdot z^2 \log z}{n}}
\end{align*}
where $c,c'$ are absolute constants. 
Combining both bounds yields the theorem.
\end{proof}

Using a more involved application of the chaining technique, we are able to remove the dependence on $d$, at the cost of an increased dependence on $m$ and $k$. Our proof uses terminal embeddings \citep{NarayananN19} as a dimensionality reduction technique. Subsequently, we construct $\epsilon$-nets akin to \cref{lem:simplenet} within the reduced dimension that is independent of $d$. The convergence argument for the infinite chaining series that worked in the case of low $d$ still applies in a similar, yet adapted way to bound all but the first $j_{\max}=O(\log(n))$ summands. For the remaining terms, we observe that the number of summands is bounded, and each contribution can be bounded again in terms of the net size, which is now independent of $d$. Unfortunately, the vanishing geometric progression $2^{-j}$ cancels with size parameters of the $\eps$-nets, but the additional factor is only $\log(2^{j_{\max}})=O(\log(n))$, which is affordable in our context.

Finally, we wish to remark that using standard techniques \citep[e.g. the result by][]{FR19}, these ideas straightforwardly extend to $p$-clustering objectives, generalizing $p$-median to the appropriate problem variant on curves. More specifically, combining our results for the generalization error and Gaussian complexity for median curves with the main result of \citet{FR19} yields a learning rate of $\tilde{O}(\sqrt{p}\cdot \mathcal{G}(P))$. In particular, for the special case of $m=1$, this recovers the optimal bounds recently proven by \citet{BucarelliLST23} for $p$-median clustering of points, up to polylogarithmic factors.

\section{Experimental Illustration}
\label{sec:main:experiments}

In this section, we conduct experiments using our novel $k$-DTW dissimilarity measure to understand the effect of the increased robustness compared to the \Frechet distance and the improved relaxed triangle inequality compared to DTW in practice.\footnote{Our Python code is available at \url{https://github.com/akrivosija/kDTW}.}
To this end, we create a synthetic data set that highlights the properties of the different dissimilarity measures in agglomerative clustering. Subsequently, we perform $l$-nearest neighbor classification on several real-world data sets. 

\begin{figure}[!ht]
\begin{center}
\includegraphics[width=0.325\linewidth,page=1]{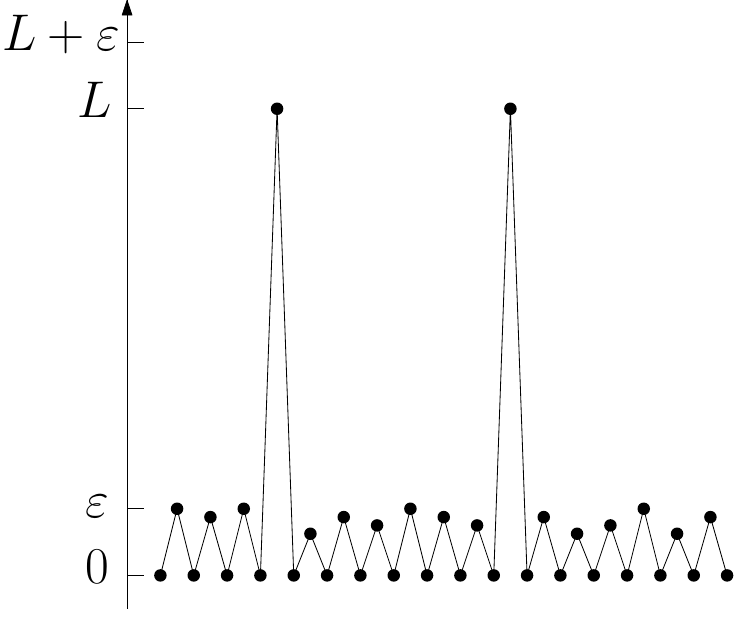}
\includegraphics[width=0.325\linewidth,page=2]{Figures/ExampleCurves.pdf}
\includegraphics[width=0.325\linewidth,page=3]{Figures/ExampleCurves.pdf}
\end{center}
\vskip -0.1in
\caption{Curves of type $A_2$ (left); type $B$ (middle); type $C$ (right)}
\label{fig:examplecurves}
\end{figure}
\vskip -0.1in
\begin{figure*}[ht!]
\begin{center}
\begin{tabular}{cccc}
{~}&
{\small\hspace{.5cm}\textsc{DTW}}&{\small\hspace{.5cm}\textsc{$k$-DTW}}&
{\small\hspace{.5cm}\textsc{\Frechet}}\\
\rotatebox{90}{\hspace{15pt}\textsc{Single linkage}}&
\includegraphics[height=0.19\linewidth]{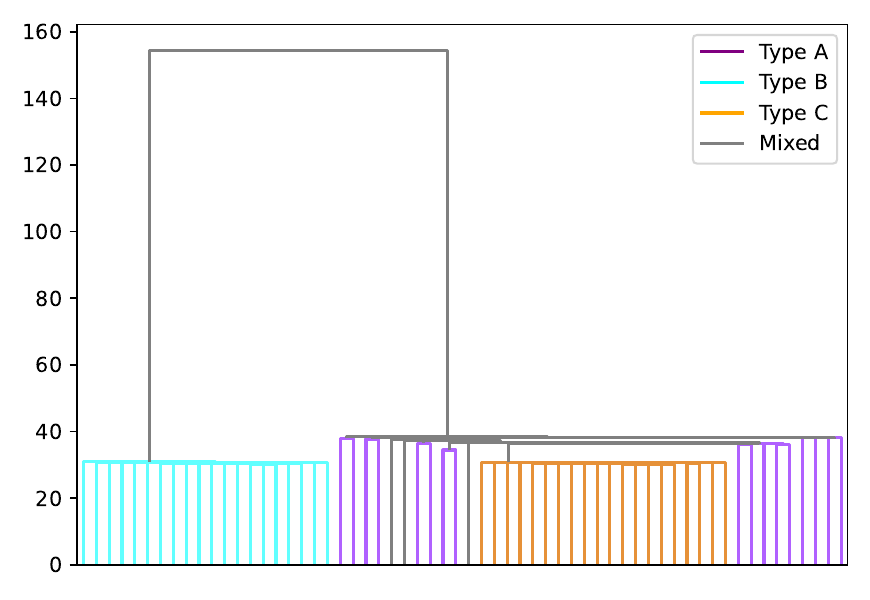}& 
\includegraphics[height=0.189\linewidth]{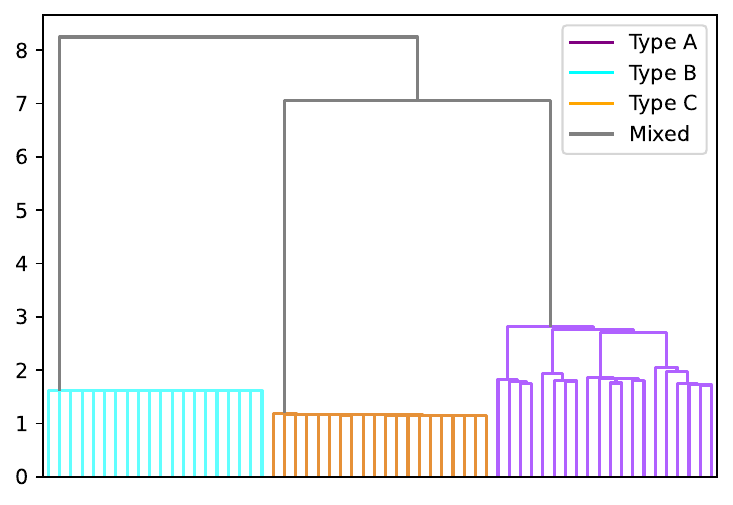}&
\includegraphics[height=0.19\linewidth]{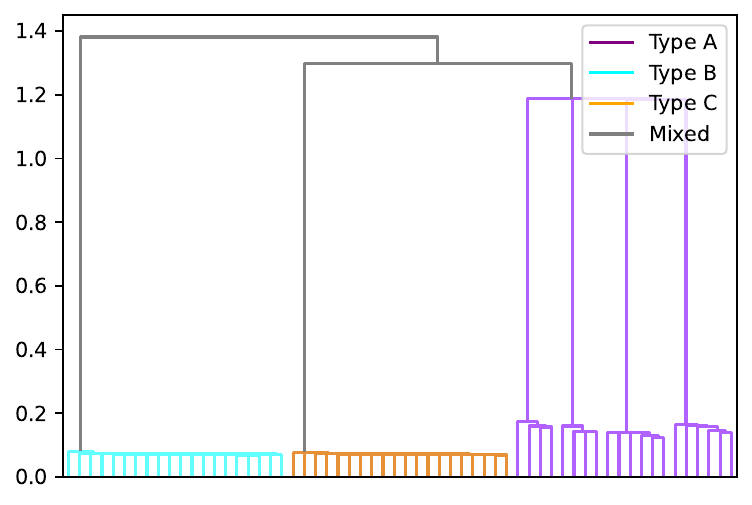}\\
\rotatebox{90}{\hspace{5pt} \textsc{Complete linkage}}&
\includegraphics[height=0.19\linewidth]{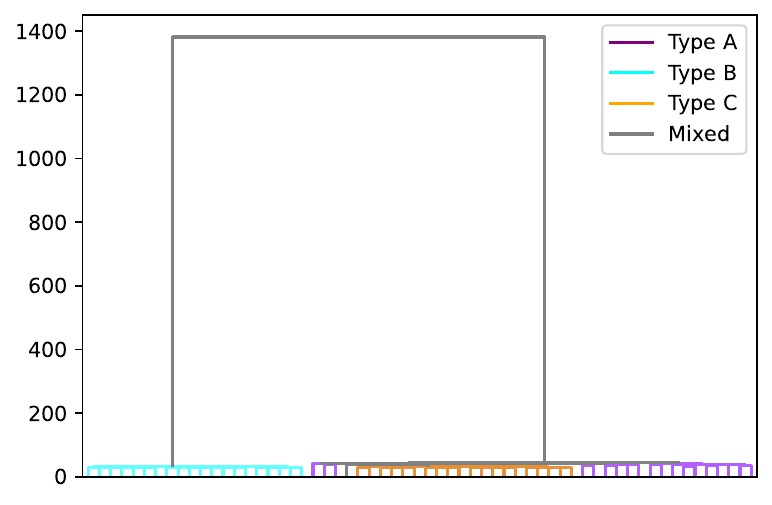}&
\includegraphics[height=0.189\linewidth]{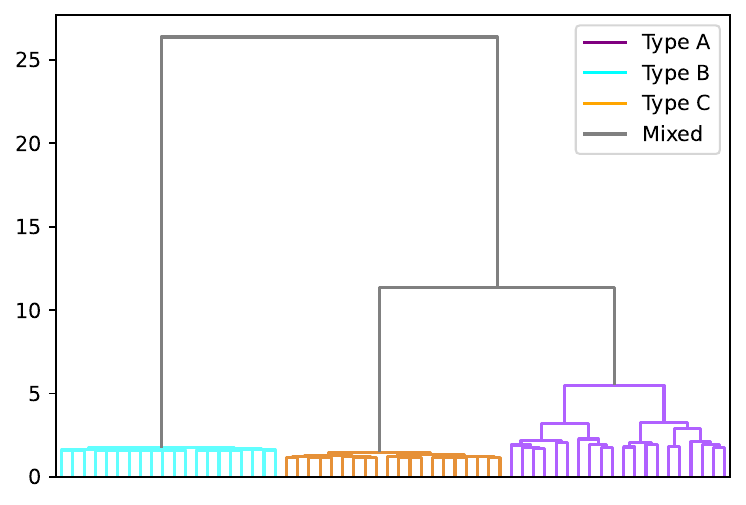}&
\includegraphics[height=0.19\linewidth]{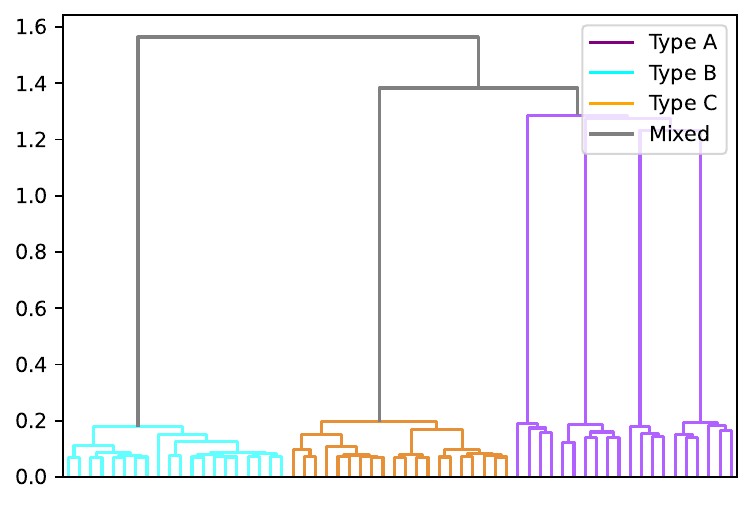}
\end{tabular}
\vskip -0.1in
\caption{Single (top) and complete (bottom) 
linkage clustering;  DTW (left), $k$-DTW (middle), \Frechet distance 
(right); synthetic data.
}
\label{fig:linkage:single}
\end{center}
\end{figure*}

\textbf{Synthetic curves.}
Each of the following curves with vertices in $\REAL$ start and end at $0$. We define three types of curves of complexity $4m+1$, for any $m\in \mathbb{N}$, see \cref{fig:examplecurves}:
\vspace{5pt}
\begin{compactenum}[A)]
    \item 
    Curves of type $A_{l}$ have $l$ ``peaks'' (outlier vertices), $l\in \mathbb{N}$. That is, a curve $\alpha\in A_l$ consists of
    \begin{compactitem}
        \item $2m+1$ vertices with value $0$, at indices $2i-1$ for $i\in [2m+1]$, 
        \item $2m-l$ ``small value'' vertices with value in $[0,\varepsilon]$ at indices $2i$ for $i\in [2m]$, and 
        \item  $l$ peaks of value $L$, at indices $2i$ for $i \in [2m]$.
    \end{compactitem}
    The peaks are arbitrarily chosen even indices, cf.~\cref{fig:examplecurves} (left).
    \vspace{5pt}
    \item 
     Curves of type $B$ have ``large values'' between$[L,L+\varepsilon]$ at all vertices except for the first and the last, cf.~\cref{fig:examplecurves} (middle). More precisely, a curve $\beta\in B$ consists of
    \begin{compactitem}
        \item 
        $2$ vertices with value $0$: $\beta_1$ and $\beta_{4m+1}$,
        \item 
        $2m-1$ vertices with value $L$ at indices $2i+1$ for all $i\in [2m-1]$, and
        \item 
        $2m$ vertices with value in $[L,L+\varepsilon]$ at indices $2i$, for all $i\in[2m]$.
    \end{compactitem}
    \vspace{5pt}
    \item 
     Curves of type $C$ have ``small values'' for all vertices, cf.~\cref{fig:examplecurves} (right). More precisely,
    \begin{compactitem}
        \item $2m+1$ vertices with value $0$ at indices $2i-1$ for all $i\in [2m+1]$, and
        \item $2m$ ``small value'' vertices with value in $[0,\varepsilon]$ at indices $2i$ for all $i\in [2m]$.
    \end{compactitem}
\end{compactenum}
\vspace{5pt}
We construct $60$ curves, $20$ of each type, that demonstrate the advantages of the $k$-DTW distance by choosing appropriate values of $m$, $\varepsilon$, $l$, and $L$: the main intuition is that these curves violate the triangle inequality by a $\Theta(m)$ factor for DTW, but only by a $k=O(\log m)$ factor for $k$-DTW. Simultaneously, the \Frechet distance suffers from outliers in type-$A_l$ curves for $l=O(1)$, while $k$-DTW is sufficiently robust to compensate spikes by summing over $O(\log m)$ distances.

\textbf{Agglomerative Clustering.}
Since computing a center based clustering for more than one curve is computationally hard even to approximate in most scenarios -- as already mentioned in the introduction -- 
we use a popular alternative called Hierarchical Agglomerative Clustering (HAC), which requires only the pairwise distances between the input data points.
The partition process starts with each curve being a singleton cluster. The current clusters are then iteratively merged in ascending order of dissimilarity. Given a dissimilarity measure $d(a,b)$ for two curves, such as \Frechet, DTW, or our novel $k$-DTW, the dissimilarity of two clusters $A,B$ is defined via so called \emph{linkage functions}. The arguably most popular linkage functions are \emph{single linkage} $d(A,B)=\min_{a\in A,b\in B} d(a,b)$ and \emph{complete linkage} $d(A,B)=\max_{a\in A,b\in B} d(a,b)$, where the distance of two clusters is defined as the minimum resp. maximum distance of two input data points taken from 
the two clusters \citep{KaufmanR90}.

In \cref{fig:linkage:single}~(top) we see the results for single linkage clustering using the three distance measures. DTW (left) clearly has difficulties to distinguish between type-$A_l$ and type-$C$ curves, since the pairs $(A_l, C)$ are very close,  $(A_l,B)$ are moderately close, while $(B,C)$ are far from each other, violating the triangle inequality. 
The \Frechet distance (right) has very large intra-cluster distances between type-$A_l$ curves that are of the same magnitude as the inter-cluster distances. 
$k$-DTW (middle) can clearly identify the clusters, as the triangle inequality is less affected, while being robust to the spikes of type-$A_l$ curves. As a result, we obtain pure clusters whose intra-cluster distances are reasonably small and clearly distinguished from the large inter-cluster distances.
The results for complete linkage are very similar, cf. 
\cref{fig:linkage:single}~(bottom).
\begin{table}[ht!]
    \centering
    \vskip -0.057in
     \caption{Performance of the Binary Classification of Curves from the Real-World OULAD Data~\citep{oula_dataset_349}, for \Frechet, DTW, and $k$-DTW.
    }
    \label{tab:classification:best}
    \vskip 0.05in
    \begin{tabular}{lccc}
\textbf{Distance} & \textbf{AUC} & \textbf{Accuracy} & $F_1$\textbf{-score}\\
\hline
Fr\'echet & 0.73737 & 0.83742 & 0.90953\\
64-DTW & 0.78795 & \textbf{0.85557} & \textbf{0.91796}\\
72-DTW & 0.79639 & 0.85520 & 0.91775\\
76-DTW & \textbf{0.79772} & 0.85469 & 0.91744\\
DTW & 0.78360 & 0.85459 & 0.91735\\
\end{tabular}
\end{table}

\clearpage
\textbf{\texorpdfstring{$l$}{l}-Nearest Neighbor Classification.}
The $l$-nearest neighbor ($l$-NN) model provides a well-known distance based classifier~\citep{devroye2013probabilistic}. 
We use real-world data from the Open University Learning Analytics Dataset (OULAD)~\citep{oula_dataset_349}
and the datasets from \citep{AghababaP23}.
OULAD provides clickstreams of students acting in an online learning management system. We represent $n=275$ clickstreams of one course as polygonal curves of complexity $m=294$. We use $l$-NN in order to predict the final exam result, which serves as a binary label indicating 'pass' or 'fail'. See~\cref{sec:app:experiments}.

We run a $100$ times repeated $6$-fold cross validation, using $l$-NN with a standard choice of $l=\lceil\sqrt{n}\rceil=17$ \citep[cf.][]{devroye2013probabilistic}. We perform an exponential search for the best parameter $k\in\lbrace 2^i\mid i\in[8]\rbrace$ for the $k$-DTW distance, and a subsequent finer search. The best results were obtained for $k\in\{64,76\}$, which amounts to roughly $20\%-25\%$ of the curves' complexity. 
A selection of results is given in \cref{tab:classification:best}. $k$-DTW outperformed the classification performance of \Frechet and DTW by a margin of up to $8.2\%$ resp. $1.8\%$, when measured by AUC. $k$-DTW also shows slight improvements in terms of accuracy and $F_1$-score. {$k=72$ provides the best values on average over AUC, accuracy, and $F_1$-score.} 
Please refer to \cref{tab:binary:classification:oulad} in \cref{sec:app:experiments}.

In addition to our explicit parameter search, cross-validated over the full data, we also perform a hold-out evaluation. The parameter $k$ is tuned by cross-validation on training data, and the best $k$ is used on a hold-out test dataset, for $l$-NN classification. Specifically, we randomly split the OULAD dataset \citep{oula_dataset_349} into training and test sets $A$ and $B$, respectively, where the test set comprises a $\frac{1}{3}, \frac{1}{4},$ or $\frac{1}{5}$ fraction of the input. We find the best value of $k$ on $A$ by running 100 independent repetitions of 6-fold cross-validation, using $l=\lceil \sqrt{|A|}\rceil$ for the $l$-NN training and evaluate it on $B$. Finally, we compare the performances of \Frechet, $k$-DTW and DTW distances on the test set $B$. The results are presented in \Cref{app:explicitk}.

We finally stress that no extensive search for $k$ is needed. In our experience, it suffices to compare $k$-DTW for few small values of $k\in\lbrace \ln(m), \sqrt{m}, m/10, m/4\rbrace$. At least one $k$-DTW variant yields best or close to the best performance across all classification performance measures. Simultaneously, the worst results are often attained using one of DTW or Fr\'echet, cf.~\cref{tab:binary:classification:phillips,tab:binary:classification:phillips2} in \cref{sec:app:experiments}. Thus, $k$-DTW yields the best compromise.

Additionally, we perform experiments using several related benchmark distance measures: weak discrete \Frechet distance~\citep{BuchinOS19}, edit distance with real penalty~\citep{ChenN04}, and two variants of partial DTW distance (there are multiple distance measures under the same name in the literature). We use \emph{partial window DTW}~\citep{sakoe1978dynamic}, which matches vertices of each curve only within a frame of bounded width $w$, and \emph{partial segment DTW}~\citep{tak2007leaf,luo2024accurate}, which partitions both curves into $L$ segments, each of which are matched via standard DTW.

We conclude that $k$-DTW has competitive performance to the gold standards DTW and \Frechet, and further baselines. It is mostly among the best performing distance measures, and sometimes outperforms the competitors.
It remains an important open problem to obtain better running times for computing the $k$-DTW distance. In practice, we can speed-up the computation using the DTW heuristic of \citet{SilvaB16}. However, solving the challenging problem of improving the top-$k$ optimization framework is an important open question. Its solution would imply provably faster algorithms for $k$-DTW, as well as for other top-$k$ optimization problems.

\clearpage

\section*{Acknowledgements}
We thank the anonymous reviewers of ICML 2025 for their valuable comments. We thank Felix Krall and Tim Novak for their help with implementations and experiments. 
Amer Krivo\v{s}ija was supported by the project “From Prediction to Agile Interventions in the Social Sciences (FAIR)” funded by the Ministry of Culture and Science MKW.NRW, Germany. 
Alexander Munteanu was supported by the German Research Foundation (DFG) - grant MU 4662/2-1 (535889065) and by the TU Dortmund - Center for Data Science and Simulation (DoDaS), and acknowledges additional support by the University of Cologne.
André Nusser was supported by the French government through the France 2030 investment plan managed by the National Research Agency (ANR), as part of the Initiative of Excellence of Université Côte d’Azur under reference number ANR-15-IDEX-01.
Chris Schwiegelshohn was partially supported by the Independent Research Fund Denmark (DFF) under a Sapere Aude Research Leader grant No 1051-00106B.

\bibliography{references}
\bibliographystyle{apalike}

\newpage
\appendix
\onecolumn
\allowdisplaybreaks
\section{Proofs}
\label{sec:app:proofs}

\subsection{Algorithmic Part}

The discrete \Frechet distance and the \DTW{q} distance between two curves of complexity $m',m''$ can be computed using dynamic programming in $\Oh{m'm''}$ time~\citep{eiter1994computing, dtw1, dtw2}. Here we show how to efficiently compute the $k$-DTW distance.

First, we prove the correctness and the running time of \cref{alg:compute:exact}, that computes the $k$-DTW distance of two given curves.

\begin{theorem}[Restatement of \cref{thm:computing:kdtw}]
     \label{thm:computing:kdtw:app}
    Given two curves $\sigma$ and $\tau$, $|\sigma|=m'$ and $|\tau|=m''$, and a parameter $k$, \cref{alg:compute:exact} returns $\distkDTW{\sigma}{\tau}$ in $\Oh{m'm''z}$ time, where $z$ is the number of distinct distances between any pair of vertices in $\sigma$ and $\tau$. 
\end{theorem}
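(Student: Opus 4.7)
The plan is to prove both correctness and the running time by fixing an arbitrary traversal $T$ and showing that (a) in every iteration $l$, the quantity $\mathrm{DTW}(D')+k\cdot E[l]$ lower bounds $\sum_{j=1}^{k} s_j^{(T)}$ whenever the DTW minimizer in $D'$ happens to be $T$, and (b) for the optimal $k$-DTW traversal $T^*$, there is a specific iteration $l^*$ where this inequality is tight. Combining these, the algorithm's output minimized over all iterations coincides exactly with $\distkDTW{\sigma}{\tau}$.

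For the lower bound, I would fix $T$ and $l$, let $a=|\{(i,j)\in T:D[i,j]\geq E[l]\}|$, and rewrite the cost of traversal $T$ in iteration $l$ as
\begin{align*}
k\cdot E[l]+\sum_{(i,j)\in T}\max\{D[i,j]-E[l],0\}
&= k\cdot E[l]+\sum_{j=1}^{a} s_j^{(T)} - a\cdot E[l]\\
&= (k-a)\cdot E[l]+\sum_{j=1}^{a} s_j^{(T)}.
\end{align*}
Then I would split into two cases. If $a\leq k$, then $s_{a+1}^{(T)},\ldots,s_k^{(T)}\leq E[l]$ by definition of $a$, so $(k-a)\cdot E[l]\geq \sum_{j=a+1}^{k} s_j^{(T)}$ and the cost is at least $\sum_{j=1}^{k} s_j^{(T)}$. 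If $a>k$, then $s_{k+1}^{(T)},\ldots,s_a^{(T)}\geq E[l]$, hence $\sum_{j=k+1}^{a} s_j^{(T)}-(a-k)\cdot E[l]\geq 0$ and again the cost is at least $\sum_{j=1}^{k} s_j^{(T)}$. Since $\mathrm{DTW}(D')$ minimizes over all traversals, this yields $\mathrm{DTW}(D')+k\cdot E[l]\geq \min_{T}\sum_{j=1}^{k} s_j^{(T)}=\distkDTW{\sigma}{\tau}$, so $\mathrm{mincost}\geq \distkDTW{\sigma}{\tau}$.

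For the matching upper bound, I would pick $l^*$ so that $E[l^*]=s_k^{(T^*)}$, which is guaranteed to appear in $E$ since $E$ enumerates all distinct pairwise distances. Plugging $T^*$ into the cost at iteration $l^*$ gives $a=k$ (after handling ties by allowing any $a$ with $s_k^{(T^*)}=E[l^*]$, where the contribution of tied entries beyond position $k$ cancels exactly), and the calculation collapses to $\sum_{j=1}^{k} s_j^{(T^*)}=\distkDTW{\sigma}{\tau}$. Therefore $\mathrm{DTW}(D')+k\cdot E[l^*]\leq \distkDTW{\sigma}{\tau}$, yielding the reverse inequality.

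The main obstacle is only the mild technical bookkeeping around ties in the sorted list of distances, since in that case the count $a$ may differ from $k$ even at the ``correct'' guess $l^*$; the case analysis above is designed to absorb this by showing the algebraic identity still holds whenever the tied entries contribute $E[l^*]$ to each side. The running time is then immediate: constructing $D$ costs $O(m'm'')$, producing the sorted array $E[1..z+1]$ costs $O(m'm''+z\log z)$, and each of the $z+1$ iterations performs one DTW computation on a modified $m'\times m''$ matrix in $O(m'm'')$ time via the standard dynamic program of \citet{dtw1,dtw2}, giving the claimed $O(m'm''z)$ total.
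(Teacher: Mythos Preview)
Your proof is correct and follows essentially the same approach as the paper: both establish that for any traversal $T$ and any iteration $l$, the quantity $k\cdot E[l]+\sum_{(i,j)\in T}\max\{D[i,j]-E[l],0\}$ is at least $\sum_{j=1}^{k}s_j^{(T)}$, with equality achievable at the iteration where $E[l^*]=s_k^{(T^*)}$. Your case split on $a\lessgtr k$ is a slightly more streamlined packaging of the paper's split on $l\lessgtr l^*$, but the algebra and the handling of ties are identical.
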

\begin{proof}[Proof of \cref{thm:computing:kdtw}/\ref{thm:computing:kdtw:app}]
    Consider an arbitrary but fixed traversal $T$. We prove that there exists an iteration $l^*$ where the cost considered by \cref{alg:compute:exact} equals the exact $k$-DTW cost of $T$, i.e., the sum of the largest $k$ distances in the traversal $T$, which we denote $DTW_k(T)$.

    Consider iteration $l= l^*$ where $E[l^*]$ is the smallest element that appears in the sum of largest $k$ elements $DTW_k(T)$. Let $k_{l}$ be the number of elements of the matrix $D$ that appear in $T$ such that $D[i,j] > E[l]$. Note, that in iteration $l^*$ it holds that $k_{l^*}< k \leq k_{l^*+1}$. We note that the margin case where $k>|T|$ implies, similarly to \cref{def:kdtw}, that some distances will be filled with zeros. This in turn implies that $E[l^*]=0$ and thus $l^* = z+1$ is the largest possible index that occurs in \cref{alg:compute:exact}. Therefore the upper bound $k_{l^*+1}=k_{z+2}$ is undefined and does not apply in this case.
    
    Further, we have that
    \begin{align*}
        \text{cost}(T,l^*)&\coloneqq k\cdot E[l^*]+\sum_{(i,j)\in T} \max \lbrace D[i,j]-E[l^*], 0\rbrace \\
        &= (k-k_{l^*}) \cdot E[l^*] + \sum_{\substack{(i,j)\in T,\\D[i,j] > E[l^*]}} (D[i,j]-E[l^*]+E[l^*]) = \mathrm{DTW}_k(T),
    \end{align*}
    since the sum ranges exactly over the $k_{l^*}$ largest elements and no matter how often $E[l^*]$ appears along $T$, it is added exactly $k-k_{l^*}$ times, i.e., the required number of times to complete the sum of $k$ largest elements, which equals $\text{DTW}_k(T)$.

    Next, we prove that in all other iterations, the considered $\text{cost}(T,l)$ cannot be smaller than the actual $k$-DTW cost $\text{DTW}_k(T)$. To this end, we make a case distinction below. We define $T_k\subseteq T$ to be an arbitrary but fixed subset indexing $|T_k|=k$ largest elements $D[i,j]$ in $T$, where ties are broken arbitrarily. In the margin case that $k>|T|$, we set $T_k=T$ (the remaining elements are filled with zeros anyway, cf. \cref{def:kdtw}). Similarly to the discussion above, if $k>|T|$, no iteration $l>l^*$ exists, and we can proceed directly with the second case where $l<l^*$.
    
    \begin{description}
    \item[$l > l^*$:] Note that for all elements $(i,j)\in T_k$ it holds that $D[i,j] > E[l]$, since $k_l\geq k$.
Then we have that
    \begin{align*}
        \text{cost}(T,l) & = k\cdot E[l]+\sum_{(i,j)\in T} \max \lbrace D[i,j]-E[l], 0\rbrace\\
        & =k \cdot E[l]+\sum_{\substack{(i,j)\in T_k}} \max \lbrace D[i,j]-E[l], 0\rbrace + \sum_{(i,j)\notin T_k} \max \lbrace D[i,j]-E[l], 0\rbrace\\
        &= \sum_{(i,j)\in T_k}  D[i,j] + \sum_{(i,j)\notin T_k} \max \lbrace D[i,j]-E[l], 0\rbrace\\
        &= DTW_k(T) + \sum_{(i,j)\notin T_k} \max \lbrace D[i,j]-E[l], 0\rbrace \geq DTW_k(T),
    \end{align*}
    since the sum over $(i,j)\notin T_k$ is non-negative.

    \item[$l < l^*$:] In this case, we have $k_l<k$ and for all $(i,j)\in T_k\setminus T_{k_l}$ it holds that $E[l]>D[i,j]$. Thus, the sum is taken over less than $k$ elements, and we have that
    \begin{align*}
        \text{cost}(T,l) & = k\cdot E[l]+\sum_{(i,j)\in T} \max \lbrace D[i,j]-E[l], 0\rbrace\\
        & = (k-k_l)\cdot E[l]+\sum_{(i,j)\in T_{k_l}}  D[i,j] 
        > \sum_{(i,j)\in T_k}  D[i,j] = DTW_k(T).
    \end{align*}
    \end{description}

    Now, we consider an optimal $k$-DTW traversal $T^*$. By the above arguments the cheapest cost for $T^*$ across all iterations equals the exact $k$-DTW cost $DTW_k(T^*)=OPT$.
    We also have that the considered cost for any suboptimal $T'$ is never smaller than the actual $k$-DTW cost $DTW_k(T')$. By suboptimality of $T'$, its actual $k$-DTW cost is strictly larger than the optimal $k$-DTW cost of $T^*$. Thus, we have that
    \[
        \min_{l\in[z+1]} \text{cost}(T',l)\geq DTW_k(T')>DTW_k(T^*)=OPT.
    \]
    Consequently, the cost returned by \cref{alg:compute:exact} equals the optimal $k$-DTW cost of the optimal traversal $T^*$, since it corresponds to the smallest cost considered in the $DTW(D')+k\cdot E[l]$ minimization across all iterations $l\in[z+1]$.

    The running time follows since the initialization can be completed by computing all $m'm''$ pairwise distances $D[i,j]$ and inserting them into a binary search tree, skipping duplicates. We thus only keep the $z+1$ distinct items in $O(m'm''\log(z))$ time and read them in the sorted order in another $O(z)$ time to build the array $E$.
    Then, \cref{alg:compute:exact} runs $z+1\leq m'm''+1$ iterations and in each iteration it updates all distances in the matrix $D'$, which takes time $O(m'm'')$, and runs one DTW calculation in time $O(m'm'')$. The final cost update takes only constant time. The total time is thus $O(m'm''\log (z) + z + m'm''z)= O( m'm''z)$.
\end{proof}

We next provide a quadratic time $(1+\eps)$-approximation. As an ingredient towards this goal, we first show how to obtain a simple $k$-approximation for $k$-DTW by calculating the discrete \Frechet distance in quadratic time. 

\begin{lemma}[Restatement of \cref{lem:kdtw:kapprox}]
\label{lem:app:kdtw:kapprox}
    Given two curves $\sigma=(v_1,\ldots, v_{m'})$ and $\tau=(w_1,\ldots,w_{m''})$, and a parameter $k$, it holds that $\distDF{\sigma}{\tau}$ is a $k$-approximation for $\distkDTW{\sigma}{\tau}$, which can be computed in time $O(m'm'')$. In particular, it holds that $\distDF{\sigma}{\tau} \leq \distkDTW{\sigma}{\tau} \leq k\cdot \distDF{\sigma}{\tau}$.
\end{lemma}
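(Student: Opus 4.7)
The plan is to prove the two claimed inequalities separately, using the fact that both quantities are minima of the same cost functional (over the same traversal set $\mathcal{T}$), differing only in which function of the matched distances is evaluated: the maximum for discrete \Frechet versus the sum of the top-$k$ for $k$-DTW. The running-time claim then follows immediately from the classical $\Oh{m'm''}$ algorithm of \citet{eiter1994computing} for the discrete \Frechet distance, which I will simply cite.

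For the lower bound $\distDF{\sigma}{\tau} \leq \distkDTW{\sigma}{\tau}$, I would fix an optimal $k$-DTW traversal $T^{*}$ and observe that by \cref{def:kdtw},
\begin{equation*}
\distkDTW{\sigma}{\tau} \;=\; \sum\nolimits_{l=1}^{k} s_l^{(T^{*})} \;\geq\; s_1^{(T^{*})} \;=\; \max_{(i,j)\in T^{*}} \distE{v_i}{w_j},
\end{equation*}
since the $s_l^{(T^*)}$ are nonnegative and sorted in non-increasing order. The right-hand side is a valid candidate value for the discrete \Frechet distance (which minimizes the max over all traversals), so it upper-bounds $\distDF{\sigma}{\tau}$ and the claim follows.

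For the upper bound $\distkDTW{\sigma}{\tau} \leq k \cdot \distDF{\sigma}{\tau}$, I would fix an optimal traversal $T^{F}$ for the discrete \Frechet distance. By optimality of $T^F$, every matched distance along $T^F$ is at most $\distDF{\sigma}{\tau}$, so in particular each of the top $k$ matched distances satisfies $s_l^{(T^{F})} \leq \distDF{\sigma}{\tau}$ for $l \in [k]$. Summing over $l$ gives the $k$-DTW cost of the (not necessarily optimal) traversal $T^F$ is at most $k \cdot \distDF{\sigma}{\tau}$, and since $\distkDTW{\sigma}{\tau}$ is the minimum over $\mathcal{T}$, the inequality follows.

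There is no real obstacle here; the lemma is essentially a structural observation that the $\ell_\infty$-norm of a vector is bounded above by its top-$k$ sum and below by $k$ times its maximum entry, lifted through the common traversal optimization. The only subtle point worth flagging in the writeup is the case $k > |T^{F}|$, where \cref{def:kdtw} pads the tail of $(s_l^{(T^{F})})$ with zeros; this strictly helps both inequalities, so the argument goes through unchanged.
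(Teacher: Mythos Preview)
Your proof is correct and follows essentially the same approach as the paper: both arguments use the optimal $k$-DTW traversal to establish the lower bound and the optimal \Frechet traversal to establish the upper bound, via the same elementary inequalities between the maximum and the top-$k$ sum. The paper simply compresses the two parts into a single chain of inequalities, and your remark on the padding case $k>|T^F|$ is a welcome clarification.
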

\begin{proof}[Proof of \cref{lem:kdtw:kapprox}/\ref{lem:app:kdtw:kapprox}]
    Let $T_F$ and $T_k$ be traversals of $\sigma$ and $\tau$ realizing their discrete \Frechet and $k$-DTW distances, respectively. Using the notation of \cref{def:kdtw} we have
    \begin{align*}
        \distDF{\sigma}{\tau} &= s_1^{(T_F)} \leq s_1^{(T_k)} \leq \sum_{l=1}^k s_l^{(T_k)} = \distkDTW{\sigma}{\tau}  \leq \sum_{l=1}^k s_l^{(T_F)} \leq k\cdot s_1^{(T_F)} = k\cdot \distDF{\sigma}{\tau}, 
    \end{align*}
    The running time of $O(m'm'')$ for computing $\distDF{\sigma}{\tau}$ was proven in \citep{eiter1994computing}.
\end{proof}

Using the $k$-approximation and rounding of distances, we can adapt \cref{alg:compute:exact} to produce a $(1+\varepsilon)$-approximation for the $k$-DTW distance.

\begin{theorem}[Restatement of \cref{col:approx:kdtw}]
    \label{col:approx:kdtw:app}
    Given two curves $\sigma$ and $\tau$, $|\sigma|=m'$ and $|\tau|=m''$, and a parameter $k$, there exists a $(1+\eps)$-approximation algorithm for any $0 < \eps \leq 1$ that runs in $O(m'm''\frac{\log(k/\eps)}{\eps})$ time.
\end{theorem}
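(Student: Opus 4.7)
My plan is to reduce the computation to a bounded-$z$ instance of Algorithm~\ref{alg:compute:exact} by rounding the entries of the distance matrix, using the discrete Fr\'echet distance as a cheap but informative preprocessing step. Concretely, I would first invoke \cref{lem:kdtw:kapprox} to compute $d_{dF} = \distDF{\sigma}{\tau}$ in $O(m'm'')$ time; this gives the sandwich $d_{dF} \leq \distkDTW{\sigma}{\tau} \leq k\cdot d_{dF}$, so $d_{dF}$ and $k \cdot d_{dF}$ are effective lower and upper scales for the target cost.

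Next, I would modify the distance matrix $D$ in two ways. First, truncate from below: replace every entry $D[i,j] < \frac{\eps \cdot d_{dF}}{k}$ by $\frac{\eps \cdot d_{dF}}{k}$ (or 0 if already 0, but the point is we can treat all sub-threshold positive entries uniformly). Any traversal's $k$-DTW cost on the modified matrix exceeds its cost on $D$ by at most $k \cdot \frac{\eps \cdot d_{dF}}{k} = \eps \cdot d_{dF} \leq \eps \cdot \distkDTW{\sigma}{\tau}$, because the top $k$ distances in a traversal absorb at most $k$ such additive increments. Second, round every remaining entry up to the next power of $1+\eps$. This can only increase any traversal's top-$k$ sum, and by at most a multiplicative factor of $1+\eps$, so no solution gets cheaper and the optimum grows by at most a $(1+\eps)$ factor. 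Combining, the optimum on the rounded matrix is at most $(1+\eps)^2 \cdot \distkDTW{\sigma}{\tau} \leq (1+O(\eps))\distkDTW{\sigma}{\tau}$, which can be rescaled to $(1+\eps)$ by halving $\eps$.

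Then I would prune the loop in Algorithm~\ref{alg:compute:exact}: any iteration with $E[l] > d_{dF}$ produces a cost at least $k \cdot E[l] > k \cdot d_{dF} \geq \distkDTW{\sigma}{\tau}$, so it cannot improve $\mathrm{mincost}$ and may be skipped. Hence only guesses $E[l]$ in the interval $\bigl[\tfrac{\eps \cdot d_{dF}}{k},\, d_{dF}\bigr]$ matter, and after the power-of-$(1+\eps)$ rounding the number of distinct such values is
\[
z' \;\in\; O\!\left(\log_{1+\eps}\!\left(\frac{k}{\eps}\right)\right) \;=\; O\!\left(\frac{\log(k/\eps)}{\eps}\right).
\]
Plugging $z'$ into the running-time bound of \cref{thm:computing:kdtw} yields the claimed $O\!\bigl(m'm''\cdot\tfrac{\log(k/\eps)}{\eps}\bigr)$ total runtime (the initial Fr\'echet computation and the bookkeeping for rounding are subsumed).

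The main obstacle I anticipate is justifying the additive truncation step cleanly: I need to argue simultaneously that (i) the optimal traversal's cost on the rounded matrix is within $(1+\eps)$ of the true optimum, and (ii) every suboptimal traversal's rounded cost remains an upper bound on something comparable to its true $k$-DTW cost, so that \cref{alg:compute:exact} applied to the rounded matrix returns a value in the target window. Because the $k$-DTW cost depends only on the top-$k$ matched distances, the sub-threshold truncation only affects the other $|T|-k$ matches (which do not count anyway) plus at most $k$ top-$k$ matches (contributing at most $\eps \cdot d_{dF}$ total extra). This is the key geometric fact that makes both directions go through, and verifying it carefully — separating additive truncation errors from multiplicative rounding errors and then composing them — is the only nontrivial part; the rest is arithmetic in $z'$ and the invocation of \cref{thm:computing:kdtw}.
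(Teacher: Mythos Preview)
Your proposal is correct and follows essentially the same approach as the paper's proof: compute the discrete Fr\'echet distance for a $k$-approximation, truncate sub-threshold nonzero entries up to $\eps\cdot d_{dF}/k$, round the remaining entries to powers of $1+\eps$, discard iterations with $E[l]>d_{dF}$, and bound the number of surviving distinct values by $O(\log(k/\eps)/\eps)$. The only cosmetic difference is that the paper sets $\eps'=\eps/2$ up front and combines the additive and multiplicative errors as $(1+\eps')\cdot d_{k\text{-}DTW}+\eps'\cdot d_{dF}\leq(1+2\eps')\cdot d_{k\text{-}DTW}$, whereas you obtain $(1+\eps)^2$ and rescale at the end; both are equivalent.
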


\begin{proof}[Proof of \cref{col:approx:kdtw}/\ref{col:approx:kdtw:app}]
    Let $\eps'=\eps/2$. 
    We perform the following modifications of \cref{alg:compute:exact}:
    \begin{enumerate}[1)]
        \item We first compute $\distDF{\sigma}{\tau}$, which gives a $k$-approximation for $\distkDTW{\sigma}{\tau}$ in time $O(m'm'')$ by \cref{lem:kdtw:kapprox}/\ref{lem:app:kdtw:kapprox}.
        Now, set $d_{\min}=\frac{\eps'\cdot  \distDF{\sigma}{\tau}}{k}$, and $d_{\max} = \distDF{\sigma}{\tau}$.  
        
        When we build the initial distance matrix $D$, we round all non-zero distances $D[i,j]$ with $0< D[i,j] < d_{\max}$ to their next higher power of $(1+\eps')^i\cdot d_{\min}$, for 
        \begin{align*}
            i\in\left\{0,\ldots,\left\lceil \log_{1+\eps'} \left(\frac{d_{\max}}{d_{\min}}\right) \right\rceil\right\} = \left\{0,\ldots,\left\lceil \log_{1+\eps'} \left(\frac{k}{\eps'}\right) \right\rceil\right\}.
        \end{align*}

        Through this modification, the cost of each edge and thus of any traversal cannot decrease. However, increasing the costs induces an additive as well as a multiplicative error, which we bound as follows:
        \begin{enumerate}[i)]
            \item The sum of up to $k$ distances $0 < D[i,j] \leq d_{\min}$ along any traversal $T$ can only increase by at most 
            \begin{align*}
                k \cdot d_{\min} = k\cdot \frac{\eps' \cdot \distDF{\sigma}{\tau}}{k} =\eps'\cdot \distDF{\sigma}{\tau}, 
            \end{align*}
            since their next power in the progression is $(1+\eps')^0\cdot d_{\min}=d_{\min}$.
                        
            \item Any distance $d_{\min} < D[i,j] < d_{\max}$ can only increase by at most a factor $(1+\eps')$ by construction. By linearity, this also holds for any sum of these distances.
        \end{enumerate}
        Using these bounds and \cref{lem:kdtw:kapprox}/\ref{lem:app:kdtw:kapprox}, the cost of the optimal traversal after rounding the distances is bounded by 
        \[
            (1+\eps') \cdot \distkDTW{\sigma}{\tau} + \eps' \cdot \distDF{\sigma}{\tau} \leq (1+2\eps') \cdot \distkDTW{\sigma}{\tau} = (1+\eps) \cdot \distkDTW{\sigma}{\tau}.
        \]        
        
        \item When we set up the array $E[1..z+1]$, we can omit all indices $l$ where $E[l]>d_{\max}=\distDF{\sigma}{\tau}$. To see this, recall from the proof of \cref{thm:computing:kdtw}/\ref{thm:computing:kdtw:app} that the cost of any traversal $T$ in iteration $l$ is lower bounded by $k\cdot E[l]$, since we add this amount to a sum of further non-negative costs. Using this fact together with \cref{lem:kdtw:kapprox}/\ref{lem:app:kdtw:kapprox} again, the cost of $T$ in iteration $l$ is lower bounded by 
        \begin{align*}
            k\cdot E[l] > k \cdot \distDF{\sigma}{\tau} \geq \distkDTW{\sigma}{\tau},
        \end{align*}
        which implies that the cost of $T$ is not optimal in iteration $l$.
    \end{enumerate}
    
    All modifications can be performed initially in $O(m'm'')$ time. Further, $E[1..z+1]$ contains only $z\in O({\log_{1+\eps'}(k/\eps')}) = O(\frac{\log(k/\eps)}{\eps})$ distinct distances. The running time follows by plugging this into the $O(m'm''z)$ running time bound of \cref{thm:computing:kdtw}/\ref{thm:computing:kdtw:app}.
\end{proof}

Next, we show that the $k$-DTW distance between two curves is not equal to taking the largest $k$ distances from the sum that yields their DTW distance. This precludes the option of simply applying the standard DTW algorithm for the computation of $k$-DTW. We also show that the length of the traversals may differ significantly in both directions, which also precludes using a standard DTW algorithm to estimate the size of a $k$-DTW traversal.

\begin{lemma}[Full version of \cref{lem:different:short}]\label{lem:different:kdtw}
    Given a parameter $m\in \mathbb{N}$, it holds that:
    \begin{enumerate}[i)]
        \item There exist two curves $\sigma$ and $\tau$ 
        {of complexity $m$}
        in Euclidean space $\REAL^d$, such that 
        the difference $\left||T_{(k-DTW)}|-|T_{(DTW)}|\right|$ of
        the lengths of the optimal $k$-DTW and the optimal DTW traversals of $\sigma$ and $\tau$, denoted $|T_{(k-DTW)}|$ resp. $|T_{(DTW)}|$, 
        is linear in $m$.
        It can hold both, $|T_{(k-DTW)}|<|T_{(DTW)}|$, or $|T_{(k-DTW)}|>|T_{(DTW)}|$.

        \item Assume that $m\geq 5$. Then for any $k$, $1 \leq k \leq 4\lfloor \frac{m}{5} \rfloor$ there exist curves $\sigma$ and $\tau$ of complexity $m$, s.t. $\distkDTW{\sigma}{\tau}$ is not equal to the sum of the largest $k$ distances contributing to $\distDT{\sigma}{\tau}$.
    \end{enumerate} 
\end{lemma}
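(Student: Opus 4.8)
The plan is to prove both parts by exhibiting explicit families of curves (each of complexity $m$, in $\REAL^{O(1)}$) obtained by concatenating $g=\Theta(m)$ copies of a \emph{gadget} of constant complexity, with cheap ``connector'' vertices between consecutive gadgets placed at a common coordinate so that matching a connector against anything but its counterpart is much more expensive than everything inside the gadgets. Since the $k$-DTW and DTW values depend only on the multiset of matched Euclidean distances, the design reduces to choosing $O(1)$ coordinates per gadget so that its two relevant traversals --- a shorter one and a longer one --- have matched-distance multisets of prescribed shapes. The key fact is that the DTW objective (the sum of all matched distances) and the $k$-DTW objective (the sum of the $k$ largest) can be made to disagree about which of the two traversals is better: arrange that one traversal matches a few pairs at a ``large'' value and the rest at a negligible value, while the other matches somewhat more pairs at a slightly smaller ``moderate'' value, with the counts and values chosen so that $(\#_{\mathrm{large}})\cdot(\text{large value})<(\#_{\mathrm{mod}})\cdot(\text{moderate value})$ --- so that DTW strictly prefers the ``few large'' traversal --- while $k\cdot(\text{moderate value})<k\cdot(\text{large value})$ for every $k$ up to $\#_{\mathrm{large}}$, so that $k$-DTW strictly prefers the ``many moderate'' one. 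Taking $\#_{\mathrm{large}}=4$ yields a per-gadget cap of four on the useful $k$, hence the range $1\le k\le 4\lfloor m/5\rfloor$ (the gadget using five vertices per side).

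This gives both directions of part~(i) by choosing which of the two in-gadget traversals is the short one. In one gadget design the DTW-optimal in-gadget traversal is the short ``few large'' one and the $k$-DTW-optimal one is the longer ``many moderate'' detour, so concatenating $g$ copies yields $|T_{(k-DTW)}|>|T_{(DTW)}|$; in the mirrored design the DTW-optimal traversal is instead a long detour that lingers on cheap vertices while the $k$-DTW-optimal one is a shorter path through moderate distances, yielding $|T_{(k-DTW)}|<|T_{(DTW)}|$. In either case the connector vertices force every near-optimal traversal of the concatenation to process the $g$ gadgets independently, so the DTW objective is additive over gadgets and (for $k$ in the admissible range) the $k$-DTW objective's top-$k$ sum splits evenly over the $g$ gadgets; hence the optimal traversals use the same in-gadget traversal in every gadget, and $\bigl||T_{(k-DTW)}|-|T_{(DTW)}|\bigr| = g\cdot(\text{per-gadget length difference})\in\Theta(m)$. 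This also shows that both strict inequalities can occur, proving part~(i).

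For part~(ii) I would reuse the concatenated curves of the first gadget design, with $g=\lfloor m/5\rfloor$ gadgets and $1\le k\le 4g$. There the DTW-optimal traversal is unique on the gadgets (only the ``few large'' traversal attains the minimum total), so ``the sum of the $k$ largest distances contributing to $\distDT{\sigma}{\tau}$'' is well defined and equals the top-$k$ sum along the global DTW-optimal traversal; by the per-gadget analysis this is strictly larger than the top-$k$ sum along the global ``many moderate'' traversal, which is an upper bound for $\distkDTW{\sigma}{\tau}$. Hence $\distkDTW{\sigma}{\tau}$ differs from the $k$-largest sum along the DTW-optimal traversal, and since $g=\lfloor m/5\rfloor$ this holds for every $1\le k\le 4\lfloor m/5\rfloor$; for $k$ in this range the $k$ largest distances along $T_{DTW}$ are $k$ copies of the large value while $\distkDTW{\sigma}{\tau}$ is at most $k$ times the strictly smaller moderate value, which makes the gap explicit.

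The hard part is the gadget design together with the accounting over the concatenation. The $O(1)$ gadget coordinates must be chosen so that the two intended traversals \emph{simultaneously} dominate all other traversals of the gadget under \emph{both} objectives, even though the two objectives disagree about which is preferable; this is what pins down the window between the large and moderate values (e.g.\ the large value must lie strictly between four and five times the moderate value), and it forces one to rule out ``hybrid'' intra-gadget traversals that follow one route on part of the gadget and the other route elsewhere, and to verify that the ``few large'' traversal has exactly four non-negligible matched distances. Separately, one must certify that the $\Theta(m)$-fold concatenation admits no traversal that gains by desynchronizing across gadget boundaries --- this is the role of the connectors --- and check that the global top-$k$ comparison collapses to the per-gadget comparison for every $k$ in the claimed range; it is this last step, controlling where $k$-DTW's preference flips as $k$ grows past four per gadget, that establishes the precise bound $4\lfloor m/5\rfloor$.
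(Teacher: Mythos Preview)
Your overall strategy---concatenating $\Theta(m)$ constant-size gadgets, separated so that any near-optimal traversal handles them independently---is exactly the paper's approach. The paper realizes it with explicit $4$-vertex gadgets in $\REAL^1$ (called $K$- and $D$-gadgets), placed at offsets $4tL$ so that any matching across consecutive gadgets costs at least $3L-2\varepsilon$; this offset plays the role of your connectors.

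Your specific gadget recipe, however, has a gap. You want one traversal to match ``a few pairs at a large value $L$'' and the other ``more pairs at a moderate value $M<L$,'' and then compare $kL$ against $kM$. But every traversal of the gadget must start at the pair $(1,1)$ and end at the last pair, so both candidate traversals share those two matched distances. With $\#_{\mathrm{large}}=4$ on a length-$5$ diagonal, at least one endpoint has distance $L$, hence the ``many moderate'' traversal also contains an $L$ entry and its top-$1$ is $L$ as well---a tie at $k=1$, not the strict inequality your argument needs. More generally the two traversals cannot have the disjoint distance profiles that the comparison $kM<kL$ presupposes, and your parametric window ``$L$ between four and five times $M$'' is not what the inequalities actually force. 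The paper sidesteps this by making the two traversals' dominant distances nearly equal and letting $\varepsilon$-order terms tip the balance: in the $K$-gadget, $T_1$ (length $4$) has sorted distances $(L,L,\varepsilon,\varepsilon)$ while $T_2$ (length $5$) has $(L+\tfrac{\varepsilon}{2},L,\varepsilon,0,0)$, so DTW prefers $T_2$ by $\tfrac{\varepsilon}{2}$ but $k$-DTW prefers $T_1$ for every $k\le 3$ because $T_2$'s single largest entry already exceeds $T_1$'s. The $D$-gadget swaps the roles to obtain the other direction of part~(i), and since each gadget has only $O(1)$ traversals the hybrid-traversal check is a direct table lookup rather than a parametric argument.
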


\begin{figure}[ht!]
\begin{center}
\begin{tabular}{lll}
{\tiny{$|T_{(k-DTW)}|<|T_{(DTW)}|$}}&{\tiny{$|T_{(k-DTW)}|>|T_{(DTW)}|$}} 
\\
\includegraphics[width=0.48\columnwidth,page=1]{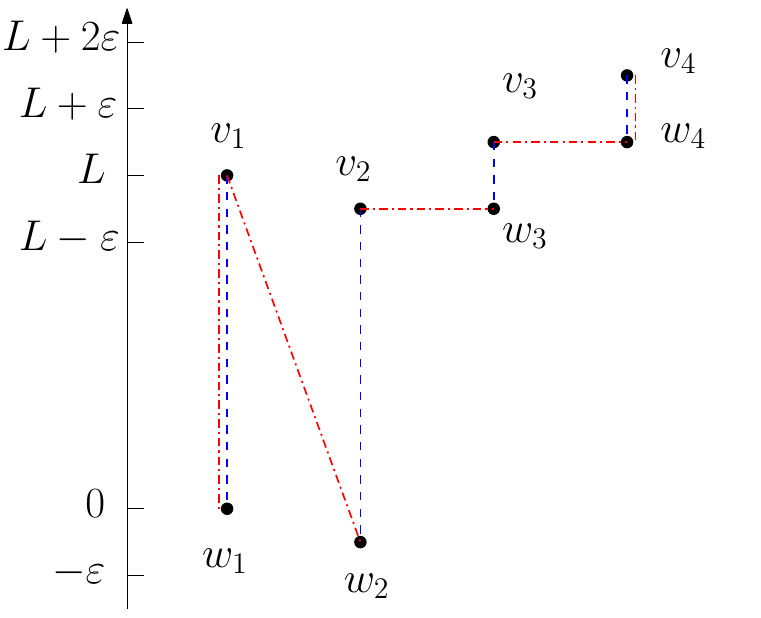}&
\includegraphics[width=0.48\columnwidth,page=2]{Figures/Gadgets.pdf}
\end{tabular}
\end{center}
\caption{$K$-gadget (left); $D$-gadget (right); traversals realizing $k$-DTW (blue); traversals realizing DTW (red).
}
\label{fig:gadgets:app}
\end{figure}

\begin{proof}[Proof of \cref{lem:different:short}/\ref{lem:different:kdtw}.]
    We show the first claim of the lemma, that is, that the lengths of the traversals can differ. In the course of the proof, we will also see for the same curves that the value of the $k$-DTW distance need not equal the sum of the $k$ largest distances in a traversal that witnesses the DTW distance. This will provide the arguments to prove the second statement of the lemma.
    
    For the first inequality $|T_{(k-DTW)}|<|T_{(DTW)}|$, let us define $K$-\emph{gadgets} $K(t)$ containing four vertices for each of two 1-dimensional curves $\sigma_{(K(t))}$ and $\tau_{(K(t))}$ as follows, see \cref{fig:gadgets:app} (left) for an illustration:
    \begin{align*}
        v_{4t+1}= 4tL+ L, \quad v_{4t+2}= 4tL+ L-\frac{\varepsilon}{2}, \quad v_{4t+3}= 4tL+ L +\frac{\varepsilon}{2}, \quad v_{4t+4}= 4tL+ L+\frac{3\varepsilon}{2}, \\
        \text{ and} \quad   w_{4t+1}= 4tL , \quad w_{4t+2}= 4tL -\frac{\varepsilon}{2} , \quad w_{4t+3}= 4tL +L-\frac{\varepsilon}{2}, \quad w_{4t+4}= 4tL+ L+\frac{\varepsilon}{2},
    \end{align*}
    where $t\in\mathbb{N}\cup\lbrace 0\rbrace$ is the offset, and $L,\varepsilon>0$, with $\varepsilon = \frac{L}{10}$.
    Let the two traversals of $\sigma_{(K(t))}$ and $\tau_{(K(t))}$ be (written without the offset $4t$ for simplicity):
    \begin{align*}
        T_1\colon & (1,1), (2,2), (3,3), (4,4);\\
        T_2\colon & (1,1), (1,2), (2,3), (3,4), (4,4).
    \end{align*}
    It holds that $|T_1|<|T_2|=|T_1|+1$. The traversal $T_2$ realizes the DTW distance, since 
    \begin{align*}
        \sum_{(i,j)\in T_1} |v_i-w_j| = 2L+2\varepsilon > 2L + \frac{\varepsilon}{2} + 2\cdot 0  + \varepsilon =\sum_{(i,j)\in T_2} |v_i-w_j|  =\distDT{\sigma_{(K(t))}}{\tau_{(K(t))}}.
    \end{align*} 
    The traversal $T_1$ has the minimum possible traversal size ($|T_1|=4$), and realizes the $k$-DTW distance (for $k\in\lbrace 1,2,3\rbrace$ {within a single gadget}), since for these values of $k$ it holds that $\distkDTW{\sigma_{(K(t))}}{\tau_{(K(t))}}=\sum_{l=1}^k s_l^{(T_1)}$, and the following table
    shows that $\sum_{l=1}^k s_l^{(T_1)} < \sum_{l=1}^k s_l^{(T_2)}$ for $k\in\lbrace 1,2,3\rbrace$. 
    \begin{center}
    \begin{tabular}{c|c|c}
        $k$ & $\sum_{l=1}^k s_l^{(T_1)}$ & $\sum_{l=1}^k s_l^{(T_2)}$  \\ \hline
        1 & $ L $ & $ L+\frac{\varepsilon}{2} $  \\
        2 & $ L + L $ & $ (L+\frac{\varepsilon}{2}) + L $  \\
        3 & $ L + L +\varepsilon $ & $ (L+\frac{\varepsilon}{2}) + L + \varepsilon $  
    \end{tabular}
    \end{center}
    
    We note, that for the value $k=4$, the inequality $|T_{(k-DTW)}|<|T_{(DTW)}|$ cannot hold (within one gadget) as the $k$-DTW distance cannot be smaller than the sum of all four distances in the traversal. {We discuss this further in the proof of the second claim.}

    Let $\hat{m}=\lfloor \frac{m}{4}\rfloor$, thus $m=4\hat{m}+r$, where $0\leq r\leq 3$.
    By concatenating the curves $\sigma_{(K(t))}$ and $\tau_{(K(t))}$ from the gadgets $K(0), K(1), \ldots, K(\hat{m}-1)$  
    we obtain curves $\sigma$ and $\tau$ of complexity $4\hat{m}$. To get the complexity $m$ we can add $r$ vertices at the end of the both curves, each with value $4\hat{m}L+L$. This does not change the DTW or $k$-DTW distances of $\sigma$ and $\tau$. 
    The traversal obtained by concatenating the traversal $T_2$ repeatedly within each gadget witnesses $\distDT{\sigma}{\tau}$, and has length $5\hat{m}+r$.

    Note, that edges {of the witness traversals} cannot belong to two different gadgets, since any two gadgets are almost $3L$ apart. More specifically, a pair $(i,j)$ cannot satisfy $v_i\in K(l)$ and $w_j\in K(l+1)$ (or $v_i\in K(l+1)$ and $w_j\in K(l)$) in an optimal traversal for neither DTW nor $k$-DTW. This is because the entire curves $\sigma_{(K(l))}$ and $\tau_{(K(l))}$ are contained in $[4l L -\frac{1}{2}\varepsilon, (4l+1)L+\frac{3}{2}\varepsilon]$, while $\sigma_{(K(l +1))}$ and $\tau_{(K(l+1))}$ are contained in $[4(l+1)L -\frac{1}{2}\varepsilon, (4(l+1)+1)L+\frac{3}{2}\varepsilon]$. Thus, a pair $(i,j)$ matching vertices across consecutive gadgets would witness a distance of at least $3L-2\varepsilon > 2L + 2\varepsilon$.
    
    Analogously, the traversal obtained by concatenating the traversal $T_1$ repeatedly in the order of the gadgets, witnesses $\distkDTW{\sigma}{\tau}$ (for $k\in \lbrace 1,\ldots, 3\hat{m}\rbrace$), and has length~$4\hat{m}+r$. Therefore, the difference of the traversal lengths is $\left||T_{(k-DTW)}|-|T_{(DTW)}|\right| = \hat{m}=\lfloor \frac{m}{4}\rfloor$, i.e., linear in $m$, which can be made arbitrarily large. In particular, we conclude that there exist instances for which $|T_{(k-DTW)}|<|T_{(DTW)}|$.

    To show that there exist instances for which $|T_{(k-DTW)}|>|T_{(DTW)}|$, we define similar $D$-\emph{gadgets} $D(t)$, containing four vertices for each of two curves $\sigma_{(D(t))}$ and $\tau_{(D(t))}$ as follows, see \cref{fig:gadgets:app} (right) for an illustration:
    \begin{align*}
        v_{4t+1}= 4tL+ L, \quad v_{4t+2}= 4tL+ L+\frac{\varepsilon}{2}, \quad
        v_{4t+3}= 4tL+ L +\frac{\varepsilon}{2},  \quad v_{4t+4}= 4tL+ L+\frac{3\varepsilon}{2},  \\
        \text{ and} \quad  \quad w_{4t+1}= 4tL , \quad  w_{4t+2}= 4tL , \quad w_{4t+3}= 4tL +L-\frac{\varepsilon}{2}, \quad w_{4t+4}= 4tL+ L-\frac{\varepsilon}{2},
    \end{align*}
    where $t\in\mathbb{N}\cup\lbrace 0\rbrace$ is the offset, and $L,\varepsilon>0$, with $\varepsilon = \frac{L}{10}$.
    Let the two traversals of $\sigma_{(D(t))}$ and $\tau_{(D(t))}$ be the same traversals $T_1$ and $T_2$ as in the first claim. Now, the traversal $T_1$ (with the minimum possible traversal size) realizes the DTW distance, since 
    \begin{align*}
        \sum_{(i,j)\in T_2} |v_i-w_j| = 2L+4\varepsilon 
        >  2L + \frac{\varepsilon}{2} + \varepsilon  + 2\varepsilon 
        = \sum_{(i,j)\in T_1} |v_i-w_j|  =\distDT{\sigma_{(D(t))}}{\tau_{(D(t))}}.
    \end{align*} 
    The traversal $T_2$ realizes the $k$-DTW distance (for $k\in\lbrace 1,2,3,4\rbrace$) within single gadget, since for these values of $k$ it holds that $\distkDTW{\sigma_{(D(t))}}{\tau_{(D(t))}}=\sum_{l=1}^k s_l^{(T_2)}$, and in the following table we see that $\sum_{l=1}^k s_l^{(T_1)} > \sum_{l=1}^k s_l^{(T_2)}$ holds for $k\in\lbrace 1,2,3,4\rbrace$. 
    \begin{center}
    \begin{tabular}{c|c|c}
        $k$ & $\sum_{l=1}^k s_l^{(T_1)}$ & $\sum_{l=1}^k s_l^{(T_2)}$  \\ \hline
        1 & $ L+\frac{\varepsilon}{2} $ & $ L $  \\
        2 & $ (L+\frac{\varepsilon}{2}) + L $ & $ L + L $  \\
        3 & $ (L+\frac{\varepsilon}{2}) + L + 2\varepsilon $ & $ L + L + 2\varepsilon$  \\
        4 & $ (L+\frac{\varepsilon}{2}) + L + 2\varepsilon + \varepsilon $ & $ L + L + 2\varepsilon + \varepsilon $ 
    \end{tabular}
    \end{center}

    As in the first part above, vertices in consecutive gadgets cannot be matched since this would cause more than $2L + 4\varepsilon$ cost, which is larger than the whole sum within the gadgets.
    By concatenating the curves $\sigma_{(D(t))}$ and $\tau_{(D(t))}$ from the gadgets $D(0), D(1), \ldots, D(\hat{m}-1)$, with additional $r$ vertices $4\hat{m}L+L$ at the end of both $\sigma$ and $\tau$,
    we obtain curves $\sigma$ and $\tau$ of complexity $4\hat{m}+r$. The traversal obtained by concatenating the traversal $T_1$ repeatedly for all gadgets, witnesses $\distDT{\sigma}{\tau}$ and has length $4\hat{m}+r$. 
    Analogously, the traversal obtained by concatenating the traversal $T_2$ repeatedly, witnesses $\distkDTW{\sigma}{\tau}$ (for $k\in \lbrace 1,\ldots, 4\hat{m}\rbrace$) and has length~$5\hat{m}+r$. Therefore, the difference of the traversal lengths is again $\left||T_{(k-DTW)}|-|T_{(DTW)}|\right| = \hat{m}$, linear in $m$, and can be made arbitrarily large. In particular, we conclude that $|T_{(k-DTW)}|>|T_{(DTW)}|$ in this case. 
    This concludes the proof for the first claim of the lemma. 

    We next prove the second claim of the lemma, first for $1\leq k< 3\lfloor \frac{m}{4}\rfloor$, and in a second step we extend the upper bound. We discuss the cases $\lfloor \frac{m}{4}\rfloor \leq k \leq 3\lfloor \frac{m}{4} \rfloor$ and $1\leq k< \lfloor \frac{m}{4}\rfloor$ separately. Again, let $m=4\hat{m}+r$, $0\leq r\leq 3$. 
    \begin{description}
        \item[$\lfloor \frac{m}{4}\rfloor \leq k \leq 3\lfloor \frac{m}{4} \rfloor$:] We can use $\hat{m}$ concatenated K-gadgets from the first part of the lemma to construct the curves $\sigma$ and $\tau$. Both curves are extended at their end by the vertices 
    \begin{align*}
        v_{4\hat{m}+1} =v_{4\hat{m}+2} = v_{4\hat{m}+3} = w_{4\hat{m}+1} = w_{4\hat{m}+2} = w_{4\hat{m}+3} = 4\hat{m}L+L.
    \end{align*}
    Hereby, the last (up to) three pairs of vertices $(4\hat{m}+1, 4\hat{m}+1)$, $(4\hat{m}+2, 4\hat{m}+2)$ and $(4\hat{m}+3, 4\hat{m}+3)$ in both traversals $T_1$ and $T_2$ do not interfere with the rest of the matchings for the DTW and the $k$-DTW distance, and do not change the values of $\distDT{\sigma}{\tau}$ and $\distkDTW{\sigma}{\tau}$. Then, for all $k$, s.t. $\lfloor \frac{m}{4}\rfloor = \hat{m} \leq k \leq 3\hat{m} = 3\lfloor \frac{m}{4} \rfloor$, there will be at least one distance among the largest $k$ distances in the $k$-DTW traversal from each gadget. Analogously, there will be at most $3$ distances among the largest $k$ in the $k$-DTW traversal. Then, the second claim of the lemma follows from the analysis that we conducted in the first part of the lemma.

    \item[$1\leq k< \lfloor \frac{m}{4}\rfloor$:] To construct the curves $\sigma$ and $\tau$ in this case, we use $k$ concatenated $K$-gadgets from the first part of the lemma, in both curves followed by $m-4k$ vertices $v_i = w_i = 4kL+L$, for $i\in\lbrace 4k+1,\ldots, m\rbrace$. The traversals $T_1$ and $T_2$ from the first part of the lemma are extended by the matchings $(i,i)$, $i\in\lbrace 4k+1,\ldots, m\rbrace$. Hereby, at least one distance within each of the $k$ gadgets will contribute to the largest $k$ distances in the $k$-DTW traversal. The second claim of the lemma thus follows from the analysis of the first claim.
    \end{description}
    The proof of the second claim of the lemma can be repeated analogously using the $D$-gadgets instead of the $K$-gadgets. The following cases need to be considered: $\lfloor \frac{m}{5}\rfloor \leq k \leq 4\lfloor \frac{m}{5} \rfloor$ and $1\leq k< \lfloor \frac{m}{5}\rfloor$. The rest of the analysis can be conducted verbatim. This completes the proof.
\end{proof}

We can even show that the traversal lengths of DTW and $k$-DTW can be pushed almost to the extremes $2m$ and $m$, thus maximizing their difference.
\begin{lemma} \label{lem:short_kdtw_long_dtw}
There exist two curves $\sigma, \tau$ of complexity $m$ with vertices in $\mathbb{R}$ such that the length of the only realizing DTW traversal is $2m - 5$, and there exists a realizing $k$-DTW traversal of length $m+1$, for any $k \leq m-3$.
\end{lemma}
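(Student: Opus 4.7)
My plan is to construct two explicit $1$-dimensional curves $\sigma$ and $\tau$ of complexity $m$ for which (a) the unique DTW-optimal traversal uses exactly $4$ diagonal steps (hence length $2m-5$), and (b) the $k$-DTW value for every $k\leq m-3$ is already attained by a traversal with $m-2$ diagonals (hence length $m+1$). The construction will combine the $K$-gadget idea from the proof of \cref{lem:different:kdtw} with an anchoring of the first two and last two vertex pairs, so that these four positions contribute four forced diagonal steps in \emph{every} optimal traversal, while the interior of the curve is set up as a chained $K$-gadget pattern where each local stretching strictly decreases the DTW sum without affecting the sum of the top-$k$ matched distances.

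Concretely, I would fix a large $L>0$ and a sequence of pairwise distinct geometrically decreasing perturbations $\varepsilon_3,\varepsilon_4,\dots,\varepsilon_{m-2}$, and set the first two and last two vertices of both curves to pairwise distinct values sufficiently separated from the interior range so that any non-diagonal matching at the boundary costs $\Omega(L)$. In the interior, the values of $\sigma$ and $\tau$ are chosen to alternate near $0$ and $L$ with a $K$-gadget-like offset and with the $\varepsilon_i$'s placed so that every interior horizontal-then-vertical substitution replaces one matched distance of order $L$ by one matched distance of order $L-\varepsilon_i$ together with two matched distances of order $\varepsilon_i$. This shifts one of the ``$\approx L$'' matched distances slightly below $L$ while creating a tail of tiny distances; the DTW sum strictly decreases by $\Theta(\varepsilon_i)$ per substitution, whereas the sum of the top-$k$ matched distances (for $k\leq m-3$) is unchanged because at least $k+1$ matched distances remain $\approx L$.

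Given the construction, the DTW claim follows because every interior substitution strictly reduces the total sum, so the unique DTW-optimal traversal performs \emph{all} available interior substitutions, using $0$ interior diagonals in addition to the $4$ anchor diagonals, i.e., it has length $2m-5$. Uniqueness is enforced by the pairwise distinct $\varepsilon_i$'s, which make the DTW cost of any two distinct traversals strictly different. For the $k$-DTW claim, I would show that a single interior substitution already realizes the minimum top-$k$ sum for every $k\leq m-3$: every traversal has at least $k+1$ matched distances of value $\approx L$, so its top-$k$ sum is at least $kL$ up to $O(\varepsilon)$; and a length-$(m+1)$ traversal (obtained by inserting exactly one horizontal-vertical pair in the interior) attains this bound up to the same tolerance, since only one of its $\approx L$ matches has been replaced by a $\approx L-\varepsilon_i$ match that still sits in the top-$k$ slot.

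The main obstacle is precisely this last point: showing that the length-$(m+1)$ traversal realizes $k$-DTW \emph{simultaneously} for every $k\leq m-3$, while the DTW optimum performs $m-5$ substitutions. The technical crux is that, by the geometric decay of the perturbations, any further interior substitution reduces only ``tail'' distances that fall strictly below the $k$-th largest matched distance, leaving the top-$k$ portion invariant for every relevant $k$. Verifying this ordering rigorously requires careful bookkeeping of how the $\varepsilon_i$'s interact with the $K$-gadget structure, and tuning the geometric decay rate so that no secondary optimum of a different length emerges is the most delicate part of the argument.
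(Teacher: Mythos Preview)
Your plan has two gaps that I do not see how to close with a $K$-gadget style construction.

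First, the DTW side. A traversal of two length-$m$ curves has length $2m-1-c$ where $c$ is the number of diagonal steps, so length $2m-5$ forces $c=4$. You propose to obtain these four diagonals from boundary anchors and to have \emph{zero} interior diagonals. But the $K$-gadget does not give this: its DTW-optimal traversal $T_2=(1,1),(1,2),(2,3),(3,4),(4,4)$ contains two diagonal steps out of four, and chaining gadgets yields a DTW optimum with $\Theta(m)$ diagonals, hence length far below $2m-5$. For the interior to be diagonal-free in the DTW optimum you would need every diagonal step to be strictly dominated by its horizontal/vertical replacement, \emph{uniformly over all partial traversals}; once a path has drifted off the main diagonal in an alternating $0/L$ pattern, the next diagonal step may well hit a zero-cost match, and your local-substitution argument no longer iterates. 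You also describe the substitution as removing one $L$-match and adding one $(L-\varepsilon_i)$-match plus two $\varepsilon_i$-matches, which is a net increase of two in the traversal length; a single diagonal-to-HV replacement adds only one match, so the operation you intend is not the one you describe.

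Second, and more fundamentally, the $k$-DTW side is internally inconsistent. You say (i) substitutions leave the top-$k$ sum unchanged because at least $k+1$ matches remain $\approx L$, and also (ii) a substitution replaces an $L$-match by an $(L-\varepsilon_i)$-match ``that still sits in the top-$k$ slot''. These cannot both hold: if the $(L-\varepsilon_i)$-match is in the top-$k$, the top-$k$ sum dropped by $\varepsilon_i$, and further substitutions drop it further, so the length-$(m+1)$ traversal is \emph{not} $k$-DTW optimal---the full length-$(2m-5)$ traversal beats it. Conversely, if substitutions only add small tail matches without touching the large ones, the DTW sum increases and the long traversal cannot be DTW-optimal. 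This is the real tension in the lemma, and gadget chaining does not resolve it. The paper's construction is entirely different: it takes the interior of $\sigma$ constant at $2$ and the interior of $\tau$ (after one special vertex $w_3=3+\varepsilon^2$) constant at $1$, together with $v_{m-1}=1$. The DTW optimum then routes ``around the corner'', matching the long run of $2$'s in $\sigma$ to $w_1,w_2$ and the long run of $1$'s in $\tau$ to $v_{m-1}$ at zero cost, but in doing so it necessarily picks up two matches of size $1+\varepsilon$; the near-diagonal $k$-DTW optimum avoids those and pays only $k+\varepsilon^2$. That asymmetry---not a local gadget trick---is what separates the two optima.
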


\begin{proof}[Proof of \cref{lem:short_kdtw_long_dtw}]
We claim that the two curves that realize the properties stated in the lemma are the following, see Figure~\ref{fig:short_kdtw_long_dtw}:
\[
    \sigma = (v_1, \dots, v_m) = (0,\, -\eps,\, \underbrace{2,\, \dots,\, 2}_{m-5},\, 3,\, 1,\, 2)
\]
and
\[
    \tau = (w_1, \dots, w_m) = (1,\, 1-\eps,\, 3+\eps^2,\, \underbrace{1,\, \dots,\, 1}_{m-5},\, 2,\, 2),
\]
where we set $\eps = 1/10$ and choose $m$ large enough, e.g., $m \geq 1000$. 
Recall that the position $(i,j)$ of a traversal refers to the pair of vertices $(v_i,w_j)$.

\begin{figure}
    \centering
    \includegraphics[width=.5\linewidth]{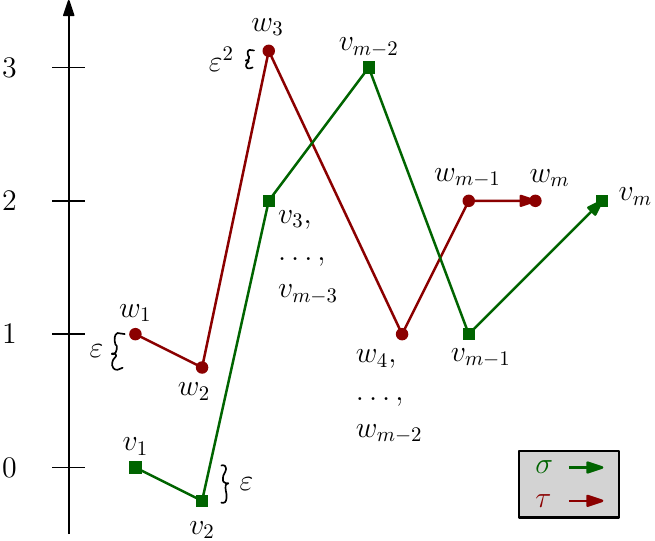}
    \caption{Example curves used to prove Lemma~\ref{lem:short_kdtw_long_dtw}. An example instance that has a long DTW traversal (length $2m-5$) but a short $k$-DTW traversal (length $m+1$)}
    \label{fig:short_kdtw_long_dtw}
\end{figure}

To facilitate the understanding, we first give an informal intuition of the DTW and $k$-DTW traversals.
Thereafter, we continue by giving a rigorous proof.
Due to the small $\eps$-step in the negative direction at the beginning, a $k$-DTW traversal is forced to step forward in both curves since it cannot afford a single matching of cost $1+\eps$. Consequently, the $k$-DTW traversal needs to continue 
via the vertices $w_3,\ldots, w_{m-3}$ in order to match the vertices $v_3,\ldots,v_{m-3}$,
which can be accomplished in a parallel fashion. The traversal can then be completed using only matchings of cost at most $1$. Hence, $k + \eps^2$ is the optimal $k$-DTW cost and it can be realized by an almost parallel traversal of length $m+1$.
For DTW on the other hand, we can leverage that $w_3, \dots, w_m$ can be matched to 
$v_{m-2}, v_{m-1}, v_m$
with a cost of only $\eps^2$. Hence, we match $v_1, \dots, v_{m-3}$ to $w_1, w_2$ with a cost of $m-3+2\eps$ and then transition into the position $(m-2, 3)$ to match the remainder with a cost of $\eps^2$. If we step forward earlier on $\tau$, then either the matching to $v_3, \dots, v_{m-3}$ or 
the matching of vertices $w_{m-2}, w_{m-1}, w_m$ and $v_{m-2}, v_{m-1}, v_m$ at the end becomes too expensive. 

We show the claims for DTW and $k$-DTW separately, starting with the latter.

\paragraph{$k$-DTW:}
Let $k \leq m-3$. We first show an upper bound of $k + \eps^2$ on the $k$-DTW cost, which is simply given by the following traversal of $\sigma$ and $\tau$:
\[
(1, 1), \ldots, (m-3, m-3), (m-3, m-2), (m-2, m-1), (m-1, m), (m, m).
\]
The only matching cost larger than $1$ is induced by matching $v_3$ and $w_3$ with a cost of $1+\eps^2$.
Note that this traversal has length $m+1$.

To show that this upper bound is tight, we first show the following claim: \emph{Consider any traversal and the first time that it reaches $v_{m-3}$. The sum of the $k$ largest matchings of any such partial traversal is at least $k + \eps^2$.}

First, note that the only $w \in \tau$ that have distance $|w - v| < 1$ for any $v \in \{v_1, \dots, v_{m-3}\}$ are $w_{m-1}$, $w_m$, and $w_2$ 
for which only $|v_1 - w_2| < 1$.
Note that $v_1$ and $w_2$ are only matched in a traversal that 
starts with $(1,1),(1,2)$,
but then any traversal of sufficiently low cost needs to continue with 
$(2, 2)$.
Hence, going directly from $(1,1)$ to $(2, 2)$ is always better.
The only way to continue the traversal from here is to continue the parallel traversal, as otherwise we would incur a matching cost of more than $1+\eps^2$ or multiple costs of $1+\eps^2$, which both are prohibitive for a $k$-DTW cost of $k + \eps^2$.
It follows from the above observations that any matching to $v_1, \dots, v_{m-3}$ will lead to a $k$-DTW cost that is at least $k + \eps^2$.
This establishes a lower bound of $k + \eps^2$ on the $k$-DTW cost, which matches the upper bound. 

\paragraph{DTW:}
We first show that the DTW cost is at most $m-3 + 2\eps + \eps^2$ and this is realized by a traversal of length $2m - 5$. The following traversal of $\sigma$ and $\tau$ has these properties:
\begin{equation} \label{eq:dtw_up}
    (1, 1), \dots, (m-4, 1), (m-3, 2), (m-2, 3), (m-1, 4), \dots, (m-1, m-2), (m, m-1), (m, m).
\end{equation}

We now proceed with proving that the DTW cost is also at least $m-3 + 2\eps + \eps^2$ and is only realized by the above or a longer traversal.
Consider which prefixes of $\tau$ can be matched to $v_1, \dots, v_{m-3}$ with a cost that does not exceed $C=m-3 + 2\eps + \eps^2$.
To this end, first note that completing a traversal from any position 
$(m-3, l )$, with $l \geq 4$,  
has cost at least $2$, while any matching of the prefixes $v_1,\ldots,v_{m-3}$ and $w_1,\ldots,w_l$ has at least $m-3$ pairs at distance at least~$1$. Thus, the total cost would be at least $m-2>C$ for our choice of $m$ and $\varepsilon$.
Hence, we have to match one of the vertices $w_1, w_2, w_3$ to $v_{m-3}$.
As $w_2$ and $w_3$ have distance $1 + \eps$, resp. $1 + \eps^2$, to $v_3, \dots, v_{m-3}$, matching them to any subsequence of $v_3, \dots, v_{m-3}$  of size larger than $1/3\eps$ leads to a total matching cost larger than $C$. Thus, any possible traversal that has cost at most $C$ necessarily matches the majority of $v_3, \dots, v_{m-3}$ to $w_1$. If the position $(m-3,1)$ would be part of such a traversal, then all pairs $(i,1)$, $i\in[m-3]$, and the pair $(m-3,2)$ would have to be in the traversal as well, which again has cost more than $C$. Therefore, the position $(m-3,2)$ must belong to the traversal, causing the cost $1+\eps$. But adding even one additional position $(l,2)$, where $l \in [3,m-4]$, into the traversal causes total cost larger than $C$. 
Hence, the only remaining traversal with a cost of at most $C=m-3 + 2\eps + \eps^2$ is the one that realizes the upper bound given in \cref{eq:dtw_up}.
\end{proof}

\bigskip

\clearpage
\subsection{Learning Theory}
Using a more involved application of the chaining technique, we are able to reduce the dependence on $d$, at the cost of an increased dependence on $m$ and $k$.
To this end, we require terminal embeddings defined as follows.
\begin{definition}[Terminal Embeddings]
For a point set $P\subset \mathbb R^d$, an $\varepsilon$-terminal embedding is a mapping $f:\mathbb{R}^d\rightarrow \mathbb{R}^{h}$ such that for all $p\in P$ and all $q\in \mathbb{R}^d$
\[(1-\varepsilon)\cdot\|p-q\|\leq \|f(p)-f(q)\|\leq (1+\varepsilon)\|p-q\|.\]
\end{definition}
Terminal embeddings are similar to Johnson-Lindenstrauss type embeddings, albeit stronger in the sense that $q$ can be any point in $\mathbb{R}^d$, rather than just applying to pairwise distances. Nevertheless, the target dimensions of Johnson-Lindenstrauss embeddings and terminal embeddings are essentially identical, as proven by \citet{NarayananN19} and summarized in the following lemma.
\begin{lemma}
    There exist terminal embeddings of target dimension $h\in O(\varepsilon^{-2}\log |P|)$.
\end{lemma}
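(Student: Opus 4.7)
The proof will combine a Johnson--Lindenstrauss projection on $P$ with a single auxiliary coordinate that locally corrects the distortion at each query point. First, I would draw a random JL matrix $\Pi : \mathbb{R}^d \to \mathbb{R}^{h-1}$ with $h-1 \in O(\varepsilon^{-2} \log |P|)$. By the distributional JL lemma and a union bound over the $\binom{|P|}{2}$ pairs, with high probability $\Pi$ satisfies $(1-\varepsilon)\|p-p'\| \leq \|\Pi p - \Pi p'\| \leq (1+\varepsilon)\|p-p'\|$ for every $p, p' \in P$. This already uses only the announced target dimension; the harder part is to extend the guarantee from pairs within $P$ to pairs $(p, q)$ with $q$ an arbitrary point of $\mathbb{R}^d$.

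Next, I would define the terminal map as $f(q) = (\Pi q,\, g(q))$ where $g \equiv 0$ on $P$, so that $f(p) = (\Pi p, 0)$ for each $p \in P$. For a general $q \in \mathbb{R}^d$, the goal is to pick $g(q) \in \mathbb{R}$ so that $\|\Pi(p-q)\|^2 + g(q)^2 \approx \|p-q\|^2$ simultaneously for every $p \in P$. The existence of such a scalar can be guaranteed via a Kirszbraun-type Lipschitz extension or a convex-duality argument, as originally carried out by Mahabadi, Makarychev, Makarychev, and Razenshteyn and sharpened by Narayanan and Nelson. Once such a $g$ is in place, both inequalities in the terminal-embedding definition follow by substituting the JL guarantee for $\Pi$ into the decomposition $\|f(p)-f(q)\|^2 = \|\Pi(p-q)\|^2 + g(q)^2$ and comparing with $\|p-q\|^2$.

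The main obstacle is showing that a single one-dimensional correction suffices for \emph{all} terminal points $p \in P$ at the same time: a priori the JL matrix $\Pi$ distorts $\|p-q\|$ differently for different $p$'s, so it is not obvious that one scalar $g(q) \in \mathbb{R}$ can repair every residual error consistently. Overcoming this hurdle requires uniform tail and moment estimates on the JL distribution (beyond the pointwise isometry inequalities), strong enough to show that the residual squared-distance errors $\|p-q\|^2 - \|\Pi(p-q)\|^2$, viewed as a function of $p$ with $q$ fixed, have an essentially one-dimensional structure that can be canceled by an appropriately chosen common scalar. This is precisely the analytic step that yields the sharp $O(\varepsilon^{-2} \log |P|)$ target dimension, improving upon the earlier $O(\varepsilon^{-4} \log |P|)$ constructions; once it is established, verifying both inequalities of the terminal-embedding definition is a routine calculation.
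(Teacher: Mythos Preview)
The paper does not give its own proof of this lemma; it simply cites \citet{NarayananN19} and treats the statement as a black box for the subsequent net-size bound. There is thus no in-paper proof to compare against---your outline is essentially a sketch of the cited construction, so in spirit it is aligned with what the paper invokes.

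One technical remark on your sketch: the terminal map in Narayanan--Nelson is not literally $f(q) = (\Pi q,\, g(q))$ with a single scalar correction appended to $\Pi q$. For $q \notin P$ they take the nearest $p_0 \in P$, write $q = p_0 + r u$ with $r = \|q - p_0\|$ and $\|u\|=1$, and set $f(q) = (\Pi p_0 + r u',\, r\alpha)$ for a carefully chosen unit vector $(u',\alpha) \in \mathbb{R}^{h-1}\times\mathbb{R}$. The freedom to move the first $h-1$ coordinates (not just the last one) is precisely what the convex-feasibility / minimax argument exploits to make one extension point work simultaneously for all $p\in P$. So ``a single one-dimensional correction'' slightly understates the construction, although the headline message---that one extra dimension beyond the JL target suffices---is correct, and you have correctly identified the simultaneous-repair step as the analytic crux.
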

An immediate consequence of the terminal embedding guarantee is that both DTW and $k$-DTW are preserved up to $(1\pm \varepsilon)$ factors, if the target dimension is in $O(\varepsilon^{-2}\log (m n))$. In the following, we show that they also yield an alternative bound on the net size for median curves.

\begin{lemma}
    \label{lem:highdimnet}
For an absolute constant $c$, we have that
    \[|\mathcal{N}(V_{P,DTW},\|.\|_{\infty},\varepsilon)|\leq \exp(c\cdot m^3\log^2(mn/\varepsilon) \cdot \varepsilon^{-2})\] 
and
    \[|\mathcal{N}(V_{P,k\text{-}DTW},\|.\|_{\infty},\varepsilon)|\leq \exp(c\cdot  m \cdot k^2\cdot \log^2(mnk/\varepsilon) \cdot \varepsilon^{-2}).\] 
\end{lemma}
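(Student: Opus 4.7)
My plan is to reduce to low ambient dimension via a terminal embedding, and then repeat the covering argument of \cref{lem:simplenet} inside the reduced space. Concretely, let $P$ contain the vertices of all $n$ sample curves, so $|P| \le mn$. For an accuracy parameter $\varepsilon_1$ to be chosen, take a terminal embedding $f:\mathbb R^d \to \mathbb R^h$ with $h = O(\varepsilon_1^{-2} \log(mn))$ that preserves distances from vertices in $P$ to arbitrary points of $\mathbb R^d$ up to a $(1\pm\varepsilon_1)$ factor. For any candidate median curve $\psi=(\psi_1,\dots,\psi_m)\subset B_2^d$ I will consider the ``embedded candidate'' whose vertices are $f(\psi_1),\dots,f(\psi_m)$, which live in a ball of radius $O(1)$ in $\mathbb R^h$. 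I then overlay an $\varepsilon_2$-net on that ball (size $(1+O(1/\varepsilon_2))^h$) and form all $m$-tuples of net centers as approximate median curves $\psi'\subset \mathbb R^h$; there are at most $\exp(O(hm\log(1/\varepsilon_2)))$ such curves. The cost vector $v^{\psi}$ is then approximated by the cost vector computed in $\mathbb R^h$ between $\sigma_i' := (f(v_{i,1}),\dots,f(v_{i,m}))$ and $\psi'$.

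The accuracy bookkeeping runs as follows. For any traversal $T$, the per-edge error incurred by passing from $(\sigma_i,\psi)$ to $(\sigma_i',\psi')$ is
\[\bigl|\|v_{i,a}-\psi_b\| - \|f(v_{i,a})-\psi'_b\|\bigr| \le \varepsilon_1\|v_{i,a}-\psi_b\| + \varepsilon_2 \le 2\varepsilon_1 + \varepsilon_2,\]
by the terminal embedding on the first summand and the triangle inequality plus the net property for the second. For DTW, which sums at most $2m-1$ edge costs, the total per-traversal error is $O(m(\varepsilon_1+\varepsilon_2))$; taking the optimum over traversals, the DTW distances satisfy $|d_{DTW}(\sigma_i,\psi) - d_{DTW}(\sigma_i',\psi')| \le \varepsilon$ once I set $\varepsilon_1,\varepsilon_2 = \Theta(\varepsilon/m)$. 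Plugging back, $h = O(m^2 \varepsilon^{-2}\log(mn))$ and the net has size at most
\[\exp\bigl(c\cdot h\cdot m\cdot \log(m/\varepsilon)\bigr) = \exp\bigl(c'\cdot m^3 \varepsilon^{-2} \log(mn)\log(m/\varepsilon)\bigr) \le \exp\bigl(c''\cdot m^3 \log^2(mn/\varepsilon)\cdot \varepsilon^{-2}\bigr),\]
which matches the claimed bound for $V_{P,DTW}$.

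For $k$-DTW I need the additional observation that when every edge cost is perturbed by at most $\eta$, the top-$k$ sum changes by at most $k\eta$: indeed, taking the original top-$k$ index set $S$ inside the perturbed values yields a lower bound $B \ge A - k\eta$, and evaluating the perturbed top-$k$ index set $S'$ on the original values yields $B \le A + k\eta$. Hence per-traversal error for $k$-DTW is $O(k(\varepsilon_1+\varepsilon_2))$, so I choose $\varepsilon_1,\varepsilon_2 = \Theta(\varepsilon/k)$, obtaining $h = O(k^2\varepsilon^{-2}\log(mn))$ and a net of size
\[\exp\bigl(c\cdot h\cdot m\cdot \log(k/\varepsilon)\bigr) = \exp\bigl(c'\cdot m k^2 \varepsilon^{-2} \log(mn)\log(k/\varepsilon)\bigr) \le \exp\bigl(c''\cdot m k^2 \log^2(mnk/\varepsilon)\cdot \varepsilon^{-2}\bigr),\]
as claimed for $V_{P,k\text{-}DTW}$.

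The main technical care points I expect to be subtle are two. First, justifying that the $k$-DTW distance in the embedded space is a valid surrogate: the top-$k$ indices selected by the optimizer could differ between the original and the embedded instance, which is why I need the $k\eta$ bound above rather than a naive per-edge argument. Second, one has to be slightly careful that the embedded candidate curves $\psi'$ live in the correct ball in $\mathbb R^h$ so that the covering number is indeed $(O(1/\varepsilon_2))^h$; this only uses that $f(B_2^d)$ is contained in a ball of radius $O(1)$ centered at $f(p_0)$ for any fixed $p_0\in P$, which follows from applying the one-sided terminal inequality to $p_0$ and $q\in B_2^d$.
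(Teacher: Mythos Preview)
Your proposal is correct and follows essentially the same approach as the paper: apply a terminal embedding with distortion parameter $\Theta(\varepsilon/z)$ (where $z=m$ for DTW and $z=k$ for $k$-DTW) to reduce to dimension $h\in O(z^2\varepsilon^{-2}\log(mn))$, and then invoke the low-dimensional net of \cref{lem:simplenet} in $\mathbb{R}^h$. Your write-up is in fact more explicit than the paper's, spelling out the per-edge error decomposition and the top-$k$ perturbation bound $|B-A|\le k\eta$ that the paper leaves implicit in the phrase ``preserves DTW and $k$-DTW up to $(1\pm\varepsilon/z)$ factors.''
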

\begin{proof}
Let $z=m$ for DTW and $z=k$ for $k$-DTW.
We first apply a terminal embedding of target dimension $h\in O(z^2\cdot \varepsilon^{-2}\log(mn))$, which preserves DTW and $k$-DTW up to $(1\pm \varepsilon/z)$ factors, and hence both up to an additive error of $\varepsilon$ as all the points on the curves are assumed to have at most unit Euclidean norm. Subsequently, we apply \cref{lem:simplenet}, yielding $2\varepsilon$-nets of size 
    \[|\mathcal{N}(V_{P,DTW},\|.\|_{\infty},\varepsilon)|\leq \exp(c\cdot h\cdot m\cdot \log \frac{m}{\varepsilon})\] 
and
    \[|\mathcal{N}(V_{P,k\text{-}DTW},\|.\|_{\infty},\varepsilon)|\leq \exp(c\cdot h\cdot m\cdot \log \frac{k}{\varepsilon}).\] 
Inserting the appropriate bounds for $h$ and rescaling $\varepsilon$ yields the claim.
\end{proof}

With this lemma, we now give the dimension-free bound on the Rademacher and Gaussian complexity claimed in \cref{thm:gen*}.
\begin{proof}[Proof of Theorem \ref{thm:gen*} for high dimensions]
As with the proof of the low-dimensional case, we rely on the chaining technique. There is, however, a key difference. Let $j_{\max}$ be chosen such that $2^{-j_{\max}}= 1/\sqrt{n}$.
We then have due to the triangle inequality
\begin{align}
\label{eq:smallchain}
\mathcal G(P):=\mathbb{E} \sup_{\psi}\left\vert \frac{1}{n}\sum_{i=1}^n d(\sigma_i,\psi)g_i\right\vert 
&\leq \sum_{j=0}^{j_{\max}}\mathbb{E} \sup_{\psi} \left\vert \frac{1}{n} \sum_{i=1}^n (v^{\psi,j+1}_i-v^{\psi,j}_i)g_i\right\vert \\
\label{eq:largechain}
&+\!\!\sum_{j=j_{\max}}^{\infty}\!\!\mathbb{E} \sup_{\psi} \left\vert \frac{1}{n} \sum_{i=1}^n (v^{\psi,j+1}_i-v^{\psi,j}_i)g_i\right\vert.
\end{align}
We bound \cref{eq:smallchain} and \cref{eq:largechain} separately. For the latter, i.e., \cref{eq:largechain}, observe that 
\begin{align*}
    &\sum_{i=1}^n (v^{\psi,j+1}_i-v^{\psi,j}_i)^2 \leq n\cdot 2^{-2j}.
\end{align*}

Applying the Cauchy-Schwarz inequality to \cref{eq:largechain} and inserting this bound then yields
\begin{align*}
\sum_{j=j_{\max}}^{\infty}\mathbb{E} \sup_{\psi} \left\vert \frac{1}{n} \sum_{i=1}^n (v^{\psi,j+1}_i-v^{\psi,j}_i)g_i\right\vert 
&\leq \sum_{j=j_{\max}}^{\infty} \frac{1}{n}\,\mathbb{E}\sqrt{\sum_{i=1}^n g_i^2} \cdot \sqrt{\sum_{i=1}^n (v^{\psi,j+1}_i-v^{\psi,j}_i)^2} \\
& \leq \sum_{j=j_{\max}}^{\infty} 2^{-j} \leq 2\cdot 2^{-j_{\max}} \in O(1/\sqrt{n}).
\end{align*}
For the former, i.e., \cref{eq:smallchain}, we observe that the variance bounds 
derived in the proof of the low dimensional case are still valid. That is, we have that
\begin{align*}
\varsigma^2_j\coloneqq \sum_{i=1}^n (v^{\psi,j+1}_i-v^{\psi,j}_i)^2\leq 
 n\cdot \max\,&d(\sigma_i,\psi)^2
 \in \begin{cases}
    O(n\cdot m^2) & \text{for } j=0, \text{ and DTW} \\
    O(n\cdot k^2) & \text{for } j=0, \text{and }k\text{-DTW}\\
    O(n\cdot 2^{-2j}) & \text{otherwise.}
\end{cases}
\end{align*}

Thus for some absolute constant $c$, for every $j>0$
\begin{align*}
\mathbb{E} \sup_{\psi} \left\vert \frac{1}{n} \sum_{i=1}^n (v^{\psi,j+1}_i-v^{\psi,j}_i)g_i\right\vert 
&\leq \frac{1}{n}\sqrt{\varsigma^2_j \log(2|\mathcal N(V_P,\|.\|_{\infty},2^{-j})|)} \\
&\leq \sqrt{\frac{c\cdot  m \cdot z^2 \cdot 2^{-2j}\cdot \log^2(mnz2^j)\cdot 2^{2j}}{n}}.
\end{align*}
and for $j=0$
\begin{align*}
&\mathbb{E} \sup_{\psi} \left\vert \frac{1}{n} \sum_{i=1}^n (v^{\psi,1}_i-v^{\psi,0}_i)g_i\right\vert 
\leq \sqrt{\frac{c\cdot  m \cdot z^4\cdot \log^2(mnz)}{n}}.
\end{align*}

Thus, we have
\begin{align}
\nonumber
 \sum_{j=0}^{j_{\max}}\mathbb{E} \sup_{\psi} \left\vert \frac{1}{n} \sum_{i=1}^n (v^{\psi,j+1}_i-v^{\psi,j}_i)g_i\right\vert
&\leq (j_{\max}+1)\cdot \sqrt{\frac{c\cdot  m \cdot z^4\cdot \log^2(mnz2^{j_{\max}})}{n}} \\
\nonumber
&\in O\left(\sqrt{\frac{c\cdot  m \cdot z^4\cdot \log^4(mn)}{n}}\right),
\end{align}
since $z\in O(m)$.
Combining both bounds then yields the claim.
\end{proof}

\begin{proof}[Proof of Proposition \ref{prop:lbrad*}]
We first recall the corresponding bounds for median queries, that is, for curves of complexity $1$. By tightness of Chernoff bounds (see for example Lemma 4 of \citet{KleinY15}), these queries have a Rademacher and Gaussian complexity of $\Omega(\sqrt{1/n})$. Now for every such query, consider a function space for which every point is duplicated $m$ times, that is the point (or complexity $1$ curve) $\sigma$ becomes a complexity $m$ curve $\sigma_m$. In this case $d_{DTW}(\sigma_m,\psi_m) = m\cdot \|\sigma-\psi\|$. Thus, every cost vector is scaled by exactly a factor $m$. By linearity of Rademacher and Gaussian complexity, this likewise increases these complexities by a factor of $m$, yielding the claimed bound of $\Omega(\sqrt{m^2/n})$.
\end{proof}

\clearpage
\subsection{Robustness}
\label{app:robust}

We first observe that the sum of top-$k$ elements of a (non-negative) vector $v\in\mathbb{R}^d$ is indeed a norm. This is immediate from the Ky-Fan norm of diagonal matrices $\operatorname{diag}(v)\in\mathbb{R}^{d\times d}$ whose diagonal entries are the coordinates of $v$. This holds since the Ky-Fan norm of order $k$ is the sum of $k$ largest singular values, which correspond to the $k$ largest values in $v$. This correspondence has been exploited in recent loosely related work on sketching and $k$-sparsity in machine learning \citep{ClarksonW15,MunteanuOW21,MunteanuOP22,MunteanuOW23,MaiMM0SW23,MunteanuO24a} that inspired $k$-DTW.

We follow the outline of \citep[Section 2]{lopuhaa1991breakdown} and extend it towards a notion of the robustness for curves with respect to the $k$-DTW distance (along with its special cases, \Frechet and DTW). To the best of our knowledge, such a notion of robustness for curves has not been introduced or analyzed before.

Given a curve $\pi=(p_1,\ldots,p_m)$ in $\mathbb{R}^d$, let $t_m(\pi)=\sigma$ be its \emph{curve-of-top-$k$-median}, i.e., $\sigma=(s_1,\ldots,s_m)$ where each $s_i$ equals the geometric median restricted to the top-$k$ distances \citep{KrivosijaM19,AfshaniS24} of the set $\{p_1,\ldots,p_m\}$.
More formally, for all $j \in[m]$ we define $s_j = \bar s \in \operatorname{argmin}_{s\in\mathbb{R}^d}\sum_{\text{top-}k} \|p_i - s\|$.
We note that the $\operatorname{argmin}$ may be ambiguous. In that case we may choose an arbitrary but fixed element, i.e., we require that $s_1 = \ldots = s_m$. We finally note that $\sum_{\text{top-}k} \| p_i - \bar s\| = \distkDTW{\pi}{\sigma}$.

It is easy to see that $t_m(\pi)$ is translational equivariant, 
which means that for any $v\in \mathbb{R}^d$ that we add simultaneously to all vertices of a curve, it holds that $t_m(\pi+v)=t_m(\pi)+v$. We prove this property in \Cref{lem:equivariant} below.

First, we define the breakdown point for $t_m(\pi)$ with respect to $k$-DTW to be the smallest number $1\leq \ell \leq m$ of vertices to obtain $\pi_\ell$ that equals $\pi$ in all but $\ell$ many vertices that may be arbitrarily corrupted, such that $\sigma_\ell = t_m(\pi_\ell)$ is also arbitrarily corrupted. More formally, we define
\begin{align}
    \beta(t_m,\pi) \coloneqq \min \{\ell\in[m] \mid \sup\nolimits_{\pi_\ell}  \distkDTW{t_m(\pi)}{t_m(\pi_\ell)}=\infty \}.
\end{align}
Obviously, it holds that $\distkDTW{t_m(\pi)}{t_m(\pi_\ell)} = \distkDTW{\sigma}{\sigma_\ell} = k\cdot \| s_1- (\sigma_\ell)_1\|$. Since $k$ is always finite for finite curves, the supremum is infinite if and only if the distance between the top-$k$-medians of corrupted and uncorrupted curves is infinite.

\begin{lemma}\label{lem:equivariant}
$t_m(\pi)$ is translational equivariant and as a consequence $\beta(t_m,\pi)$ is also translational equivariant. 
\end{lemma}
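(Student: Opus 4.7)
The plan is to establish translational equivariance of $t_m$ by a direct substitution argument, and then deduce equivariance of the breakdown point by combining this with translation invariance of $k$-DTW.

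First I would observe the isometry identity $\|(p_i + v) - (s + v)\| = \|p_i - s\|$, valid for any $i$ and any $s, v \in \mathbb{R}^d$. This implies that the multiset of distances — and hence the top-$k$ distances — from $\pi + v$ to center $s + v$ coincides with that from $\pi$ to center $s$. Consequently,
\begin{equation*}
\sum\nolimits_{\text{top-}k} \|(p_i + v) - (s+v)\| = \sum\nolimits_{\text{top-}k} \|p_i - s\|.
\end{equation*}
Since $s \mapsto s + v$ is a bijection on $\mathbb{R}^d$, minimizing the left-hand side over the center for $\pi + v$ yields exactly the $v$-translates of the minimizers for $\pi$. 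Picking the fixed representative of the argmin compatibly (for example by applying the canonical tie-breaking rule after subtracting $v$), we get $\bar{s}(\pi+v) = \bar{s}(\pi) + v$, and because $t_m$ returns a constant curve whose vertices all equal the chosen top-$k$-median, $t_m(\pi + v) = t_m(\pi) + v$ follows.

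For the consequence on $\beta$, I would observe that the family of $\ell$-corruptions of $\pi + v$ is in bijection with $\{\pi_\ell + v\}$ where $\pi_\ell$ ranges over the $\ell$-corruptions of $\pi$: on the $m - \ell$ uncorrupted coordinates both families agree (shifted by $v$), while on the $\ell$ corrupted coordinates the entries are arbitrary in both cases. Applying translation equivariance of $t_m$ to both $\pi$ and $\pi_\ell$, and using that Euclidean distances — and hence every fixed traversal cost defining $k$-DTW — are invariant under simultaneous translation of both curves, we obtain
\begin{equation*}
\distkDTW{t_m(\pi+v)}{t_m((\pi+v)_\ell)} = \distkDTW{t_m(\pi) + v}{t_m(\pi_\ell) + v} = \distkDTW{t_m(\pi)}{t_m(\pi_\ell)}.
\end{equation*}
Taking suprema over the corrupted curves and minimizing over $\ell$ then gives $\beta(t_m, \pi + v) = \beta(t_m, \pi)$.

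The main subtlety I anticipate lies in handling the potential non-uniqueness of the top-$k$-median, since the bijection of argmin sets $\operatorname{argmin}(\pi + v) = v + \operatorname{argmin}(\pi)$ must be matched by the fixed selection rule in order to identify single curves. I would address this by stipulating that the fixed choice be made in a translation-equivariant manner (which can always be arranged, e.g., by lexicographic selection after centering), or equivalently by reading the lemma at the level of the set-valued top-$k$-median where equivariance is automatic.
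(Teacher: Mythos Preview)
Your proposal is correct and follows essentially the same approach as the paper: both use the isometry $\|(p_i+v)-(s+v)\|=\|p_i-s\|$ to conclude that the top-$k$ objective is preserved under simultaneous translation, hence the minimizer shifts by $v$; and both then deduce invariance of $\beta$ from translation invariance of the distances together with a bijection between corruptions of $\pi$ and of $\pi+v$. Your treatment is slightly more careful than the paper's on the tie-breaking issue for non-unique $\operatorname{argmin}$, which the paper handles only implicitly via its ``arbitrary but fixed'' convention.
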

\begin{proof}
    Since $t_m$ is determined by some choice of $k$ vertices out of the $m$ vertices of $\pi$, we can simply restrict to one arbitrary but fixed choice of top-$k$ indices $S\subseteq[m]$ that determines the minimum. Then for any $s\in \mathbb{R}^d$ and any translation $v\in \mathbb{R}^d$, we have that $\sum_{i\in S} \|p_i - s \| = \sum_{i\in S} \|(p_i + v) - (s + v) \|$. In particular this also holds for the minimizer. It thus follows that $t_m(\pi+v) = t_m(\pi)+v$.

    For the second claim, note that $\|t_m(\pi+v)-t_m(\pi_\ell+v)\|=\|t_m(\pi)+v-(t_m(\pi_\ell)+v)\| = \|t_m(\pi)-t_m(\pi_\ell)\|$. This means that the supremum over all $\pi_\ell$ that differ from $\pi$ in $\ell$ vertices is infinite if and only if the supremum over all $\pi_\ell' = \pi_\ell+v$ that differ from $\pi'=\pi+v$ in $\ell$ vertices is infinite.
\end{proof}

Our aim is to prove that $\beta(t_m,\pi) = \lfloor\frac{k+1}{2}\rfloor$. We start with the upper bound.

\begin{lemma}
\label{lem:rob:upper}
    Let $\pi=(p_1,\ldots,p_m)$ be a curve with $p_i\in \mathbb{R}^d$. Let $t_m(\pi)=\sigma$ be the curve-of-top-$k$-median. Then $\beta(t_m,\pi) \leq \lfloor\frac{k+1}{2}\rfloor$. 
\end{lemma}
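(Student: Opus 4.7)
The plan is a contradiction argument in the spirit of Lopuhaä–Rousseeuw, adapted to the top-$k$ setting. Set $\ell := \lfloor (k+1)/2 \rfloor$ and assume toward contradiction that $\beta(t_m,\pi)>\ell$, so every $\ell$-corruption $\pi_\ell$ satisfies $\|t_m(\pi_\ell)-t_m(\pi)\|\leq M$ for some finite $M$. By \Cref{lem:equivariant} we may assume $t_m(\pi)=0$. Fix an index set $S\subseteq[m]$ of size $\ell$ and a direction $v\in\mathbb{R}^d$ with $\|v\|$ much larger than $M+\max_i\|p_i\|$, and construct two $\ell$-corruptions $\pi^+,\pi^-$ by replacing $p_i$ with $p_i+v$, respectively $p_i-v$, for $i\in S$. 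Let $s^\pm$ be their curve-of-top-$k$-medians; by hypothesis $\|s^\pm\|\leq M$.

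First I would verify a crucial invariance: because $\|v\|$ dominates every other scale, for any candidate $s$ with $\|s\|\leq M$ the $\ell$ corrupted vertices of $\pi^+$ lie at distance at least $\|v\|-M-\max_i\|p_i\|$, strictly exceeding every distance to an uncorrupted vertex. Hence the top-$k$ active set at $s^+$ is exactly $S$ together with $k-\ell$ uncorrupted indices, and symmetrically for $\pi^-$. This stable active set is the extra care needed beyond the classical point-estimator argument.

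Next I would combine the optimality of $s^\pm$ with a two-sided triangle-inequality bound: for each $i\in S$,
\[
\|s^+-p_i-v\|+\|s^--p_i+v\|\;\geq\;\|s^+-s^--2v\|\;\geq\;2\|v\|-2M.
\]
Summing over $S$ and using the invariance above, the left-hand sum is part of the sum of top-$k$ costs at $s^+$ and $s^-$, which yields a lower bound of $2\ell\|v\|-2\ell M$ on the combined top-$k$ cost. For a matching upper bound I would invoke translation equivariance through the test point $v/2$: at $s=v/2$ every vertex of $\pi^+$ is at distance $\approx\|v\|/2$, so the top-$k$ cost of $\pi^+$ at $v/2$ is at most $k\|v\|/2+O(1)$, and symmetrically for $\pi^-$ at $-v/2$. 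By optimality these upper-bound the top-$k$ costs at $s^\pm$, giving $(2\ell-k)\|v\|\leq O(1)$. This contradicts the free choice of arbitrarily large $\|v\|$ as soon as $2\ell>k$, i.e., whenever $k$ is odd and $\ell=(k+1)/2$.

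The main obstacle is the boundary case of even $k$ with $\ell=k/2$, where $2\ell-k=0$ and the leading-order estimate collapses to the trivial $0\leq O(1)$. Here I would push the analysis to the sub-leading correction, which depends linearly on the projection of $s^\pm$ onto $v/\|v\|$, and pick the direction $v$ generically together with a small perturbation of the corruption positions so that the unique minimizer of the top-$k$ cost lies near $\pm v/2$ at distance $\Theta(\|v\|)$ from the origin. Alternatively, in the tied situation the argmin is a non-singleton set containing unbounded elements, and the selection rule for the curve-of-top-$k$-median forces an unbounded element to be chosen. Either route restores the contradiction and completes the argument.
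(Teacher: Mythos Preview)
Your argument for odd $k$ is correct and in the Lopuhaä--Rousseeuw spirit. The even-$k$ case, however, has a genuine gap, and neither of your proposed repairs closes it. When $2\ell=k$, the top-$k$ cost of $\pi^+$ equals $\ell\|v\|+O(1)$ uniformly along the entire segment from $0$ to $v/2$: moving toward $v$ decreases the $\ell$ corrupted distances and increases the $\ell$ largest uncorrupted distances one-for-one, so the leading term is constant and the sub-leading $O(1)$ correction does not force the minimizer away from the origin in any controlled way. Your selection-rule argument also fails: the paper only requires $t_m$ to be an ``arbitrary but fixed'' choice from the argmin, and there exist translation-equivariant rules (for instance, take the element of the argmin closest to the coordinate-wise median of the input vertices) that always return a bounded point for every one-vertex corruption of $\pi=(0,\dots,0)$ when $k=2$.

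The paper avoids the parity issue altogether by corrupting two \emph{different} index sets rather than the same $S$ twice. It takes $E_+$ to be the $\ell$ vertices with largest $\langle p_i,v\rangle$, shifted by $+v$, and $E_-$ to be the $\ell$ vertices with most negative $\langle p_i,v\rangle$, shifted by $-v$. After restricting to the $2\ell\ge k$ vertices in $E_+\cup E_-$ (which dominate the top-$k$ for large $\|v\|$), the key observation is that the restricted curve $\bar\pi_{(-v)}$ equals $\bar\pi_{(+v)}-v$ \emph{exactly}: the $E_+$-vertices pass from $p_i+v$ to $p_i$, the $E_-$-vertices from $p_j$ to $p_j-v$. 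Translation equivariance then gives $t_m(\bar\pi_{(-v)})=t_m(\bar\pi_{(+v)})-v$ on the nose, and the triangle inequality yields $2L\ge k\|v\|$ directly, with no test-point comparison and no loss of an order of $\|v\|$. The idea you are missing is precisely this asymmetric choice of corruption sets that makes the two corrupted configurations exact translates of one another.
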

\begin{proof}
    Since $t_m(\pi)$ is translational equivariant by \Cref{lem:equivariant}, we can assume w.l.o.g. that $t_m(\pi)=(0,\ldots,0)$ of length $m$. For the sake of a contradiction, suppose that $\beta(t_m,\pi) > \lfloor\frac{k+1}{2}\rfloor$. 
    This implies that any $t_m(\pi_\ell)$, where $\pi_\ell$ is obtained from $\pi$ by corrupting $\ell = \lfloor\frac{k+1}{2}\rfloor$ many vertices arbitrarily, must be bounded. 
    That is, there exists $L\in \REAL$ such that $$\|t_m(\pi_\ell)\|_{\text{top-}k}\leq L<\infty\,.$$
    
    Consider a vector $v\in \mathbb{R}^d$, and let $E_{+}$ be the set of $\ell$ vertices of $\pi$ that have largest $\langle p_i, v\rangle$ and similarly let $E_{-}$ be the set of $\ell$ vertices of $\pi$ that have largest $\langle p_i, -v\rangle$ in the opposite direction. These are the points that are furthest away from zero in both directions along the line spanned by $v$. Ties are broken according to the largest distance to the line spanned by $v$ and arbitrarily if ties persist. Let $E=E_{+}\cup E_{-}$.
    
    Now, consider the curve $\pi_{(+v)}$, that we obtain from $\pi$ by adding $v$ to the $\ell$ vertices in $E_{+}$. Similarly consider the curve $\pi_{(-v)}$, that we obtain from $\pi$ by subtracting $v$ from the $\ell$ vertices in $E_{-}$. The other vertices remain untouched.

    If we let $\|v\|$ grow large enough, then the top-$k$ summands that determine either of $t_m(\pi_{(+v)}), t_m(\pi_{(-v)})$ are dominated by the $2\ell \geq k$ extreme elements in $E$ and the other vertices $p_i\notin E$ do not have an influence. We can thus simply remove the vertices $p_i\notin E$. Denote by $\bar\pi_{(+v)}, \bar\pi_{(-v)}$ the truncated sequences.
        
    Since the number of corruptions are $\ell$ in each of the constructed curves, it follows from our assumption that $\|t_m(\pi_{(+v)})\|_{\text{top-}k} = \|t_m(\bar \pi_{(+v)})\|_{\text{top-}k} \leq L$ and $\|t_m(\pi_{(-v)})\|_{\text{top-}k} = \|t_m(\bar \pi_{(-v)})\|_{\text{top-}k} = \|t_m(\bar \pi_{(+v)})-v\|_{\text{top-}k} \leq L$ holds for some $L<\infty$. Putting both bounds together, we get that \[2L\geq \|t_m(\bar \pi_{(+v)}) - v\|_{\text{top-}k} + \|t_m(\bar\pi_{(+v)})\|_{\text{top-}k} \geq k\cdot \|v\| - \|t_m(\bar \pi_{(+v)})\|_{\text{top-}k} + \|t_m(\bar \pi_{(+v)}) \|_{\text{top-}k} = k\cdot \|v\|.\]
    Thus, any large enough corruption vector that satisfies $\|v\|>\frac{2L}{k}$ yields a contradiction.
\end{proof}
It remains to prove the lower bound.
\begin{lemma}
\label{lem:rob:lower}
    Let $\pi=(p_1,\ldots,p_m)$ be a curve with $p_i\in \mathbb{R}^d$. Let $t_m(\pi)=\sigma$ be the curve-of-top-$k$-median. Then $\beta(t_m,\pi) \geq \lfloor\frac{k+1}{2}\rfloor$.
\end{lemma}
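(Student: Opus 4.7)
The plan is to adapt the classical Lopuhaä--Rousseeuw breakdown-point argument to the top-$k$ setting. By the translation equivariance of $t_m$ (\Cref{lem:equivariant}), I would first reduce to the normalized setting $\bar s := t_m(\pi) = 0$, and set $M := \max_{i\in[m]} \|p_i\|$, $\bar s' := t_m(\pi_\ell)$, $R := \|\bar s'\|$, and let $I \subseteq [m]$ with $|I|\leq\ell$ denote the set of corrupted indices, writing $\tilde p_i$ for the $i$-th vertex of $\pi_\ell$. I would proceed by contradiction, assuming $\ell \leq \lfloor(k+1)/2\rfloor - 1$ yet $R$ can be made arbitrarily large, and derive a finite upper bound on $R$ depending only on $\pi$ and $k$.

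The first key step is to turn the optimality inequality $f_{\pi_\ell}(\bar s') \leq f_{\pi_\ell}(0)$, where $f_{\pi_\ell}(c) := \sum_{\text{top-}k}\|\tilde p_i - c\|$, into a per-vertex inequality over a single common index set. Let $S_0 \subseteq [m]$, $|S_0|=k$, be the indices attaining the top-$k$ at center $0$, so that $f_{\pi_\ell}(0) = \sum_{i \in S_0}\|\tilde p_i\|$. Since $f_{\pi_\ell}(\bar s') = \max_{|S|=k}\sum_{i\in S}\|\tilde p_i - \bar s'\|$ dominates the specific $k$-subset sum over $S_0$, chaining these two inequalities yields
\[
\sum_{i\in S_0}\bigl(\|\tilde p_i - \bar s'\| - \|\tilde p_i\|\bigr) \leq 0,
\]
which is the top-$k$ analogue of the optimality identity used in the classical spatial-median argument.

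The next step decomposes this sum by corruption. Setting $a := |S_0 \cap I|\leq\ell$, I would use the reverse triangle inequality on the $k-a$ uncorrupted summands to get $\|p_i - \bar s'\| - \|p_i\| \geq R - 2\|p_i\| \geq R - 2M$, and the triangle inequality on the $a$ corrupted summands to get $\|\tilde p_i - \bar s'\| - \|\tilde p_i\| \geq -R$. Plugging these in and rearranging yields $(k - 2a)\,R \leq 2M(k - a)$. A quick case analysis on the parity of $k$ shows that $\ell \leq \lfloor(k+1)/2\rfloor - 1$ forces $2\ell \leq k-1$, so $k - 2a \geq k - 2\ell \geq 1$, and hence $R \leq 2M(k-\ell)/(k-2\ell)$, contradicting unboundedness.

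The main subtlety, and the only step not already present in the classical point-set argument, is the use of the common index set $S_0$ on both sides of the optimality inequality in the first step; once that reduction is in place, the top-$k$ machinery dissolves and the rest is essentially the scalar Lopuhaä--Rousseeuw calculation. A minor edge case worth flagging in the writeup is that when $R < M$ the reverse triangle inequality does not yield a positive lower bound on $R - 2M$, but in that regime $R$ is already bounded by $M$, so no contradiction is needed.
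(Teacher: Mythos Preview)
Your argument is correct and follows the same Lopuha\"a--Rousseeuw template as the paper, but with one technical choice that makes your version tighter. The paper works with the top-$k$ index set $S$ taken at the \emph{corrupted} median $\hat s$ and lower-bounds $f(\hat s)=\sum_{i\in S}\lVert q_i-\hat s\rVert$ term by term, arriving at $f(\hat s)>\sum_{i\in S}\lVert q_i\rVert$; closing the contradiction then requires comparing this to $f(0)=\max_{|S'|=k}\sum_{i\in S'}\lVert q_i\rVert$, which is in general larger than $\sum_{i\in S}\lVert q_i\rVert$, so an extra word is needed there. You instead fix $S_0$ as the top-$k$ set at $0$ and chain $\sum_{i\in S_0}\lVert\tilde p_i-\bar s'\rVert\le f(\bar s')\le f(0)=\sum_{i\in S_0}\lVert\tilde p_i\rVert$, which collapses both the ``max over $k$-subsets'' structure and the optimality inequality into a single per-index inequality over one common set---this is the cleaner direction to run the argument. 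After that reduction the two proofs coincide (triangle and reverse-triangle bounds, then the count $k-2a\ge 1$), and your explicit bound $R\le 2M(k-\ell)/(k-2\ell)$ matches the paper's conclusion $D\le 2M\lfloor(k-1)/2\rfloor$ up to constants. Your edge-case remark about $R<M$ is harmless but unnecessary: the inequality $(k-2a)R\le 2M(k-a)$ that you derive holds unconditionally, so once $k-2a\ge 1$ the bound on $R$ follows without any case split.
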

\begin{proof}
    Since $t_m(\pi)$ is translational equivariant by \Cref{lem:equivariant}, we can assume w.l.o.g. that $t_m(\pi)=(0,\ldots,0)$ of length $m$. Let $M=\max_{i\in[m]} \|p_i\|$, and let 
    $B(0,2M)$ be the ball of radius $2M$ centered at $0$. Let $\pi_\ell \coloneqq (q_1,\ldots, q_m)$ be a corrupted curve, obtained from $\pi$ by replacing at most $\ell = \lfloor \frac{k-1}{2}\rfloor$ of its vertices by arbitrary vectors. Let $t_m(\pi_\ell)$ be a curve $\hat{\sigma} =(\hat{s},\ldots,\hat{s})$, where $\hat{s}$ minimizes $\sum_{\text{top-}k} \| q_i-\hat{s}\|$, and let $S\subseteq[m]$ be a set of $k$ indices that determine the minimum. 

    We show that $\sup_{\hat{\sigma}} \|t_m(\pi_\ell)\|_{\text{top-}k}$, taken over all possible $\hat{\sigma}$, is finite. We may assume that $\hat{s}\notin B(0,2M)$ since otherwise $\|t_m(\pi_\ell)\|_{\text{top-}k}\leq 2Mk$ is trivially finite. Let $D=\inf_{ v\in{B}(0,2M)} \| \hat{s} - v\|$ be the smallest distance between $\hat{s}$ and $B(0,2M)$. Then it holds that 
    \[
        \frac{\|t_m(\pi_\ell)\|_{\text{top-}k}}{k} = \|\hat{s}\|\leq (D+2M). 
    \]
    For any vertex $q_i$ in $\pi_\ell$ that is corrupted, it holds by triangle inequality that 
    \begin{equation}
        \label{eqn:breakpoint:lem2:eq1}
        \|q_i - \hat{s}\| \geq \|q_i\| - \|\hat{s}\| \geq \|q_i\| - (D+2M).
    \end{equation}
    Let us assume that $D$ is large, that is, $D> 2M\cdot \lfloor \frac{k-1}{2}\rfloor$. Since $\pi_\ell \in B(0,M)$, for each of the uncorrupted vertices $q_i=p_i$ of $\pi_\ell$ (there are at least $m-\ell$ of them), it holds that:
    \begin{equation}
        \label{eqn:breakpoint:lem2:eq2}
        \|p_i - \hat{s}\| \geq D+M \geq D+ \|p_i\|.
    \end{equation}
    The top-$k$ values are attained by at most $\ell=\lfloor \frac{k-1}{2} \rfloor$ corrupted vertices, and at least $k-\ell = k-\lfloor \frac{k-1}{2} \rfloor$ uncorrupted ones. Thus, from \cref{eqn:breakpoint:lem2:eq1,eqn:breakpoint:lem2:eq2} it holds that
    \begin{align*}
        \sum_{\text{top-}k} \| q_i - \hat{s}\| & = \sum_{i\in S} \| q_i - \hat{s}\|\\
        & = \sum_{i\in S, \text{ uncorrupted}} \| p_i - \hat{s}\| + \sum_{i\in S, \text{ corrupted}} \| q_i - \hat{s}\|\\
        & \geq \sum_{i\in S, \text{ uncorrupted}} \left( \| p_i\|   + D \right) + \sum_{i\in S, \text{ corrupted}} \left( \| q_i\|  - (D+2M)  \right)\\
        & \geq \sum_{i\in S}  \| q_i\| + D\cdot \left( k-\left\lfloor \frac{k-1}{2} \right\rfloor  \right)  - (D+2M)\cdot \left\lfloor \frac{k-1}{2} \right\rfloor  \\
        & = \sum_{i\in S}  \| q_i\| + D\cdot \left( k-2\left\lfloor \frac{k-1}{2} \right\rfloor  \right)  - 2M\cdot \left\lfloor \frac{k-1}{2} \right\rfloor  \\
        & \geq \sum_{i\in S}  \| q_i\| + D  - 2M\cdot \left\lfloor \frac{k-1}{2} \right\rfloor  \\
        & > \sum_{i\in S}  \| q_i\| .
    \end{align*}
    This contradicts the assumption that $t_m(\pi_\ell)$ minimizes $\sum_{\text{top-$k$}} \| q_i -\hat{s}\|$, since $0$ would be a better top-$k$-median and the optimum can only be even smaller.

    Therefore it must hold that $D\leq 2M\cdot \lfloor \frac{k-1}{2}\rfloor$. Thus, 
    \[ 
    \sup_{\hat{\sigma}} \|t_m(\pi_\ell)\|_{\text{top-}k} \leq k\cdot (D+2M) \leq k\cdot \left(2M\cdot \left\lfloor \frac{k-1}{2}\right\rfloor+2M\right) = 2M\cdot k\cdot \left\lfloor \frac{k+1}{2}\right\rfloor < \infty
    \] is finite. We conclude that $\beta(t_m,\pi) \geq \lfloor\frac{k-1}{2}\rfloor+1 = \lfloor\frac{k+1}{2}\rfloor$, as desired.
\end{proof}

Overall, our above investigation proves the following theorem.
\begin{theorem}[Restatement of \cref{main:thm:robustness}]\label{thm:robustness}
Let $\pi=(p_1,\ldots,p_m)$ be a curve with $p_i\in \mathbb{R}^d$. Let $t_m(\pi)=\sigma$ be the curve-of-top-$k$-median. Then
\[
    \beta(t_m,\pi) = \left\lfloor\frac{k+1}{2}\right\rfloor.
\]
\end{theorem}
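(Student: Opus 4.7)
My plan is to prove the equality $\beta(t_m,\pi) = \lfloor\frac{k+1}{2}\rfloor$ by establishing two matching inequalities, handled as separate lemmas. A preliminary step is to normalize: using the translation equivariance of $t_m$ (which I would verify by a short computation showing that for any set $S$ of top-$k$ indices, the objective $\sum_{i\in S} \|p_i - s\|$ is invariant under the simultaneous shift $p_i\mapsto p_i+v$, $s\mapsto s+v$), I can assume $t_m(\pi)=(0,\ldots,0)$, so that $\|t_m(\pi)-t_m(\pi_\ell)\|_{\text{top-}k}$ is infinite iff $\|t_m(\pi_\ell)\|\to\infty$ under some sequence of corruptions.

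For the upper bound $\beta(t_m,\pi)\leq \lfloor\frac{k+1}{2}\rfloor$, I would argue by contradiction: assume any corruption of $\ell=\lfloor\frac{k+1}{2}\rfloor$ vertices leaves the top-$k$-median within some finite bound $L$. I would then pick a unit direction $v$ and split the vertices into two halves: the $\ell$ vertices $E_+$ maximizing $\langle p_i,v\rangle$ and the $\ell$ vertices $E_-$ maximizing $\langle p_i,-v\rangle$ (breaking ties consistently). Construct $\pi_{(+v)}$ by translating $E_+$ by $+\lambda v$ and $\pi_{(-v)}$ by translating $E_-$ by $-\lambda v$, for a large scalar $\lambda$. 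The key observation is that once $\lambda$ is large enough, the top-$k$ summands of each corrupted curve are entirely determined by the union $E=E_+\cup E_-$ (since $|E|=2\ell\geq k$), so the uncorrupted vertices outside $E$ play no role and can effectively be removed. Restricting to those truncated sequences and applying translation equivariance yields $t_m(\bar\pi_{(-v)})=t_m(\bar\pi_{(+v)})-\lambda v$; then the triangle inequality forces $k\lambda\|v\|\leq 2L$, which is violated for $\lambda\|v\|>2L/k$.

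For the lower bound $\beta(t_m,\pi)\geq \lfloor\frac{k+1}{2}\rfloor$, I would fix a corruption of only $\ell = \lfloor\frac{k-1}{2}\rfloor$ vertices and show that the resulting top-$k$-median $\hat s$ must lie in a bounded region. Let $M=\max_i\|p_i\|$ and suppose $\hat s$ is at distance $D$ from the ball $B(0,2M)$. Among the $k$ vertices attaining the top-$k$ sum of distances to $\hat s$, at least $k-\ell=\lceil\frac{k+1}{2}\rceil$ are uncorrupted; for each such vertex $p_i$, we have $\|p_i-\hat s\|\geq D+\|p_i\|$, while for corrupted vertices $q_i$ the triangle inequality gives only $\|q_i-\hat s\|\geq \|q_i\|-(D+2M)$. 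Summing and using $k-2\lfloor(k-1)/2\rfloor\geq 1$, the top-$k$ cost at $\hat s$ exceeds $\sum_{i\in S}\|q_i\| + D - 2M\lfloor(k-1)/2\rfloor$, which is strictly greater than the top-$k$ cost at the candidate $s=0$ whenever $D > 2M\lfloor(k-1)/2\rfloor$. This contradicts the optimality of $\hat s$, giving the bound $\|\hat s\|\leq D+2M \leq 2M\lfloor(k+1)/2\rfloor$ in the remaining case.

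The main obstacle will be the upper-bound construction. The subtlety that requires care, in contrast to the classical breakdown analysis of \citet{lopuhaa1991breakdown}, is ensuring that the set of vertices determining the top-$k$ distances is stable under corruption: if a randomly chosen $\ell$ vertices were translated by $\lambda v$, some original (uncorrupted) vertices could still appear among the top-$k$ and prevent the translation-equivariance argument from producing the required contradiction. Choosing $E_+$ and $E_-$ as the extremes along $\pm v$ is what forces the corrupted vertices to dominate the top-$k$ for large $\lambda$, enabling the reduction to the truncated curve where the equivariance trick goes through cleanly.
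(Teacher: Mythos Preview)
Your proposal is correct and follows essentially the same approach as the paper's proof: the same normalization via translation equivariance, the same upper-bound construction (selecting the extremal sets $E_+,E_-$ along $\pm v$, translating, truncating to $E$, and deriving the contradiction $k\|v\|\leq 2L$ via equivariance and the triangle inequality), and the same lower-bound argument (bounding the distance $D$ of $\hat s$ from $B(0,2M)$ by comparing the top-$k$ cost at $\hat s$ to that at the origin). Even your closing discussion of why the choice of $E_\pm$ matters mirrors the paper's remark on the difference to \citet{lopuhaa1991breakdown}.
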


\section{Supplementary Experimental Material}
\label{sec:app:experiments}

All experiments were run on a HPC workstation with AMD Ryzen Threadripper PRO 5975WX, 32 cores at 3.6GHz, 512GB DDR4-3200.
Our Python code is available at \url{https://github.com/akrivosija/kDTW}.

\subsection{Synthetic Data}

The choice of the parameters for the curves $A_l$, $B$ and $C$, cf. \cref{fig:examplecurves}, that are used for our clustering experiment presented in \cref{fig:linkage:single,fig:app:linkage:single} is as follows:
\begin{compactitem}
    \item the complexity of the curves $m=1\,001$;
    \item the $k$-DTW distance parameter $k=\lfloor2.5\cdot \log(m)\rfloor=17$;
    \item the number of ``peaks'' in the curves in $A_l$: $l\in  
    \lbrace 5,6,7,8\rbrace$;
    \item the size of the ``peaks'' in the $A_l$-type curves: $L=2\cdot \log(m) =13.816$;
    \item the ``small values'' bound $\varepsilon=0.2$; 
    \item the number of curves of each type $A_l$, $B$, $C$: $20$ ($n=60$).
\end{compactitem}

\subsection{Agglomerative Clustering}
Clustering is an important unsupervised Machine Learning problem, that aims at partitioning the input, consisting of data objects (often simply called points), such that ``similar'' points are grouped in the same part, while ``dissimilar'' points are assigned to different parts. Center based clustering such as $k$-means require to efficiently calculate a center point for each of $k$ clusters. The input points are then assigned to the cluster represented by the closest center. 
Since computing a center based clustering for more than one curve is computationally hard even to approximate in most scenarios \citep{DriemelKS16,BuchinDGHKLS19,BuchinDS20,BulteauFN20}, as already mentioned in the introduction, we use a popular alternative called Hierarchical Agglomerative Clustering (HAC), which requires only the pairwise distances between the input data points.
The partition process starts with each curve being a singleton cluster. The current clusters are then iteratively merged in ascending order of dissimilarity. Given a dissimilarity measure $d(a,b)$ for two curves, such as \Frechet, DTW, or our novel $k$-DTW, the dissimilarity of two clusters $A,B$ is defined via so called \emph{linkage functions}. The arguably most popular linkage functions are \emph{single linkage} $d(A,B)=\min_{a\in A,b\in B} d(a,b)$ and \emph{complete linkage} $d(A,B)=\max_{a\in A,b\in B} d(a,b)$, where the distance of two clusters is defined as the minimum resp. maximum distance of two input data points taken from each of the two clusters \citep{KaufmanR90}.

In \cref{fig:app:linkage:single} (top) we see the results for single linkage clustering using the three distance measures. DTW (left) clearly has difficulties to distinguish between type-$A_l$ and type-$C$ curves, since the pairs $(A_l, C)$ are very close,  $(A_l,B)$ are moderately close, while $(B,C)$ are far from each other, violating the triangle inequality. 
The \Frechet distance (right) has very large intra-cluster distances between type-$A_l$ curves that are of the same magnitude as the inter-cluster distances. With slight adjustments to the curves' parameters, we can aggravate the situation such that \Frechet cannot cluster the curves correctly either. Note, that we did not create such an extreme example as it would not only be synthetic but also \emph{handcrafted} to a bad situation. 
$k$-DTW (middle) can clearly identify the clusters, as the triangle inequality is less affected, while being robust to the spikes of type-$A_l$ curves. As a result, we obtain pure clusters whose intra-cluster distances are reasonably small and clearly distinguished from the large inter-cluster distances. The results for complete linkage are very similar, cf. \cref{fig:app:linkage:single} (bottom).

\begin{figure*}[ht!]
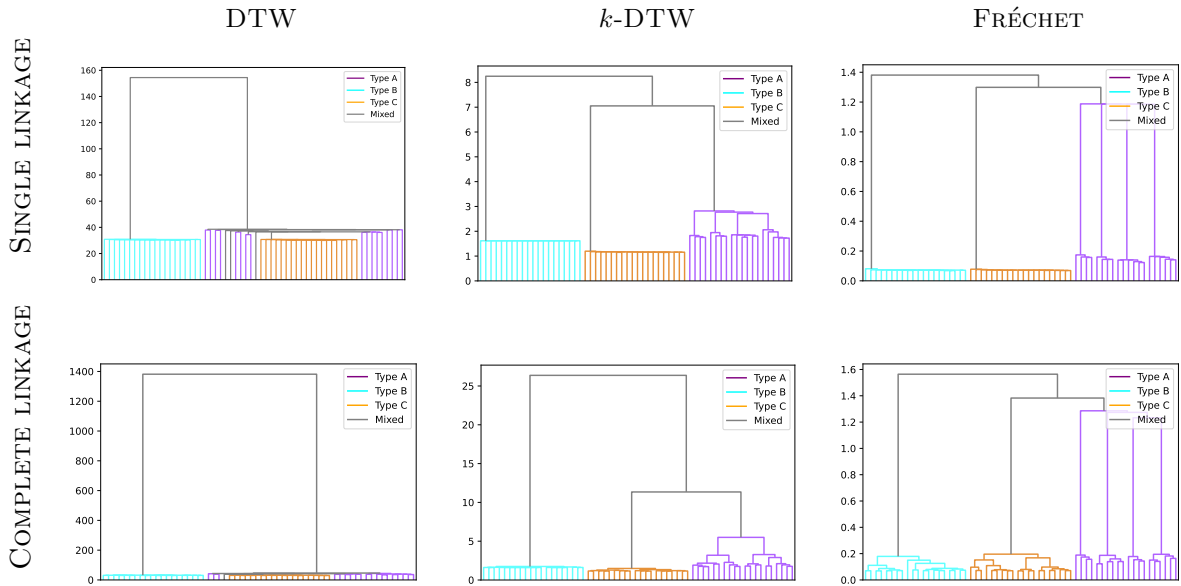

\begin{center}
\begin{tabular}{cccc}
{~}&
{\small\hspace{.5cm}\textsc{DTW}}&{\small\hspace{.5cm}\textsc{$k$-DTW}}&
{\small\hspace{.5cm}\textsc{\Frechet}}\\
\rotatebox{90}{\hspace{15pt}\textsc{Single linkage}}&
\includegraphics[height=0.19\linewidth]{Figures/dtw_single_linkage.pdf}& 
\includegraphics[height=0.189\linewidth]{Figures/kDTW_single_linkage.pdf}&
\includegraphics[height=0.19\linewidth]{Figures/frechet_single_linkage.pdf}\\
\rotatebox{90}{\hspace{5pt} \textsc{Complete linkage}}&
\includegraphics[height=0.19\linewidth]{Figures/dtw_complete_linkage.pdf}&
\includegraphics[height=0.189\linewidth]{Figures/kDTW_complete_linkage.pdf}&
\includegraphics[height=0.19\linewidth]{Figures/frechet_complete_linkage.pdf}
\end{tabular}
\vskip -0.1in
\caption{Single (top) and complete (bottom) 
linkage clustering; DTW (left), $k$-DTW (middle), \Frechet distance (right); synthetic data.
}
\label{fig:app:linkage:single}
\end{center}
\end{figure*}

\subsection{\texorpdfstring{$l$}{l}-Nearest Neighbor Classification}
Classification is an important ML problem, where based on an input training set $X_M\subseteq X$, consisting of $M$ input points equipped with one of $K$ labels, we classify the whole domain $X$ into the $K$ categories. 
The $l$-nearest neighbor ($l$-NN) model provides a well-known distance based classifier~\citep{devroye2013probabilistic}. 
Let the input training set $X_M$ consist of $M$ pairs $(\chi_i, y_i)$, where $\chi_i$ are from the instance domain $X$ with a distance measure $d:X\times X \rightarrow \mathbb{R}_{\geq 0}$, and $y_i\in [K]$, for all $i\in [M]$. Here, $X$ is the set of polygonal curves in $\REAL^d$, and the distance measures are the \Frechet, DTW, and $k$-DTW distances. 
Additionally, we perform experiments using several related distance measures: weak discrete \Frechet distance~\citep{BuchinOS19}, edit distance with real penalty~\citep{ChenN04}, and two variants of partial DTW distance (there are multiple distance measures under the same name in the literature). We use \emph{partial window DTW}~\citep{sakoe1978dynamic}, which matches vertices of each curve only within a frame of bounded width $w$, and \emph{partial segment DTW}~\citep{tak2007leaf,luo2024accurate}, which partitions both curves into $L$ segments, each of which are matched via standard DTW.
We set the window width to $w=50$ and the number of segments to $L=10$ in all experiments with these partial DTW variants.

Then, a curve $\tau\in X$ is classified into category $i\in [K]$, if the (relative) majority of the $l$ curves of $X_M$ that have smallest distance to $\tau$ are labeled with $y_i$~\citep{devroye2013probabilistic}.

\subsubsection{Classification of the Open University Learning Analytics Dataset}

We first present the results obtained on the real-world data from the Open University Learning Analytics Dataset (OULAD)~\citep{oula_dataset_349}.
This dataset contains data about courses, students and their interactions with a virtual learning environment. The clicks of a student, aggregated by days, are represented as clickstreams, which we interpret as polygonal curves, in order to analyze them using the \Frechet, DTW, and our novel $k$-DTW distances.

We represent $n=275$ clickstreams of one course (a course that started in ``October 2014'') as polygonal curves of complexity $m=294$. Each curve is labeled according to the achievement of the student in the course, as either ``fail'' ($0$) - 46 students, or ``pass'' ($1$) - 229 students. We use an $l$-NN model in order to predict the final exam result, which serves as a binary label indicating 'pass' or 'fail'. 
We note that the data set is very sparse, as it contains $56\,027$ vertices of the curves with value $0$. This amounts to $69.3\%$ of all $80\,850$ vertices.

We run a $100$ times repeated $6$-fold cross validation, using $l$-NN with a standard choice of $l=\lceil\sqrt{n}\rceil=17$ \citep[cf.][]{devroye2013probabilistic}. We perform an exponential search for the best parameter $k\in\lbrace 2^i\mid i\in[8]\rbrace$ for the $k$-DTW distance, and a subsequent finer search. The best results were obtained for $k\in\{64,76\}$,
which amounts to roughly $20\%-25\%$ of the curves' complexity. The results for all considered distance measures (the \Frechet distance, DTW, and $k$-DTW for various values of $k$) are given in \cref{tab:classification:best}. $k$-DTW outperformed the classification performance of \Frechet and DTW by a margin of up to $8.2\%$ resp. $1.8\%$, when measured by AUC. $k$-DTW also shows slight improvements in terms of accuracy and $F_1$-score.

\begin{table}[ht!]
\caption{Binary classification of polygonal curves using $l$-nearest neighbor algorithm, on the real-world Open University Learning Analytics Dataset (OULAD)~\citep{oula_dataset_349}. Mean classification performance measures (AUC, accuracy, $F_1$-score) taken over $100$ independent repetitions of 6-fold cross validation and standard errors ``mean (std.err.)''. \textcolor{blue}{Best} and \textcolor{red}{worst} values are highlighted in blue/red colors. At least one $k$-DTW variant always yields \textcolor{blue}{best} or \textcolor{orange}{better while close to the best} results compared to either standard DTW or Fr\'echet.
    }
    \label{tab:binary:classification:oulad}
\vskip 0.1in
\begin{adjustwidth}{-1.5cm}{-1.5cm}
    \centering
    \begin{tabular}{lcccrr}
\textbf{Distance} & \textbf{AUC	} & \textbf{Accuracy} & $F_1$\textbf{-Score} & $T$(\emph{min}) & $T/T_{DTW}$\\
\hline
Fr\'echet & 0.73737 (0.00194) & 0.83742 (0.00072) & 0.90953 (0.00040) & 1.7 & 1.0\\
\hline
2-DTW & 0.74207 (0.00197) & 0.83455 (0.00083) & 0.90779 (0.00047) & 77.4 & 47.9\\
4-DTW & 0.75888 (0.00184) & 0.83706 (0.00068) & 0.90929 (0.00038) & 80.8 & 50.0\\
8-DTW & 0.77092 (0.00164) & 0.83706 (0.00067) & 0.90946 (0.00036) & 78.5 & 48.6\\
16-DTW & 0.76771 (0.00164) & 0.83808 (0.00092) & 0.90932 (0.00052) & 73.7 & 45.6\\
32-DTW & 0.78213 (0.00160) & 0.84946 (0.00090) & 0.91438 (0.00050) & 68.9 & 42.6\\
56-DTW & 0.79248 (0.00158) & 0.85320 (0.00092) & 0.91651 (0.00051) & 62.8 & 38.8\\
60-DTW & 0.79001 (0.00157) & 0.85495 (0.00090) & 0.91760 (0.00050) & 61.7 & 38.2\\
64-DTW & 0.78795 (0.00150) & \textcolor{orange}{0.85557 (0.00088)} & \textcolor{orange}{0.91796 (0.00048)} & 60.6 & 37.4\\
68-DTW & 0.78855 (0.00156) & 0.85546 (0.00089) & 0.91790 (0.00049) & 59.5 & 36.8\\
72-DTW & 0.79639 (0.00152) & 0.85520 (0.00092) & 0.91775 (0.00051) & 58.3 & 36.0\\
76-DTW & \textcolor{blue}{0.79772 (0.00150)} & 0.85469 (0.00092) & 0.91744 (0.00051) & 57.6 & 35.6\\
80-DTW & 0.79512 (0.00150) & 0.85528 (0.00091) & 0.91779 (0.00050) & 56.3 & 34.8\\
84-DTW & 0.79713 (0.00145) & 0.85528 (0.00091) & 0.91779 (0.00050) & 55.2 & 34.1\\
88-DTW & 0.79691 (0.00147) & 0.85528 (0.00091) & 0.91779 (0.00050) & 54.2 & 33.5\\
92-DTW & 0.79522 (0.00151) & 0.85528 (0.00091) & 0.91779 (0.00050) & 53.2 & 32.9\\
96-DTW & 0.79301 (0.00147) & 0.85528 (0.00091) & 0.91779 (0.00050) & 52.3 & 32.3\\
100-DTW & 0.79347 (0.00142) & 0.85528 (0.00091) & 0.91779 (0.00050) & 51.3 & 31.7\\
104-DTW & 0.79191 (0.00147) & 0.85528 (0.00091) & 0.91779 (0.00050) & 50.5 & 31.2\\
108-DTW & 0.78859 (0.00146) & 0.85528 (0.00091) & 0.91779 (0.00050) & 49.7 & 30.7\\
112-DTW & 0.78556 (0.00148) & 0.85528 (0.00091) & 0.91779 (0.00050) & 49.0 & 30.3\\
116-DTW & 0.78606 (0.00150) & 0.85528 (0.00091) & 0.91779 (0.00050) & 48.1 & 29.7\\
120-DTW & 0.78541 (0.00152) & 0.85528 (0.00091) & 0.91779 (0.00050) & 47.5 & 29.4\\
124-DTW & 0.78399 (0.00155) & 0.85528 (0.00091) & 0.91779 (0.00050) & 46.8 & 28.9\\
128-DTW & 0.78620 (0.00159) & 0.85528 (0.00091) & 0.91779 (0.00050) & 46.4 & 28.7\\
256-DTW & 0.77699 (0.00164) & 0.85528 (0.00091) & 0.91779 (0.00050) & 34.1 & 21.0\\
\hline
DTW & 0.78360 (0.00160) & 0.85459 (0.00085) & 0.91735 (0.00047) & 1.6 & 1.0\\
\hline
EditRealPenalty  & 0.65800 (0.00261) & 0.79892 (0.00335) & \textcolor{red}{0.86557 (0.00348)} & 4.0 & 2.5  \\
PartSegmDTW  & \textcolor{red}{0.54652 (0.00255)} & \textcolor{red}{0.79778 (0.00180)} & 0.88161 (0.00149) & 0.2 & 0.1  \\
PartWinDTW & 0.77959 (0.00175) & \textcolor{blue}{0.87823 (0.00051)} & \textcolor{blue}{0.92956 (0.00031)} & 0.5 & 0.3  \\
WeakFr\'echet & 0.73917 (0.00204) & 0.83618 (0.00073) & 0.90896 (0.00041) & 3.0 & 1.9  \\
\hline
\hline
\end{tabular}
\end{adjustwidth}
\end{table}
\clearpage

\subsubsection{Parameter \texorpdfstring{$k$}{k} Tuning and Evaluation on Hold Out Data
}\label{app:explicitk}

In addition to our explicit parameter search, cross validated over the full data, we perform another hold out evaluation. Here the parameter $k$ of the $k$-DTW distance is cross validated on training data, and the best value of $k$ is then used on a hold out test data set, for $l$-nearest neighbor classification. For that sake, we split the input into training and test sets $A$ and $B$, respectively. We randomly split the OULAD dataset~\citep{oula_dataset_349}, where the test set $B$ has a $\lbrace \frac{1}{3}, \frac{1}{4}, \frac{1}{5}\rbrace$ fraction of the input. We find the best value of $k$ on $A$ and evaluate it on $B$. Finally, we compare the performances of the \Frechet, $k$-DTW and DTW distances on the test set $B$. In all experiments we run 100 times repeated 6-fold cross-validation, using $l=\lceil \sqrt{|A|}\rceil$ for the $l$-NN training and test on the remaining hold out data.

The results are presented in \cref{tab:binary:learning:oulad1_3,tab:binary:learning:oulad1_4,tab:binary:learning:oulad1_5}. In all of the experiments, the best $k$-DTW variant that is selected on the training set is either best or close to the best, compared to the standard DTW and the \Frechet distance. Note that very small choices of the split size imply that the classification quality on the test set significantly decreases (e.g. in \cref{tab:binary:learning:oulad1_5}).

\begin{table}[!ht]
\caption{Binary classification of polygonal curves using $l$-nearest neighbor algorithm, on the real-world OULAD  dataset~\citep{oula_dataset_349}. 
Training set consists of $\frac{2}{3}$ of the input set. Testing set is remaining $\frac{1}{3}$.   
Mean classification performance measures (AUC, accuracy, $F_1$-score) taken over $100$ independent repetitions of 6-fold cross validation and standard errors ``mean (std.err.)''. \textcolor{blue}{Best} values are highlighted in blue color.
    }
    \label{tab:binary:learning:oulad1_3}
    \centering

\vspace{3mm}
    
    \begin{tabular}{lccc|lccc}
        \multicolumn{8}{c}{\textbf{Training}}   \\ 
        \textbf{Distance} & \textbf{AUC} & \textbf{Accuracy} & \textbf{$F_1$-Score} & \textbf{Distance} & \textbf{AUC} & \textbf{Accuracy} & \textbf{$F_1$-Score}  \\ \hline
        2-DTW & 0.71501 & 0.84599 & 0.91524 & 88-DTW & 0.75375 & 0.85425 & 0.91969  \\  
        4-DTW & 0.70214 & 0.84653 & 0.91554 & 92-DTW & 0.74882 & 0.85425 & 0.91969  \\  
        8-DTW & 0.70701 & 0.84845 & 0.91661 & 96-DTW & 0.75097 & 0.85425 & 0.91969  \\  
        16-DTW & 0.69618 & 0.85288 & 0.91907 & 100-DTW & 0.74912 & 0.85425 & 0.91969  \\  
        32-DTW & 0.71676 & 0.84858 & 0.91651 & 104-DTW & 0.74578 & 0.85425 & 0.91969  \\  
        56-DTW & 0.74510 & 0.85508 & 0.92015 & 108-DTW & 0.74237 & 0.85425 & 0.91969  \\  
        60-DTW & 0.75037 & 0.85524 & 0.92024 & 112-DTW & 0.74483 & 0.85425 & 0.91969  \\  
        64-DTW & 0.75410 & 0.85192 & 0.91832 & 116-DTW & 0.74176 & 0.85425 & 0.91969  \\  
        68-DTW & 0.76162 & 0.85431 & 0.91971 & 120-DTW & 0.74261 & 0.85425 & 0.91969  \\  
        72-DTW & \textcolor{blue}{0.76220} & 0.85437 & 0.91975 & 124-DTW & 0.74078 & 0.85425 & 0.91969  \\  
        76-DTW & 0.75543 & 0.85426 & 0.91969 & 128-DTW & 0.73960 & 0.85425 & 0.91969  \\  
        80-DTW & 0.75682 & 0.85404 & 0.91957 & 256-DTW & 0.72549 & 0.85425 & 0.91969  \\  
        84-DTW & 0.75366 & 0.85425 & 0.91969 & ~ & ~ & ~ &   \\ \hline
    \end{tabular}

\vspace{3mm}

    \begin{tabular}{lccc}
        \multicolumn{4}{c}{\textbf{Testing}} \\
        \textbf{Distance} & \textbf{AUC} & \textbf{Accuracy} & \textbf{$F_1$-Score}  \\ \hline
        72-DTW & \textcolor{blue}{0.78912 (0.00382)} & \textcolor{blue}{0.80103 (0.00266)} & 0.87266 (0.00174)  \\  
        DTW & 0.77772 (0.00381) & 0.79724 (0.00258) & 0.87005 (0.00167)  \\  
        \Frechet & 0.65995 (0.00542) & 0.78542 (0.00144) & \textcolor{blue}{0.87542 (0.00087)} \\ \hline
    \end{tabular}
\end{table}

\begin{table}[!ht]
\caption{Binary classification of polygonal curves using $l$-nearest neighbor algorithm, on the real-world OULAD  dataset~\citep{oula_dataset_349}. 
Training set consists of $\frac{3}{4}$ of the input set. Testing set is remaining $\frac{1}{4}$.   
Mean classification performance measures (AUC, accuracy, $F_1$-score) taken over $100$ independent repetitions of 6-fold cross validation and standard errors ``mean (std.err.)''. \textcolor{blue}{Best} values are highlighted in blue color.
    }
    \label{tab:binary:learning:oulad1_4}
    \centering

    \vspace{3mm}
    
    \begin{tabular}{lccc|lccc}
        \multicolumn{8}{c}{\textbf{Training}}   \\ 
        \textbf{Distance} & \textbf{AUC} & \textbf{Accuracy} & \textbf{$F_1$-Score} & \textbf{Distance} & \textbf{AUC} & \textbf{Accuracy} & \textbf{$F_1$-Score}  \\ \hline
        2-DTW & 0.74698 & 0.83849 & 0.91098 & 88-DTW & 0.77448 & 0.84538 & 0.91465  \\ 
        4-DTW & 0.73626 & 0.83941 & 0.91141 & 92-DTW & 0.77248 & 0.84538 & 0.91465  \\ 
        8-DTW & 0.74288 & 0.83839 & 0.91089 & 96-DTW & 0.77127 & 0.84538 & 0.91465  \\ 
        16-DTW & 0.74744 & 0.83912 & 0.91130 & 100-DTW & 0.77002 & 0.84538 & 0.91465  \\ 
        32-DTW & 0.75834 & 0.83714 & 0.90998 & 104-DTW & 0.76765 & 0.84538 & 0.91465  \\ 
        56-DTW & 0.77456 & 0.84528 & 0.91459 & 108-DTW & 0.76388 & 0.84538 & 0.91465  \\ 
        60-DTW & 0.78048 & 0.84533 & 0.91462 & 112-DTW & 0.76606 & 0.84538 & 0.91465  \\ 
        64-DTW & 0.78281 & 0.84426 & 0.91401 & 116-DTW & 0.76558 & 0.84538 & 0.91465  \\ 
        68-DTW & \textcolor{blue}{0.78641} & 0.84538 & 0.91465 & 120-DTW & 0.76449 & 0.84538 & 0.91465  \\ 
        72-DTW & 0.78423 & 0.84538 & 0.91465 & 124-DTW & 0.76259 & 0.84538 & 0.91465  \\ 
        76-DTW & 0.78071 & 0.84538 & 0.91465 & 128-DTW & 0.76108 & 0.84538 & 0.91465  \\ 
        80-DTW & 0.78088 & 0.84513 & 0.91451 & 256-DTW & 0.75135 & 0.84538 & 0.91465  \\ 
        84-DTW & 0.77768 & 0.84538 & 0.91465 & ~ & ~ & ~ &   \\ \hline
    \end{tabular}
    
\vspace{3mm}

    \begin{tabular}{lccc}
        \multicolumn{4}{c}{\textbf{Testing}} \\
        \textbf{Distance} & \textbf{AUC} & \textbf{Accuracy} & \textbf{$F_1$-Score}  \\ \hline
        68-DTW & \textcolor{blue}{0.74092 (0.00533)} & 0.79057 (0.00164) & 0.87842 (0.00102)  \\ 
        DTW & 0.73140 (0.00583) & 0.79001 (0.00159) & 0.87812 (0.00098)  \\ 
        \Frechet & 0.57310 (0.00665) & \textcolor{blue}{0.79696 (0.00022)} & \textcolor{blue}{0.88270 (0.00028)}  \\ \hline
    \end{tabular}
\end{table}

\begin{table}[!ht]
\caption{Binary classification of polygonal curves using $l$-nearest neighbor algorithm, on the real-world OULAD  dataset~\citep{oula_dataset_349}. 
Training set consists of $\frac{4}{5}$ of the input set. Testing set is remaining $\frac{1}{5}$.   
Mean classification performance measures (AUC, accuracy, $F_1$-score) taken over $100$ independent repetitions of 6-fold cross validation and standard errors ``mean (std.err.)''. \textcolor{blue}{Best} values are highlighted in blue color.
    }
    \label{tab:binary:learning:oulad1_5}
    \centering

\vspace{3mm}
    
    \begin{tabular}{lccc|lccc}
        \multicolumn{8}{c}{\textbf{Training}}   \\ 
        \textbf{Distance} & \textbf{AUC} & \textbf{Accuracy} & \textbf{$F_1$-Score} & \textbf{Distance} & \textbf{AUC} & \textbf{Accuracy} & \textbf{$F_1$-Score}  \\ \hline
        2-DTW & 0.77194 & 0.85320 & 0.91767 & 88-DTW & 0.81430 & 0.85380 & 0.91729  \\  
        4-DTW & 0.77292 & 0.85021 & 0.91620 & 92-DTW & 0.81094 & 0.85380 & 0.91729  \\  
        8-DTW & 0.76751 & 0.84344 & 0.91276 & 96-DTW & 0.81126 & 0.85380 & 0.91729  \\  
        16-DTW & 0.77602 & 0.84038 & 0.91079 & 100-DTW & 0.81037 & 0.85380 & 0.91729  \\  
        32-DTW & 0.79556 & 0.84879 & 0.91431 & 104-DTW & 0.80934 & 0.85380 & 0.91729  \\  
        56-DTW & 0.81157 & 0.85394 & 0.91737 & 108-DTW & 0.80691 & 0.85380 & 0.91729  \\  
        60-DTW & 0.81480 & 0.85394 & 0.91737 & 112-DTW & 0.80950 & 0.85380 & 0.91729  \\  
        64-DTW & 0.81617 & 0.85348 & 0.91711 & 116-DTW & 0.80972 & 0.85380 & 0.91729  \\  
        68-DTW & 0.81908 & 0.85384 & 0.91732 & 120-DTW & 0.80900 & 0.85380 & 0.91729  \\  
        72-DTW & \textcolor{blue}{0.82163} & 0.85375 & 0.91726 & 124-DTW & 0.80801 & 0.85380 & 0.91729  \\  
        76-DTW & 0.81766 & 0.85371 & 0.91724 & 128-DTW & 0.80602 & 0.85380 & 0.91729  \\  
        80-DTW & 0.81915 & 0.85357 & 0.91716 & 256-DTW & 0.79856 & 0.85380 & 0.91729  \\  
        84-DTW & 0.81579 & 0.85380 & 0.91729 & ~ & ~ & ~ &   \\ \hline
    \end{tabular}

\vspace{3mm}

        \begin{tabular}{lccc}
        \multicolumn{4}{c}{\textbf{Testing}} \\
        \textbf{Distance} & \textbf{AUC} & \textbf{Accuracy} & \textbf{$F_1$-Score}  \\ \hline
        72-DTW & \textcolor{blue}{0.63990 (0.00602)} & \textcolor{blue}{0.79996 (0.00021)} & \textcolor{blue}{0.88358 (0.00031)}  \\  
        DTW & 0.62175 (0.00675) & 0.79996 (0.00021) & 0.88358 (0.00031)  \\  
        \Frechet & 0.48112 (0.00817) & 0.79996 (0.00021) & 0.88358 (0.00031)  \\ \hline
    \end{tabular}
\end{table}

\clearpage

\subsubsection{Classification of Further Datasets}

We have evaluated the $l$-Nearest Neighbor Classification on additional datasets that have been used for evaluation of different distance measures in \citep{AghababaP23}. Their $l$-NN experiments are performed with $l=5$ and evaluated on a random $70/30$ hold-out split, while we choose $l=\lceil \sqrt{n}\rceil$, as for the previous OULAD~\citep{oula_dataset_349} dataset and evaluate by a $6$-fold cross validation, $100$ times repeated independently. 
We compare $k$-DTW again against the discrete \Frechet and the DTW distance, as in the rest of our manuscript.

Here, we need to choose the value of $k$ for our $k$-DTW distance. We will see that it is not needed to perform the extensive exponential or linear search, as in practice it suffices to choose some small value, compared to the input curves' complexity $m$. Thus, we compare the $k$-DTW for the values of $k\in\lbrace \ln(m), \sqrt{m}, m/10, m/4\rbrace$, and we stress that the linear values are only needed for short curves which was underlined by our theoretical results. Note that in the paper of \citet{AghababaP23} there is no analysis of the complexity of the input curves $m$ -- we present these values, together with further descriptive parameters of the observed real-world datasets in \cref{tab:binary:description:phillips}.

\begin{table}[ht!]
\caption{Description of the real-world datasets from \citep{AghababaP23}, where each curve has a label from $\lbrace 0,1\rbrace$. $n$ -- the number of the curves in a dataset. $m$ -- the complexity of curves in a dataset. $d$ -- the dimension of the ambient space.}
    \label{tab:binary:description:phillips}
\vskip 0.1in
\begin{adjustwidth}{-1.5cm}{-1.5cm}
    \centering
    \begin{tabular}{lc|cc|ccc|c}
\textbf{Dataset} & \textbf{Reference} & \# \textbf{Curves} & \textbf{Categories} & \textbf{Min} & \textbf{Max} & \textbf{Average} & \textbf{Dim.} \\
 &  &  $n$& &  $m$ &  $m$&  $m$&	 $d$\\
\hline
Cars+Bus& \citep{Traject_Carbus}& $120$ & $(76,44)$ & $10$ &	$645$ &	$146$ &	$2$\\
Sim. C+B & \citep{AghababaP23}& $446$ & $(226,220)$ &
 $9$ & $645$ & $126$ & $2$\\
Char0uw & \citep{Traject_Characters}& $256$ & $(131,125)$ & $89$ & $153$ & $117$ & $2$\\
Char1nw & \citep{Traject_Characters}& $255$ & $(130,125)$ & $108$ & $143$ & $126$ & $2$\\
Char2nu & \citep{Traject_Characters}& $261$ & $(130,131)$ & $92$ & $182$ & $120$ & $2$\\
Two Persons & \citep{Traject_Twopersons}& $213$ & $(124,89)$ & $72$ & $5\,493$ & $1\,175$ & $2$\\
\hline
\end{tabular}
\end{adjustwidth}
\end{table}

The dataset ``Simulated C+B''~\citep{AghababaP23} was obtained from the dataset ``Cars+Bus'' \citep{Traject_Carbus}, to balance the number of input curves of the two classes by adding random noise to original input curves. For both the task is to distinguish between paths that are driven by car vs. bus. The ``Characters'' dataset~\citep{Traject_Characters} contains multiple hand-written characters, from which we picked three exemplary and challenging pairs ($u$,$w$), ($n$,$w$) and ($n$,$u$) to be distinguished. Datasets are denoted here as ``Char0uw'', ``Char1nw'' and ``Char2nu'', respectively. The ``Two Persons''~\citep{Traject_Twopersons} dataset contains trajectories walked by two different persons to be distinguished. As it was done in \citep{AghababaP23} for all datasets, we also removed stationary points along the trajectories, as this does not change the shape of a polygonal curve.

\cref{tab:binary:classification:phillips,tab:binary:classification:phillips2} show the classification performance results: the means and standard errors over $100$ independent repetitions. Additionally, we present the total running times for all experiments, and the ratio to the running time of the corresponding experiment using the DTW distance.

The best performing measure for each case is highlighted in $\textcolor{blue}{\text{blue}}$, and the worst in $\textcolor{red}{\text{red}}$. One can see that in the majority of cases, $k$-DTW either still performs $\textcolor{blue}{\text{best}}$, or $\textcolor{orange}{\text{close to the best}}$. The only exception is ``Char0uw'' where ERP excels and beats \textbf{all} others by a large margin. $k$-DTW is still second best in these cases. Notably, \textbf{all} competitors have some worst cases, while $k$-DTW is \textbf{never} worst.

\begin{table}
\caption{Mean classification performance measures (AUC, accuracy, $F_1$-score) taken over $100$ independent repetitions of 6-fold cross validation and standard errors ``mean (std.err.)''. 
\textcolor{blue}{Best} and \textcolor{red}{worst} values are highlighted in blue/red colors. Note that in half of the combinations of dataset/performance measure, some $k$-DTW variant is \textcolor{blue}{best}, while \textcolor{red}{worst} results are mostly attained using standard DTW or Fr\'echet. At least one $k$-DTW variant always yields \textcolor{blue}{best} or \textcolor{orange}{better while close to the best} results compared to either standard DTW or Fr\'echet. Datasets taken from \citep{AghababaP23}.
{Running time (\textit{T}) given in minutes.}
}
\label{tab:binary:classification:phillips}
\vskip 0.1in
\begin{adjustwidth}{-1.5cm}{-1.5cm}
\begin{center}
\begin{tabular}{clcccrr}
\textbf{Data}&\textbf{Distance}&\textbf{AUC}&\textbf{Accuracy}&\textbf{$F_1$-Score}&\textbf{$T$}(\textrm{min})
&\textbf{$T$/$T_{DTW}$}\\
\hline\hline
\multirow{10}{*}{
\rotatebox{90}{
Cars+Bus}
}&Fr\'echet&\textcolor{red}{0.51511 (0.00390)}&0.53158 (0.00355)&0.64938 (0.00299) & 1.1 & 1.0\\
    &$\ln(m)$-DTW&0.54830 (0.00383)&0.56192 (0.00308)&\textcolor{orange}{0.67068 (0.00252)} & 5.3 & 5.0\\
    &$\sqrt{m}$-DTW&0.52259 (0.00369)&0.54058 (0.00280)&0.64863 (0.00261) & 5.7 & 5.4\\
    &$m/10$-DTW&0.53635 (0.00342)&\textcolor{red}{0.53008 (0.00323)}&{0.62415 (0.00301)} & 8.1 & 7.6\\
    &$m/4$-DTW&\textcolor{orange}{0.57633 (0.00329)}&\textcolor{orange}{0.56250 (0.00266)}&0.64711 (0.00264) & 16.5 & 15.5\\
    &DTW&0.57590 (0.00321)&0.55417 (0.00273)&0.63637 (0.00269) & 1.1 & 1.0\\
    & EditRealPenalty & 0.56580 (0.00324) & 0.54900 (0.00258) & \textcolor{red}{0.62282 (0.00269)} & 2.4 & 2.2 \\
    & PartSegmDTW & 0.54492 (0.00358) & 0.58108 (0.00293) & 0.70430 (0.00236) & 0.2 & 0.1 \\
    & PartWinDTW & \textcolor{blue}{0.58362 (0.00383)} & \textcolor{blue}{0.60033 (0.00314)} & \textcolor{blue}{0.71338 (0.00258)} & 0.8 & 0.7 \\
    & WeakFr\'echet & {0.51607 (0.00396)} & {0.53208 (0.00346)} & {0.64876 (0.00298)} & 1.7 & 1.6 \\
\hline\hline
\multirow{10}{*}{
\rotatebox{90}{
Sim. C+B}
}&Fr\'echet&{0.88027 (0.00049)}&0.77717 (0.00075)&0.76358 (0.00091) & 25.0 & 1.1\\
    &$\ln(m)$-DTW&0.88748 (0.00046)&0.77291 (0.00074)&0.75119 (0.00092) & 61.6 & 2.8\\
    &$\sqrt{m}$-DTW&0.88966 (0.00045)&{0.76037 (0.00073)}&0.73046 (0.00098) & 69.5 & 3.1\\
    &$m/10$-DTW&0.89769 (0.00046)&0.76860 (0.00077)&{0.72968 (0.00093)} & 97.4 & 4.4\\
    &$m/4$-DTW&\textcolor{orange}{0.91712 (0.00041)}&\textcolor{blue}{0.79836 (0.00052)}&\textcolor{blue}{0.76566 (0.00076)} & 158.5 & 7.1\\
    &DTW&\textcolor{blue}{0.91829 (0.00038)}&0.79339 (0.00076)&0.75607 (0.00110) & 22.3 & 1.0\\
    & EditRealPenalty & {0.90888 (0.00041)} & {0.79709 (0.00059)} & {0.76120 (0.00088)} & 50.6 & 2.3 \\
    & PartSegmDTW & \textcolor{red}{0.80593 (0.00088)} & \textcolor{red}{0.58766 (0.00071)} & \textcolor{red}{0.31972 (0.00168)} & 3.6 & 0.2 \\
    & PartWinDTW & {0.83221 (0.00068)} & {0.63397 (0.00092)} & {0.46484 (0.00178)} & 14.8 & 0.7 \\
    & WeakFr\'echet & 0.87942 (0.00047) & 0.77073 (0.00071) & 0.75483 (0.00089) & 37.6 & 1.7 \\
\hline\hline 
\multirow{10}{*}{
\rotatebox{90}{
Char0uw}
}&Fr\'echet&{0.98094 (0.00029)}&{0.93426 (0.00053)}&{0.93568 (0.00056)} & 0.9 & 0.9\\
    &$\ln(m)$-DTW&0.98374 (0.00025)&\textcolor{orange}{0.93341 (0.00061)}&\textcolor{orange}{0.93553 (0.00061)} & 12.5 & 11.9\\
    &$\sqrt{m}$-DTW&0.98434 (0.00026)&0.93139 (0.00064)&0.93426 (0.00063) & 14.9 & 14.1\\
    &$m/10$-DTW&0.98433 (0.00028)&0.92863 (0.00050)&0.93181 (0.00052) & 15.8 & 15.0\\
    &$m/4$-DTW&\textcolor{orange}{0.98474 (0.00024)}&0.92652 (0.00058)&0.93020 (0.00058) & 21.1 & 20.1\\
    &DTW&{0.98625 (0.00028)}&{0.91586 (0.00066)}&{0.92050 (0.00064)} & 1.1 & 1.0\\
    & EditRealPenalty & \textcolor{blue}{0.99819 (0.00009)} & \textcolor{blue}{0.97906 (0.00025)} & \textcolor{blue}{0.97930 (0.00029)} & 2.3 & 2.2 \\
    & PartSegmDTW & 0.99712 (0.00016) & 0.97793 (0.00029) & 0.97840 (0.00033) & 0.1 & 0.1 \\
    & PartWinDTW & 0.98622 (0.00027) & \textcolor{red}{0.91445 (0.00059)} & \textcolor{red}{0.91930 (0.00059)} & 0.7 & 0.6 \\
    & WeakFr\'echet & \textcolor{red}{0.97652 (0.00033)} & 0.92185 (0.00072) &	0.92257	(0.00075) & 1.2 & 1.1 \\
\hline\hline
\end{tabular}
\end{center}
\end{adjustwidth}
\end{table}

\begin{table}
\caption{Mean classification performance measures (AUC, accuracy, $F_1$-score) taken over $100$ independent repetitions of 6-fold cross validation and standard errors ``mean (std.err.)''. 
\textcolor{blue}{Best} and \textcolor{red}{worst} values are highlighted in blue/red colors. Note that in half of the combinations of dataset/performance measure, some $k$-DTW variant is \textcolor{blue}{best}, while \textcolor{red}{worst} results are mostly attained using standard DTW or Fr\'echet. At least one $k$-DTW variant always yields \textcolor{blue}{best} or \textcolor{orange}{better while close to the best} results compared to either standard DTW or Fr\'echet. Datasets taken from \citep{AghababaP23}.
{Running time (\textit{T}) given in minutes.}
}
\label{tab:binary:classification:phillips2}
\vskip 0.1in
\begin{adjustwidth}{-1.5cm}{-1.5cm}
\begin{center}
\begin{tabular}{clcccrr}
\textbf{Data}&\textbf{Distance}&\textbf{AUC}&\textbf{Accuracy}&\textbf{$F_1$-Score}&\textbf{$T$}(\textrm{min})
&\textbf{$T$/$T_{DTW}$}\\
\hline\hline
\multirow{10}{*}{
\rotatebox{90}{
Char1nw}
}&Fr\'echet&\textcolor{red}{0.91701 (0.00080)}&\textcolor{red}{0.83265 (0.00130)}&\textcolor{red}{0.82623 (0.00150)} & 3.3 & 1.1\\
    &$\ln(m)$-DTW&0.92853 (0.00063)&0.84517 (0.00114)&0.84209 (0.00120) & 22.5 & 7.4\\
    &$\sqrt{m}$-DTW&0.94034 (0.00057)&0.86560 (0.00107)&0.86518 (0.00113) & 30.4 & 10.0\\
    &$m/10$-DTW&0.94180 (0.00054)&0.86354 (0.00125)&0.86334 (0.00121) & 39.0 & 12.9\\
    &$m/4$-DTW&\textcolor{orange}{0.95289 (0.00045)}&\textcolor{orange}{0.88475 (0.00111)}&\textcolor{blue}{0.88545 (0.00113)} & 45.1 & 14.9\\
    &DTW&0.95115 (0.00047)&0.87420 (0.00113)&0.87720 (0.00105) & 3.0 & 1.0\\
    & EditRealPenalty & {0.95442 (0.00048)} & 0.86093 (0.00100) & 0.85042 (0.00121) & 6.9 & 2.3 \\
    & PartSegmDTW & \textcolor{blue}{0.95931 (0.00042)} & \textcolor{blue}{0.88577 (0.00089)} & 0.87933 (0.00099) & 0.4 & 0.1 \\
    & PartWinDTW & 0.94604 (0.00051) & {0.87546 (0.00109)} & {0.87802 (0.00107)} & 2.0 & 0.7 \\
    & WeakFr\'echet & {0.91767 (0.00083)} & {0.83866 (0.00139)} & {0.82806 (0.00164)} & 3.8 & 1.2 \\
\hline\hline
\multirow{10}{*}{
\rotatebox{90}{
Char2nu}
}&Fr\'echet&{0.98326 (0.00026)}&{0.93725 (0.00074)}&{0.93259 (0.00084)} & 4.7 & 0.9\\
    &$\ln(m)$-DTW&0.98517 (0.00025)&0.94303 (0.00062)&0.93888 (0.00069) & 23.2 & 4.3\\
    &$\sqrt{m}$-DTW&0.98615 (0.00023)&0.94415 (0.00049)&0.94067 (0.00057) & 30.0 & 5.5\\
    &$m/10$-DTW&0.98816 (0.00018)&0.94228 (0.00053)&0.93861 (0.00063) & 38.9 & 7.2\\
    &$m/4$-DTW&\textcolor{orange}{0.98949 (0.00016)}&\textcolor{orange}{0.94757 (0.00048)}&\textcolor{orange}{0.94475 (0.00057)} & 44.6 & 8.2\\
    &DTW&{0.99376 (0.00019)}&{0.95534 (0.00045)}&{0.95348 (0.00048)} & 5.4 & 1.0\\
    & EditRealPenalty & \textcolor{blue}{0.99830 (0.00007)}  & \textcolor{blue}{0.96695 (0.00030)}  & \textcolor{blue}{0.96573 (0.00035)} & 12.3 & 2.3 \\
    & PartSegmDTW & 0.99684 (0.00011) & 0.96116 (0.00038) & 0.95963 (0.00043) & 0.7 & 0.1 \\
    & PartWinDTW & {0.99355 (0.00019)} & {0.95519 (0.00045)} & {0.95334 (0.00049)} & 3.5 & 0.6 \\
    & WeakFr\'echet & \textcolor{red}{0.98281 (0.00025)} & \textcolor{red}{0.93364 (0.00066)} & \textcolor{red}{0.92836 (0.00075)} & 6.5 & 1.2 \\
\hline\hline
\multirow{10}{*}{
\rotatebox{90}{Two Persons}
}&Fr\'echet&{0.95373 (0.00058)} & 0.94832 (0.00002) & 0.95324 (0.00021) & 57.3 & 0.9 \\
    &$\ln(m)$-DTW&0.95630 (0.00059) & \textcolor{blue}{0.94832 (0.00002)} & \textcolor{blue}{0.95324 (0.00021)} & 655.2 & 10.4 \\
    &$\sqrt{m}$-DTW&0.95411 (0.00057) & 0.94832 (0.00002) & 0.95324 (0.00021) & 962.6 & 15.3 \\
    &$m/10$-DTW& {0.96191 (0.00053)} & {0.94832 (0.00002)} & 0.95324 (0.00021) &  1570.8 & 25.0 \\
    &$m/4$-DTW& \textcolor{blue}{0.96426 (0.00055)} & {0.94390 (0.00012)} & {0.94939 (0.00025)} & 2857.9 & 45.6 \\
    &DTW&0.96115 (0.00053) & 0.94832 (0.00002) & 0.95324 (0.00021) & 62.7 & 1.0 \\   
    & EditRealPenalty & \textcolor{red}{0.85989 (0.00115)} & \textcolor{red}{0.77931 (0.00004)} & \textcolor{red}{0.76642 (0.00051)} & 130.5 & 2.1 \\
    & PartSegmDTW & 0.95432 (0.00060) & 0.90336 (0.00045) & 0.90999 (0.00051) & 6.1 & 0.1 \\
    &PartWinDTW & {0.94616 (0.00052)} & {0.91961 (0.00026)} &  {0.92637 (0.00037)} & 35.0 & 0.6 \\
    & WeakFr\'echet & 0.95376 (0.00056) & 0.94832 (0.00002) & 0.95324 (0.00021) & 92.7 & 1.5 \\
\hline\hline
\end{tabular}
\end{center}
\end{adjustwidth}
\end{table}

\end{document}